\documentclass[12pt]{article}
\usepackage{a41}
\usepackage{epsfig}
\usepackage{cite}
\usepackage{amsmath}
\usepackage{amssymb}
\usepackage{amsthm} 
\usepackage{float}
\usepackage{verbatim}
\usepackage{amsxtra}
\usepackage{afterpage,color}

\newtheorem{example}{Example} 
 
\newcommand{\SigmaP}{\textsf{Sigma}}

\newcommand{\N}{\nonumber} 
\newcommand{\ep}{\varepsilon}


\newcommand{\bea}{\begin{eqnarray}}
\newcommand{\bq}{\begin{equation}}
\newcommand{\eea}{\end{eqnarray}}
\newcommand{\eq}{\end{equation}}

\newcommand\SH{\,\mbox{$\sqcup \! \sqcup$}\,}

\newcommand\be{\begin{eqnarray}}
\newcommand\ee{\end{eqnarray}}

\newcommand\sign{{\rm sign}}
\newcommand\Li{{\rm Li}}
\newcommand\HA{{\rm H}}

\newcommand\Mvec{\,\mbox{\bf M}}

\renewcommand{\HA}{\ensuremath{\mathrm{H}^*}}
\DeclareMathOperator{\arccosh}{arccosh}
\DeclareMathOperator{\arctanh}{arctanh}
\DeclareMathOperator{\arccoth}{arccoth}
\DeclareMathOperator{\arcsech}{arcsech}
\DeclareMathOperator{\arcsinh}{arcsinh}

\DeclareMathOperator{\arccot}{arccot}
\newtheorem{theorem}{Theorem}
\newtheorem{lemma}[theorem]{Lemma}

\newtheorem{prop}[theorem]{Proposition}
\newtheorem{cor}[theorem]{Corollary}

\numberwithin{equation}{section}

\allowdisplaybreaks[4]

\newcommand{\GL}[2]{\textnormal{GL}\left(#1;#2\right)}
\newcommand{\ie}{i.e.,\ }

\newcounter{mmacnt}
\def\restartmma{\setcounter{mmacnt}{0}}
\restartmma \catcode`|=\active
\def|#1|{\mathrm{#1}}
\catcode`|=12
\newenvironment{mma}{
 \par\smallskip
 \catcode`|=\active
 \parskip=0pt\parindent=0pt 
 \small
 \def\In##1\\{%
   \def\linebreak{\hfill\break\null\qquad}%
   \refstepcounter{mmacnt}
   \hangindent=2.5em\hangafter=0
   \leavevmode
   \llap{\tiny\sffamily In[\arabic{mmacnt}]:=\kern.5em}%
   \mathversion{bold}\footnotesize$\displaystyle##1$\normalsize
   \mathversion{normal}\par
 }%
 \def\Print##1\\{%
   \def\linebreak{\hfill\break}%
   \hangindent=2.5em\hangafter=0
   \leavevmode ##1\par}%
 \def\Out##1\\{%
   \def\linebreak{$\hfill\break\null\hfill$}%
   \kern\abovedisplayskip\par
   \hangindent=2.5em\hangafter=0
   \leavevmode
   \llap{\tiny\sffamily Out[\arabic{mmacnt}]=\kern.5em}
   \footnotesize$\displaystyle##1$\normalsize\hfill\null\par
   \kern\belowdisplayskip
 }%
 \def\Warning##1##2\\{%
   \def\linebreak{\hfill\break}%
   \hangindent=2.5em\hangafter=0
   \leavevmode
   {\scriptsize##1 : ##2}\par}%
}{%
 \par\smallskip
}


\renewcommand{\SigmaP}{\texttt{Sigma}}

\begin{document}
\noindent
\sloppy
\thispagestyle{empty}
\begin{flushleft}
DESY 14-021  \hfill  
\\
DO-TH-13/22
\\
SFB/CPP-14-35
\\
LPN 14-082
\\
June 2014
\end{flushleft}
%
\vspace*{\fill}
\hspace{-3mm}
{\begin{center}
{\LARGE \bf Iterated Binomial Sums and their}

\vspace*{3mm}
{\LARGE \bf Associated Iterated Integrals}
\end{center}
}

\begin{center}
\vspace{2cm}
\large
J.~Ablinger$^a$, J.~Bl\"umlein$^b$, C.G.~Raab$^b$, and C.~Schneider$^a$

\vspace{5mm}
\normalsize {\itshape $^a$~Research
Institute for Symbolic Computation (RISC),\\ Johannes Kepler
University, Altenbergerstra\ss{}e 69, A-4040 Linz, Austria}
\\ 

\vspace{5mm}
\normalsize
{\itshape $^b$~Deutsches Elektronen--Synchrotron, DESY,\\
Platanenallee 6, D--15738 Zeuthen, Germany}
\\

\end{center}

\vspace*{\fill} %
\begin{abstract}
\noindent
We consider finite iterated generalized harmonic sums weighted by the binomial $\binom{2k}{k}$ 
in numerators and denominators. A large class of these functions emerges in the calculation 
of massive Feynman diagrams with local operator insertions starting at 3-loop order in the 
coupling constant and extends the classes of the nested harmonic, generalized harmonic and 
cyclotomic sums. The binomially weighted sums are associated by the Mellin transform to iterated 
integrals over square-root valued alphabets. The values of the sums for $N \rightarrow \infty$ 
and the iterated integrals at $x=1$ lead to new constants, extending the set of special numbers 
given by the multiple zeta values, the cyclotomic zeta values and special constants which emerge 
in the limit $N \rightarrow \infty$ of generalized harmonic sums. We develop algorithms to obtain
the Mellin representations of these sums in a systematic way. They are of importance for the
derivation of the asymptotic expansion of these sums and their analytic continuation to $N \in 
\mathbb{C}$. The associated convolution relations are derived for real parameters and can 
therefore be used in a wider context, as e.g. for multi-scale processes. We also derive algorithms 
to transform iterated integrals over root-valued alphabets into binomial sums. Using generating 
functions we study a few aspects of infinite (inverse) binomial sums. 
\end{abstract}

\vspace*{\fill} 

\newpage
\section{Introduction}
\label{sec:1}

\vspace{1mm}
\noindent
In loop calculations of the different observables in renormalizable quantum field theories \cite{QED,Glashow:1961tr,
Weinberg:1967tq,'tHooft:1971fh,'tHooft:1971rn,'tHooft:1972fi,'tHooft:1978xw,Gross:1973id,Politzer:1973fx,Fritzsch:1973pi,
'tHooft:1973pz,DIAG:BOOK} a sequence of number- and function spaces arises, growing with the complexity of the process, 
which is measured by its loop order, the number of Lorentz invariants involved, and the number of legs, cf. \cite{Ablinger:2013jta,
Ablinger:2013eba}. In the simplest cases the observables can be expressed by special constants, the multiple zeta values 
\cite{Borwein:1999js,Blumlein:2009cf}, 
cf. e.g.~\cite{Laporta:1996mq,vanRitbergen:1997va,Czakon:2004bu,Baikov:2012er}. 
In case of single differential distributions harmonic sums \cite{Vermaseren:1998uu,Blumlein:1998if} or harmonic polylogarithms 
\cite{Remiddi:1999ew} arise. They are generalized by the cyclotomic harmonic sums, polylogarithms and special numbers 
\cite{Ablinger:2011te} on the one hand and the generalized harmonic sums, polylogarithms, and special numbers 
\cite{Moch:2001zr,Ablinger:2013cf} on the other hand. Both classes can be united into the generalized cyclotomic harmonic sums, their polylogarithms 
and special numbers \cite{Ablinger:2013cf}. These function spaces form quasi shuffle or shuffle algebras \cite{Reutenauer1993,
Hoffman,Blumlein:2003gb}, for which basis representations can be derived. Furthermore, they obey structural relations 
\cite{Blumlein:2009ta,Blumlein:2009fz,Ablinger:2011te,Ablinger:2013cf} leading to a further reduction of the bases. These 
function spaces apply to massless 3-loop calculations such as the massless Wilson coefficients in deep-inelastic scattering 
\cite{Vermaseren:2005qc} and a wider class of the single mass 3-loop graphs such as heavy flavor Wilson coefficients in 
deep-inelastic scattering at large virtualities \cite{Ablinger:2010ty,Blumlein:2012vq,Ablinger:2012qm,Ablinger:2014lka,
Ablinger:2014vwa,Ablinger14}. Up to 2-loop order only harmonic sums and polylogarithms appear in the representation 
\cite{Buza:1995ie,Buza:1996xr,Buza:1996wv,Buza:1997mg,Bierenbaum:2007qe,Bierenbaum:2007pn,Bierenbaum:2008yu,
Bierenbaum:2009zt,Blumlein:2014fqa,Behring:2014eya}.

In the massive case at 3-loop order new structures arise. These are the finite nested binomial sums.
They occur for some of the graphs with two fermion line of equal mass \cite{Ablinger:2014uka}, but also in case of 
a single massive fermion line carrying 4-leg local operators \cite{Ablinger:2014yaa}. Furthermore, they are known 
from massive 2-loop calculations of $2 \rightarrow 2$ scattering, as the Bhabha-process \cite{Aglietti:2004tq} and 
from Ref.~\cite{Fleischer:1998nb}. These functions extend the above classes. It is worthwhile to mention that this type of 
sums also arises in number theory and combinatorics as illustrated, e.g., in~\cite{Borwein:2001,Schneider:09a,SIG2}.

We consider finite sums of the form
\begin{equation}
\sum_{i_1=1}^Na_1(i_1)\sum_{i_2=1}^{i_1}a_2(i_2)\dots\sum_{i_k=1}^{i_{k-1}}a_k(i_k),
\label{eq:SUM1}
\end{equation}
where the summands are of the form
\begin{equation}
a_j(N; b,c,m)=\binom{2N}{N}^b\frac{c^N}{N^m},
\end{equation}
with 
\begin{eqnarray}
b \in \{-1,0,1\},~~~c \in \mathbb{R}\setminus\{0\},~~~m \in \mathbb{N}. 
\end{eqnarray}
Later, in concrete examples, we will consider values 
\begin{eqnarray}
c \in \{\pm 2^k \mid k \in \mathbb{Z}\}.
\end{eqnarray}
We also treat some examples with a slightly more general structure, e.g.
\begin{equation}
 \frac{c^N}{\displaystyle(2N+1)\binom{2N}{N}}.
\label{eq:SUM:3}
\end{equation}
Infinite binomial sums have been studied in
Refs.~\cite{Ogreid:1997bx,Fleischer:1998nb,Davydychev:2000na,Kalmykov:2000qe,Borwein:2001,Davydychev:2003mv,Weinzierl:2004bn,
Kalmykov:2007dk}.

The weighted binomial sums considered in the present paper can be expressed by a Mellin transform 
\cite{Mellin:1886,Mellin:1902} of iterated integrals \cite{Kummer40a,Poincare1884} containing root-valued letters. The analytic 
Mellin-inversion of these sums is important to know since for a wide range of physical applications
the corresponding observables are measured in $x$-space. Moreover, one may also perform the
Mellin inversion by a numerical contour integral around the singularities of the given problem 
in the complex plane, which requires the analytic continuation of the sums considered from the even or
the odd integer values $N$ to the complex plane. This is usually obtained by considering
the asymptotic expansion of theses sums for $|N| \rightarrow \infty$ outside the singularities of the problem 
together with the shift relations of the corresponding sums \cite{Blumlein:2000hw,Blumlein:2005jg,Blumlein:2009ta,
Blumlein:2009fz}. As will be shown below, the asymptotic expansion is derived easiest referring to the 
Mellin-representation of the sums. 

The paper is organized as follows. In Section~\ref{sec:2} we summarize the main properties of the Mellin transform,
as it is extensively used subsequently. The building blocks of the iterated integrals associated to the nested binomial 
sums Eq.~(\ref{eq:SUM1}) are summarized in Section~\ref{sec:3}. There we give an outline on the structure of the
alphabet of the iterated integrals and representations of the weight {\sf w=1} integrals. The Mellin representations can
be built using convolution integrals. In Section~\ref{sec:4} we prove a series of lemmata and theorems allowing to express 
the 
corresponding convolutions in terms of iterated integrals. In Section~\ref{sec:5} the Mellin representations for nested 
finite binomial sums are presented.
The Mellin transform of $D$-finite functions is discussed in Section~\ref{sec:6}. Here we present different algorithms 
to transform root-valued iterated integrals into the corresponding nested sums. In Section~\ref{sec:7} some aspects of
infinite (inverse) binomial sums are dealt with using the framework of generating functions and Section~\ref{sec:8} contains 
the conclusions. In an Appendix we summarize special constants related to the sums and iterated integrals studied. In a series
of cases only a suitable integral representation over special functions could be derived, which allow for a precise numerical
representation.  
\section{The Mellin Transform}
\label{sec:2}

\vspace*{1mm}
\noindent
We briefly recall the definition and basic properties of the Mellin transform as it will play a crucial role 
in the integral representations which will be derived for the nested (inverse)
binomial sums. Subsequently, we denote the Mellin transform by
\begin{equation}\label{eq:MellinDef}
\Mvec[f(x)](N) = \int_0^1 dx x^N f(x)~.
\end{equation}
Obviously it inherits the linearity from the integral and shifts in $N$ correspond to multiplication by 
powers of $x$, i.e.,
\begin{equation}\label{eq:MellinShifts}
\Mvec[f(x)](N+k) = \Mvec[x^kf(x)](N)~.
\end{equation}
As a consequence we have the following summation formula
\begin{equation}\label{eq:MellinSummation}
\sum_{i=1}^Nc^i\Mvec[f(x)](i) = c^N\Mvec\left[\frac{x}{x-\frac{1}{c}}f(x)\right](N)
-\Mvec\left[\frac{x}{x-\frac{1}{c}}f(x)\right](0)~.
\end{equation}
Furthermore, the following properties are immediate, where $a>0$:
\begin{eqnarray}
  \Mvec\left[\ln(x)^m f(x)\right](N)&=&\frac{d^m}{dN^m}\Mvec[f(x)](N),\\
  \Mvec[f(a x)](N)&=& \frac{1}{a^{N+1}} \Mvec[f(x) \theta(a-x)](N),~~a \leq 1,\\
  \Mvec[f(x^a)](N)&=&\frac{1}{a}\Mvec[f(x)]\left(\frac{N+1-a}{a}\right),\\
  \Mvec\left[f^{(p)}(x)\right](N)&=&\frac{(-1)^p N!}{(N-p)!}\Mvec[f(x)](N-p)+\sum_{i=0}^{p-1}
\frac{(-1)^i N!}{(N-i)!}f^{(p-1-i)}(1)~,
\nonumber\\  && \text{with}~f(x) \in {\cal C}^{(p)}[0,1]~,
\end{eqnarray}
where $\theta(x)$ denotes the Heaviside function
\begin{equation}\label{eq:HEAVYS}
\theta(x) = \Biggl\{\begin{array}{c}{1~~~\text{for}~~~x \geq 0}\\
                                     0~~~\text{for}~~~x < 0 \end{array}~.
\end{equation}
The Mellin-convolution of two real functions with support $[0,1]$ is defined by
\begin{equation}\label{eq:MellinMult}
f(x) \ast g(x) = \int_0^1 dx_1 \int_0^1 dx_2 \delta(x-x_1 x_2) f(x_1) g(x_2)~.
\end{equation}
The Mellin transform obeys the relation
\begin{equation}\label{eq:MellinMult}
\Mvec[f(x) \ast g(x)](N) = \Mvec[f(x)](N) \cdot \Mvec[g(x)](N)~. 
\end{equation}
We define the Mellin transformation for functions with $+$-prescription by 
\begin{equation}\label{eq:MellinMult}
\Mvec[[f(x)]_+](N) = \int_0^1 dx (x^N-1) f(x)~.
\end{equation}
Depending on the regularity of $f$ and $g$ we have the following formulae:
\begin{eqnarray}
f(x) \ast g(x) &=& \int_x^1 \frac{dy}{y} f(y) g\left(\frac{x}{y}\right)\label{eq:MellinConv}
\\
\left[f(x)\right]_+ \ast g(x) 
&=& \int_x^1 dy f\left(\frac{x}{y}\right)\left[\frac{1}{y} g(y) - \frac{x}{y^2} g(x) \right] - g(x) \int_0^x dy f(y)
\nonumber\\
&=& \int_x^1 dy f(y)\left[\frac{1}{y} g\left(\frac{x}{y}\right)- g(x)\right] - g(x) \int_0^x dy f(y)~.
\end{eqnarray}
In the presence of singularities or branch cuts in the interval from 0 to 1 we define the Mellin transform by
\begin{equation}\label{eq:modifiedMellin}
 \Mvec[f(x)](N) := \lim_{\varepsilon\to0}\int_0^1 dx x^N \frac{f(x+i\varepsilon)+f(x-i\varepsilon)}{2}~.
\end{equation}

We conclude this Section by one of the main applications of the results
worked out in this article.
In recent 3-loop calculations we have to derive the asymptotic expansion of a
special class of the binomial sums, cf.~\cite{Ablinger:2012qm,Ablinger:2014yaa}.
Here it is instrumental to obtain first the analytic continuation of the nested
sums from even or odd integer values $N$ to $N \in \mathbb{C}$. 
Suppose we are given such a representation (in form of integral
representations of the occurring sums) by the toolbox presented in this article.
Then we follow Refs.~\cite{Nielsen1906b,Landau1906} and obtain the
asymptotic expansion using
factorial series as follows. By the change of the variable $x=e^{-z}$ we have
\begin{equation}
\Mvec[f(x)](N) = \int_0^\infty dz e^{-Nz}f(e^{-z})e^{-z}~.
\end{equation}
Now, we expand $f(e^{-z})e^{-z}$ at $z=0$ and integrate term-wise to obtain the asymptotic expansion for $N \to \infty$ 
using
\begin{equation}
\int_0^\infty dz e^{-Nz}z^c\ln(z)^k = \frac{\partial^k}{\partial{c}^k}\frac{\Gamma(c+1)}{N^{c+1}} 
= \sum_{i=0}^k(-1)^i\binom{k}{i}\Gamma^{(k-i)}(c+1)\frac{\ln(N)^i}{N^{c+1}}~,
\end{equation}
for $c>-1$ and $k \in \mathbb{N}$. In particular, assume that $f(x)$ is regular
at $x=1$ and has the expansion
\begin{equation}
f(x)=\sum_{i=0}^\infty c_i(1-x)^i
\end{equation}
as $x$ approaches $1$ from the left. Then the asymptotic expansion of its Mellin transform only involves integer powers of 
$N$ and a formula for the coefficients in the expansion can be given in terms of Stirling 
numbers of second kind ${\sf S}_2$~\cite{Nielsen1906b,Landau1906,STIRLING}:
\begin{equation}
\Mvec[f(x)](N) = \sum_{k=0}^\infty\frac{\sum\limits_{i=0}^k(-1)^{k+i}i!{\sf S}_2(k+1,i+1)c_i}{N^{k+1}}~.
\end{equation}
In addition, using also the shift relations of the nested sums for $N
\rightarrow N+1$ one may cover the whole
analyticity range of the sums for $N \in \mathbb{C}$. For the harmonic, cyclotomic, and generalized harmonic(cyclotomic)
sums this has been shown in Refs.~\cite{Blumlein:2009ta,Ablinger:2011te,Ablinger:2013cf}. 

\section{Iterated Integrals over Root-valued Alphabets}
\label{sec:3}

\vspace*{1mm}
\noindent
The general aim is to represent our nested sums (\ref{eq:SUM1}--\ref{eq:SUM:3})
in terms of Mellin transforms in the form
\begin{equation}\label{eq:IntRepForm}
c_0+\sum_{j=1}^k c_j^N\Mvec[f_j(x)](N),
\end{equation}
where the constants $c_j$ and functions $f_j(x)$ do not depend on $N$. This is achieved by virtue of the properties 
of the Mellin transform listed above. Due to the use of the summation property \eqref{eq:MellinSummation} the constant 
$c_0$ needs not be made explicit and it will be convenient to write \eqref{eq:IntRepForm} as
\begin{equation}\label{eq:IntRepForm2}
\sum_{j=1}^k \int_0^1dx \frac{(c_jx)^N-1}{x-\frac{1}{c_j}}g_j(x).
\end{equation}
As starting point we only need the following basic integral representations:
\begin{eqnarray}
\frac{1}{N} &=& \Mvec\left[\frac{1}{x}\right](N) \label{eq:RepresentInverse}
\\
\binom{2N}{N} &=& \frac{4^N}{\pi}\Mvec\left[\frac{1}{\sqrt{x(1-x)}}\right](N) \label{eq:RepresentBinomial}
\\
\frac{1}{N\displaystyle\binom{2N}{N}} &=&
\frac{1}{4^N}\Mvec\left[\frac{1}{x\sqrt{1-x}}\right](N).
\label{eq:RepresentInverseBinomial}
\end{eqnarray}
From these we can obtain integral representations for sums and nested sums step by step. In general 
the computation proceeds as follows. Starting from the innermost sum we move outwards maintaining an 
integral representation of the sub-expressions visited so far. For each intermediate sum
\begin{equation}\label{eq:AlgorithmIntermediate}
\sum_{i_j=1}^Na_j(i_j)\sum_{i_{j+1}=1}^{i_j}a_{j+1}(i_{j+1})\dots\sum_{i_k=1}^{i_{k-1}}a_k(i_k)
\end{equation}
this first involves setting up an integral representation for the building block $a_j(N)$ of the form 
\eqref{eq:IntRepForm}. 
This may require the computation of Mellin convolutions, which we will describe in more detail below. Next we obtain an 
integral 
representation of the same form of
\begin{equation}
a_j(N)\sum_{i_{j+1}=1}^Na_{j+1}(i_{j+1})\dots\sum_{i_k=1}^{i_{k-1}}a_k(i_k)
\end{equation}
by Mellin convolution with the result for the inner sums computed so far. Then by the summation property 
\eqref{eq:MellinSummation} we obtain an integral representation for the sum \eqref{eq:AlgorithmIntermediate}. 
These steps are repeated until the outermost sum has been processed. For a detailed example of this construction process we 
refer to Example~1 below.

Let $x \in [0,1]$ and the functions $a_i(x) \in \mathbb{R}$
be integrable on $]0,1]$. They are called letters forming the words $\{a_{i_1}, \ldots, a_{i_1}\}$. 
In analogy to harmonic polylogarithms we define the Poincar\'e-type iterated integral
\begin{eqnarray}
\HA_{\emptyset}(x)     &=&  1 \\
\HA_{\sf b,\vec{c}}(x) &=&  \int_x^1 dt b(t) \HA_{\sf \vec{c}}(t)~.
\end{eqnarray}
Note that the integration is over the interval $[x,1]$ in contrast to the harmonic polylogarithms $\mathrm{H}_{\sf 
\vec{a}}(x)$
and related iterated integrals Refs.~\cite{Remiddi:1999ew,Ablinger:2011te,Ablinger:2013cf}, where the
integration is over $[0,x]$. We use the star to make the notation unambiguous. Following this convention we obtain the identity
\begin{equation}
\Mvec[\HA_{\sf a,\vec{b}}(x)](N) = \frac{1}{N+1}\Mvec[xa(x)\HA_{\sf \vec{b}}(x)](N).
\end{equation}
By partial integration one shows validity of the relations:
\begin{eqnarray}
\HA_{\sf a}(x) \HA_{\sf \vec{b}}(x) = \HA_{\sf a, \vec{b}}(x)
                                    + \HA_{\sf b_1,a,b_2,...,b_n}(x) + \ldots
                                    + \HA_{\sf \vec{b}, a}(x)~.
\end{eqnarray}
\begin{proof}
For two letters one obtains
\begin{eqnarray}
\HA_{\sf a,b}(x) &=& \int_x^1 dy a(y) \int_y^1 dz b(z) \N\\ 
                     &=& \HA_{\sf a}(x) \HA_{\sf b}(x) - \int_x^1 dy \HA_{\sf a}(y) b(y) 
\N\\
                     &=& \HA_{\sf a}(x) \HA_{\sf b}(x) - \HA_{\sf b,a}(x)
\end{eqnarray}
and
\begin{eqnarray}
\HA_{\sf a_1}(x) \HA_{\sf a_2}(x) &=& \HA_{\sf a_1,a_2}(x) + \HA_{\sf a_2,a_1}(x)~.
\end{eqnarray}
For induction consider
\begin{eqnarray}
\HA_{\sf a,b,\vec{c}}(x) &=& \int_x^1 dy a(y) \HA_{\sf b,\vec{c}}(y) \N\\ 
                     &=& \HA_{\sf a}(x) \HA_{\sf b,\vec{c}}(x) - \int_x^1 dy 
b(y) \HA_{\sf a}(y) \HA_{\sf \vec{c}}(y).
\end{eqnarray}
\end{proof}
It is known, see \cite{Ree}, that iterated integrals satisfy the shuffle relations regardless of the specific 
form of the integrands. For the iterated integrals $\HA_{\sf \vec{a}}(x)$ they are defined by
\begin{eqnarray}
\HA_{\sf \vec{a}}(x) \cdot \HA_{\sf \vec{b}}(x) = \HA_{\sf \vec{a}}(x) \SH \HA_{\sf \vec{b}}(x) = 
\sum_k \HA_{\sf \vec{c}_k}(x),
\end{eqnarray}
where the sum runs over all shuffles of the words ${\sf \vec{a}}$ and ${\sf \vec{b}}$, i.e.
sequences $\sf \vec{c}_k$ out of the letters of $\sf \vec{a}$ and 
$\sf \vec{b}$ which preserve the order of the latter sets. 
Carefully choosing which specific integrands are included in the alphabet, one can 
ensure that there are no additional algebraic relations among the iterated integrals over this alphabet apart 
from the shuffle relations. We define an alphabet containing letters with root-singularities which has exactly 
this property: all algebraic relations among the iterated integrals over this alphabet are induced by the shuffle 
relations. This relies on a theorem proven in Ref.~\cite{DeneufchatelEtAl}, which gives a criterion on the 
linear independence of iterated integrals over a given alphabet. Since any polynomial expression in terms 
of iterated integrals can be reduced to a linear combination of iterated integrals over the same alphabet 
by shuffling, establishing linear independence implies that all algebraic relations among the iterated integrals 
are due to shuffling. Using results from Refs.~\cite{Risch,Trager79} the iterated integrals over the alphabet 
we define below can be proven to be linearly independent over the algebraic functions.
\begin{eqnarray}
 f_a(x)&:=&\frac{\sign(1-a-0)}{x-a},\\
 f_{\{a_1,\dots,a_k\}}(x)&:=&f_{a_1}(x)^{1/2}\dots f_{a_k}(x)^{1/2}\quad\quad k\ge2,\\
 f_{(a_0,\{a_1,\dots,a_k\})}(x)&:=&f_{a_0}(x)f_{a_1}(x)^{1/2}\dots f_{a_k}(x)^{1/2}\quad\quad k\ge1,\\
 f_{(\{a_1,\dots,a_k\},j)}(x)&:=&x^jf_{(a_1,\dots,a_k)}(x) \quad\quad j\in\{1,\dots,k-2\}.
\end{eqnarray}
Restricting to at most two root-singularities we are left with the following cases:
\begin{eqnarray}
 f_a(x)&:=&\frac{\sign(1-a-0)}{x-a},\\
 f_{(a,\{b\})}(x)&:=&f_{a}(x)\sqrt{f_{b}(x)},\\
 f_{\{a,b\}}(x)&:=&\sqrt{f_{a}(x)}\sqrt{f_{b}(x)},\\
 f_{(a,\{b,c\})}(x)&:=&f_{a}(x)\sqrt{f_{b}(x)}\sqrt{f_{c}(x)}.
\end{eqnarray}
Already in 2004 the following six letters with root-singularities were considered in the context 
of 2-loop integrals with massive propagators \cite{Aglietti:2004tq}, see also 
\cite{Bonciani:2010ms}:
\begin{equation}
 \frac{1}{\sqrt{x(4 \pm x)}} \quad\text{and}\quad \frac{1}{(1 \pm x)\sqrt{x(4 \pm x)}}.
\end{equation}
For a more compact notation we define the following letters, where we use the standard definition of 
$\sqrt{\ }$ with the branch cut on the negative real axis. Note that expressions like $\frac{1}{\sqrt{x(1+x)}}$ 
have extra branch cuts compared to $\frac{1}{\sqrt{x}\sqrt{1+x}}$, so we use the latter. However, the expressions 
$\frac{1}{\sqrt{x}\sqrt{1+x}}$ and $\sqrt{\frac{1}{x}}\sqrt{\frac{1}{1+x}}$ have the same domain of analyticity 
and agree on that domain, they differ only in their values on the cuts.

In the following we consider an alphabet given by the 37 letters
\begin{eqnarray}
f_{\sf 0}(x) &=& \frac{1}{x} 
\\
f_{\sf 1}(x) &=& \frac{1}{1-x} 
\\
f_{\sf -1}(x) &=& \frac{1}{1+x} 
\\
f_{\sf 2}(x) &=& \frac{1}{2-x} 
\\
f_{\sf -2}(x) &=& \frac{1}{2+x} 
\\
f_{\sf w_1}(x) &=& \frac{1}{\sqrt{x}\sqrt{1-x}} \equiv f_{\sf w_1}(1-x)
\\
f_{\sf w_2}(x) &=& \frac{1}{\sqrt{x}\sqrt{1+x}} 
\\
f_{\sf w_3}(x) &=& \frac{1}{x\sqrt{1-x}} 
\\
f_{\sf w_4}(x) &=& \frac{1}{x\sqrt{1+x}} \equiv -f_{\sf w_3}(-x)
\\
f_{\sf w_5}(x) &=& \frac{1}{\sqrt{1+x}\sqrt{2+x}} \equiv f_{\sf w_2}(x+1) 
\\
f_{\sf w_6}(x) &=& \frac{1}{\sqrt{1-x}\sqrt{2-x}} \equiv f_{\sf w_2}(1-x) 
\\
f_{\sf w_7}(x) &=& \frac{1}{\sqrt{1-x}\sqrt{2+x}} 
\\
f_{\sf w_8}(x) &=& \frac{1}{x\sqrt{x-\frac{1}{4}}}
\\
f_{\sf w_9}(x) &=& \frac{1}{(1-x)\sqrt{x}} \equiv f_{\sf w_3}(1-x)
\\
f_{\sf w_{10}}(x) &=& \frac{1}{x\sqrt{2-x}}
\\
f_{\sf w_{11}}(x) &=& \frac{1}{x\sqrt{1-x}\sqrt{2-x}}
\\
f_{\sf w_{12}}(x) &=& \frac{1}{\sqrt{x}\sqrt{8-x}}
\\
f_{\sf w_{13}}(x) &=& \frac{1}{(2-x)\sqrt{x}\sqrt{8-x}}
\\
f_{\sf w_{14}}(x) &=& \frac{1}{x\sqrt{x+\frac{1}{4}}}
\\
f_{\sf w_{15}}(x) &=& \frac{1}{x\sqrt{1+x}\sqrt{2+x}}
\\
f_{\sf w_{16}}(x) &=& \frac{1}{x\sqrt{2+x}}
\\
f_{\sf w_{17}}(x) &=& \frac{1}{\sqrt{x}\sqrt{8+x}}
\\
f_{\sf w_{18}}(x) &=& \frac{1}{(x+2)\sqrt{x}\sqrt{8+x}}
\\
f_{\sf w_{19}}(x) &=& \frac{1}{\sqrt{x}\sqrt{4-x}}
\\
f_{\sf w_{20}}(x) &=& \frac{1}{\sqrt{x}\sqrt{4+x}}
\\
f_{\sf w_{21}}(x) &=& \frac{1}{\sqrt{4+x}\sqrt{8+x}}
\\
f_{\sf w_{22}}(x) &=& \frac{1}{(2+x)\sqrt{x-\frac{1}{4}}}
\\
f_{\sf w_{23}}(x) &=& \frac{1}{(1+x)\sqrt{x}\sqrt{4+x}}
\\
f_{\sf w_{24}}(x) &=& \frac{1}{(2-x)\sqrt{x+\frac{1}{4}}}
\\
f_{\sf w_{25}}(x) &=& \frac{1}{\sqrt{4-x}\sqrt{8-x}}
\\
f_{\sf w_{26}}(x) &=& \frac{1}{(1-x)\sqrt{x}\sqrt{4-x}}
\\
f_{\sf w_{27}}(x) &=& \frac{1}{(x+\frac{1}{2})\sqrt{x}\sqrt{4-x}}
\\
f_{\sf w_{28}}(x) &=& \frac{1}{x\sqrt{x+\frac{1}{8}}}
\\
f_{\sf w_{29}}(x) &=& \frac{1}{(1-x)\sqrt{x-\frac{1}{4}}}
\\
f_{\sf w_{30}}(x) &=& \frac{1}{(1+x)\sqrt{x}\sqrt{8-x}}
\\
f_{\sf w_{31}}(x) &=& \frac{1}{(1+x)\sqrt{x+\frac{1}{4}}}
\\
f_{\sf w_{32}}(x) &=& \frac{1}{(1-x)\sqrt{x}\sqrt{8+x}}.
\end{eqnarray}
Their choice is motivated by a class of functions which emerged in calculating massive 3--loop
Feynman diagrams with local operator insertions in Ref.~\cite{Ablinger:2014yaa}. 
Many iterated integrals formed out of them can be divided into classes
exhibiting special convolution properties. Those classes are parameterized by general parameters, which in course
generalizes the above alphabet in other applications. These relations are proven in Section~\ref{sec:4}.

Since some letters have a non-integrable singularity at the base point $x=1$ we consistently define
\begin{equation}
 \HA_{\sf w}(x):=\int_x^1dt\left(f_{\sf w}(t)-\frac{c}{1-t}\right)+c\ln(1-x),
\end{equation}
where $c$ takes the unique value such that the integrand on the right hand side is integrable at $t=1$. 
It is important to note that this definition preserves the derivative $\frac{d}{dx}\HA_{\sf w}(x)=-f_{\sf w}(x)$. 
In general, for any word $\sf \vec{w}$ we set
\begin{equation}
 \HA_{\sf w,\vec{w}}(x):=\int_x^1dt\left(f_{\sf w}(t)\HA_{\sf \vec{w}}(t)
-\sum_{i=0}^kc_i\frac{\ln(1-t)^i}{1-t}\right)+\sum_{i=0}^k\frac{c_i}{i+1}\ln(1-x)^{i+1},
\end{equation}
where $k$ and $c_0,\dots,c_k$ are chosen to remove any non-integrable singularity. 
This corresponds to Hadamard's finite part \cite{HADAMARD,SCHWARTZ}. Again the result is unique 
and retains 
$\frac{d}{dx}\HA_{\sf w,\vec{w}}(x)=-f_{\sf w}(x)\HA_{\sf \vec{w}}(x)$. 

With this convention, functions of depth 1 can be expressed in terms of elementary functions as follows.
\begin{eqnarray}
\HA_{\sf 0}(x) &=& -\ln(x)
\\
\HA_{\sf 1}(x) &:=& \ln(1-x)
\\
\HA_{\sf -1}(x) &=& \ln(2) - \ln(1+x)
\\
\HA_{\sf -2}(x) &=& \ln(3) - \ln(2+x)
\\
\HA_{\sf w_1}(x) &=& \arccos(2x-1)
\\
\HA_{\sf w_2}(x) &=& -\arccosh(2x+1)-2\ln(\sqrt{2}-1)
\\
\HA_{\sf w_3}(x) &=& - \ln\left[\frac{1-\sqrt{1-x}}{1+\sqrt{1-x}}\right] 
= 2\arctanh(\sqrt{1-x}) \nonumber\\*
&=& 2\arcsech(\sqrt{x}) = \arccosh\left(\frac{2}{x}-1\right)
\\
\HA_{\sf w_4}(x) &=& 2\arccoth(\sqrt{1+x})+2\ln(\sqrt{2}-1)
\\
\HA_{\sf w_5}(x) &=& -\arccosh(2x+3)-2\ln(\sqrt{3}-\sqrt{2})
\\
\HA_{\sf w_6}(x) &=& \arccosh(3-2x)
\\
\HA_{\sf w_7}(x) &=& \arccos\left(\frac{2x+1}{3}\right)
\\
\HA_{\sf w_8}(x) &=& 4\left(\frac{\pi}{3}-\arctan(\sqrt{4x-1})\right)
\\
\HA_{\sf w_9}(x) &:=& 2\ln(2)-2\arctanh(\sqrt{x})
\\
\HA_{\sf w_{10}}(x) &=& \sqrt{2}\left(\arctanh\left(\sqrt{1-\tfrac{x}{2}}\right)+\ln(\sqrt{2}-1)\right)
\\
\HA_{\sf w_{11}}(x) &=& \frac{1}{\sqrt{2}}\arccosh\left(\frac{4-3x}{x}\right)
\\
\HA_{\sf w_{12}}(x) &=& \arccos\left(\frac{3}{4}\right)-\arccos\left(1-\frac{x}{4}\right)
\\
\HA_{\sf w_{13}}(x) &=& -\frac{1}{2\sqrt{3}}\left(\arccosh\left(\frac{x+4}{2(2-x)}\right)+\ln\left(\frac{5-\sqrt{21}}{2}\right)\right)
\\
\HA_{\sf w_{14}}(x) &=& 4\left(\arccoth(\sqrt{4x+1})+\ln\left(\frac{\sqrt{5}-1}{2}\right)\right)
\\
\HA_{\sf w_{15}}(x) &=& \frac{1}{\sqrt{2}}\arccosh\left(\frac{4+3x}{x}\right)+\sqrt{2}\ln(2-\sqrt{3})
\\
\HA_{\sf w_{16}}(x) &=& \sqrt{2}\left(\arccoth\left(\sqrt{1+\tfrac{x}{2}}\right)+\ln(\sqrt{3}-\sqrt{2})\right)
\\
\HA_{\sf w_{17}}(x) &=& \ln(2)-\arccosh\left(1+\frac{x}{4}\right)
\\
\HA_{\sf w_{18}}(x) &=& \frac{1}{\sqrt{3}}\left(\frac{\pi}{6}-\arctan\left(\sqrt{\frac{3x}{8+x}}\right)\right)
\\
\HA_{\sf w_{19}}(x) &=& \arccos\left(\frac{x}{2}-1\right)-\frac{2\pi}{3}
\\
\HA_{\sf w_{20}}(x) &=& -\arccosh\left(\frac{x}{2}+1\right)-2\ln\left(\frac{\sqrt{5}-1}{2}\right)
\\
\HA_{\sf w_{21}}(x) &=& -\arccosh\left(\frac{x}{2}+3\right)-4\ln\left(\frac{\sqrt{5}-1}{2}\right)
\\
\HA_{\sf w_{22}}(x) &=& \frac{2\pi}{9}-\frac{4}{3}\arctan\left(\frac{2}{3}\sqrt{x-\frac{1}{4}}\right)
\\
\HA_{\sf w_{23}}(x) &=& \frac{2}{\sqrt{3}}\left(\arctan\left(\sqrt{\frac{3}{5}}\right)-\arctan\left(\sqrt{\frac{3x}{4+x}}\right)\right)
\\
\HA_{\sf w_{24}}(x) &=& -\frac{4}{3}\left(\arctanh\left(\frac{2}{3}\sqrt{1+\frac{x}{4}}\right)+2\ln\left(\frac{\sqrt{5}-1}{2}\right)\right)
\\
\HA_{\sf w_{25}}(x) &=& \arccosh\left(3-\frac{x}{2}\right)+\ln\left(\frac{5-\sqrt{21}}{2}\right)
\\
\HA_{\sf w_{26}}(x) &:=& \frac{1}{\sqrt{3}}\left(\ln(3)-\arccosh\left(\frac{2+x}{2(1-x)}\right)\right)
\\
\HA_{\sf w_{27}}(x) &=& -\frac{4}{3}\left(\arctan\left(3\sqrt{\frac{x}{4-x}}\right)-\frac{\pi}{3}\right)
\\
\HA_{\sf w_{28}}(x) &=& 4\sqrt{2}\left(\arccoth\left(\sqrt{8x+1}\right)-\frac{\ln(2)}{2}\right)
\\
\HA_{\sf w_{29}}(x) &:=& \frac{2}{\sqrt{3}}\left(\ln(3)-\arccosh\left(\frac{2x+1}{2(1-x)}\right)\right)
\\
\HA_{\sf w_{30}}(x) &=& \frac{1}{3}\left(\arccos\left(-\frac{1}{8}\right)-\arccos\left(\frac{4-5x}{4(1+x)}\right)\right)
\\
\HA_{\sf w_{31}}(x) &=& \frac{2}{\sqrt{3}}\left(\arccos\left(-\frac{1}{4}\right)-\arccos\left(\frac{1-2x}{2(1+x)}\right)\right)
\\
\HA_{\sf w_{32}}(x) &:=& \frac{1}{3}\left(2\ln(3)-\ln(2)-\arccosh\left(\frac{5x+4}{4(1-x)}\right)\right).
\end{eqnarray}
Already at weight {\sf w = 1} a series of new constants as
\begin{eqnarray}
\ln(\sqrt{3} - \sqrt{2}), \ln(2 - \sqrt{3}), 
\arccos\left(-\frac{1}{8}\right), 
\arccos\left(-\frac{1}{4}\right),
\arccos\left(\frac{3}{4}\right)
\end{eqnarray}
appears beyond the multiple zeta values \cite{Blumlein:2009cf} and the special numbers in the case 
of cyclotomic and generalized harmonic sums \cite{Ablinger:2011te,Ablinger:2013cf}.
In the following calculations we made frequent use of the computer algebra packages 
{\tt Sigma} \cite{SIG1,SIG2}, {\tt HarmonicSums} 
\cite{Ablinger:2010kw,Ablinger:2011te,Ablinger:2013cf,Ablinger:2013hcp}, 
{\tt EvaluateMultiSums} \cite{Ablinger:2010pb,Blumlein:2012hg,Schneider:2013zna},
{\tt HolonomicFunctions} \cite{HOLF}, {\tt Integrator} \cite{RaabPhD}  and {\tt Singular} 
\cite{SINGULAR}. In different 
representations we will also need the value of the iterated integrals at $x=0$. Many of these 
special constants have been calculated analytically and will be given in Appendix~\ref{sec:A}.
In a series of cases only integral-representations over special functions could be derived,
which are well suited to derive numerical representations.
\section{Convolution Integrals}
\label{sec:4}

\vspace*{1mm}
\noindent
Now we take a closer look at how we calculate Mellin convolutions, which is the most challenging part of the 
computation. Formally, we rely on the convolution formula \eqref{eq:MellinConv}, which gives us a definite 
integral depending on a continuous parameter and can be written in the form
\begin{equation}
F(x)=\int_x^1dyf(x,y).
\end{equation}
In order to obtain a closed form for this integral, we first set up a differential equation satisfied by 
$F(x)$ and then obtain a solution of this equation satisfying appropriate initial conditions. In the 
first step we exploit the principle of differentiation under the integral. If we have a relation for 
the integrand $f(x,y)$ of the form
\begin{equation}\label{eq:ParametricIntegration}
c_m(x)\frac{\partial^mf}{\partial{x}^m}(x,y)+\dots+c_0(x)f(x,y)=\frac{\partial g}{\partial{y}}(x,y)
\end{equation}
for some coefficients $c_i(x)$ independent of $y$ and some function $g(x,y)$, then by applying $\int_x^1dy$ 
this gives rise to a linear ordinary differential equation for the integral $F(x)$
\begin{equation}\label{eq:ResultingODE}
c_m(x)F^{(m)}(x)+\dots+c_0(x)F(x)=g(x,1)-g(x,x)+\text{additional boundary terms}.
\end{equation}
Proper care has to be taken for evaluating the right hand side of this relation in the presence of singularities. 
There are several computer algebra algorithms for various types of integrands $f(x,y)$ which, given $f(x,y)$, 
compute relations of the form \eqref{eq:ParametricIntegration}. They either utilize differential fields 
\cite{Risch,Bronstein,RaabPhD} or holonomic systems and 
Ore algebras \cite{Ore:1933,AlmkvistZeilberger,Chyzak,Koutschan}. Corresponding algorithms are implemented in the 
packages {\tt Integrator} \cite{RaabPhD} and {\tt HolonomicFunctions} \cite{HOLF}, respectively.
For obtaining solutions to the generated differential 
equations the following two observations are crucial. All differential
equations obtained during our computations factor 
completely into first-order equations with rational function coefficients (the
resulting solutions are also called d'Alembertian solutions~\cite{Abramov:94})
and, moreover, these factors all have algebraic 
functions of degree at most two as their solutions. These two observations imply that solutions are of the form
\begin{equation}\label{Equ:NestedIntegrals}
\frac{r_1(x)}{\sqrt{p_1(x)}} \int dx \frac{r_2(x)}{\sqrt{p_2(x)}} \int dx \dots \int dx \frac{r_k(x)}{\sqrt{p_k(x)}},
\end{equation}
where $r_i(x)$ are rational functions and $p_i(x)$ are square-free polynomials. Using a dedicated rewrite procedure 
based on integration by parts we can write a basis of the solution space in terms of the functions $\HA$ over the alphabet 
defined earlier, which is then used to match initial conditions. Already Hermite considered a reduction procedure for simple 
integrals of the form 
\begin{equation}
\int dx \frac{r(x)}{\sqrt{p(x)}} 
\end{equation}
similar to ours, see \cite{Hermite83}. 

As will be shown later, in the concrete examples we computed,  some patterns emerge. Before proving those 
patterns we give a 
few general identities, which enable us to recursively rewrite many of the convolution integrals we encountered in 
terms of nested integrals in a rather direct way. These identities are at the core of what is going on in the 
calculations. They can also serve as an alternative way of generating the differential equations mentioned above, 
explaining their nice factorization properties. Already in Refs.~\cite{GeddesLeLi,ChenKauersSinger,BostanEtAl} 
the respective authors proposed algorithms which for certain types of integrands compute the differential operators 
in partially factored form. Our identities below follow a similar spirit.
\begin{lemma}\label{lem:Convolution1a}
 Let $c<1$ and let $f(x)$ be a differentiable function on $]c,1[$ with locally bounded derivative. Then for
 sufficiently small $\ep>0$ and all $x \in {]c,1-\ep[}$ we have
 \begin{equation}\label{eq:Convolution1a}
  \int_x^{1-\ep} dt \frac{f(t)}{(t-c)\sqrt{t-x}} 
= \frac{1}{\sqrt{x-c}}\int_x^{1-\ep} dt \frac{1}{\sqrt{t-c}} 
\left(\frac{f(1-\ep)}{\sqrt{1-t-\ep}}-\int_t^{1-\ep} du \frac{f^\prime(u)}{\sqrt{u-t}}\right).
 \end{equation}
\end{lemma}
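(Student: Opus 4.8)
The plan is to reduce both sides to one and the same ``canonical'' expression, namely $f$ integrated against an arctangent kernel, reaching it from the left by integration by parts and from the right by Fubini's theorem. Throughout I abbreviate $A:=1-\ep$. Since $c<x$, the factor $t-c$ stays bounded away from zero on $[x,A]$, so the only singularities are the integrable ones of type $(t-x)^{-1/2}$ and $(u-t)^{-1/2}$, and the cut-off at $A$ keeps every boundary term finite.

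First I would treat the left-hand side. The elementary antiderivative
\[
 V(t)=\frac{2}{\sqrt{x-c}}\arctan\frac{\sqrt{t-x}}{\sqrt{x-c}}
\]
satisfies $V'(t)=\frac{1}{(t-c)\sqrt{t-x}}$ (check by the substitution $s=\sqrt{t-x}$, which turns the integrand into $2/(s^2+(x-c))$) and $V(x)=0$. Because $f$ is differentiable with locally bounded derivative, it is absolutely continuous on the compact interval $[x,A]$, so integration by parts is legitimate and yields
\[
 \int_x^{A}\frac{f(t)}{(t-c)\sqrt{t-x}}\,dt
 =\frac{2f(A)}{\sqrt{x-c}}\arctan\frac{\sqrt{A-x}}{\sqrt{x-c}}
 -\frac{2}{\sqrt{x-c}}\int_x^{A}f'(t)\arctan\frac{\sqrt{t-x}}{\sqrt{x-c}}\,dt,
\]
the lower boundary contribution dropping out because $V(x)=0$.

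Next I would massage the right-hand side into exactly this shape. The double integral is absolutely convergent over the triangle $x\le t\le u\le A$ (on this compact region $|f'|$ is bounded, and the two square-root singularities are separately integrable), so Fubini's theorem applies and I may interchange the $t$- and $u$-integrations. Both the term carrying $f(A)$ and the swapped double-integral term then contain the elementary inner integral
\[
 \int_x^{\beta}\frac{dt}{\sqrt{(t-c)(\beta-t)}}=2\arctan\frac{\sqrt{\beta-x}}{\sqrt{x-c}},
\]
evaluated at $\beta=A$ for the first term and at $\beta=u$ inside the second; here I use $\arctan z+\arctan(1/z)=\tfrac{\pi}{2}$ to rewrite $\pi-2\arctan\sqrt{(x-c)/(\beta-x)}$ in the stated form. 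Substituting these values reproduces precisely the boundary term and the $f'$-integral found on the left, which establishes the identity.

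The computations are all elementary; the points requiring care, and where I would expect to spend the most attention, are the bookkeeping of the boundary contributions (verifying that the endpoint $t=x$ contributes nothing while $t=A$ supplies the $f(A)$ term) and the justification of the interchange of integration order, for which the cut-off $\ep>0$ together with the local boundedness of $f'$ is exactly what guarantees absolute integrability. Note that no higher regularity of $f$ is needed, which is why I integrate by parts on the left rather than on the inner $u$-integral on the right, as the latter would call for $f''$.
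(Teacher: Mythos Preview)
Your proof is correct but takes a genuinely different route from the paper's.

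The paper argues by showing that both sides of \eqref{eq:Convolution1a} satisfy the same first-order linear initial value problem
\[
 y'(x)+\frac{1}{2(x-c)}y(x)=-\frac{1}{x-c}\left(\frac{f(1-\ep)}{\sqrt{1-x-\ep}}-\int_x^{1-\ep}dt\,\frac{f'(t)}{\sqrt{t-x}}\right),\qquad y(1-\ep)=0,
\]
and then invokes uniqueness. Differentiating the left-hand side under the integral sign requires a regularisation (the paper inserts a lower limit $x+\delta$ and lets $\delta\to0$), after which an integration by parts in $t$ produces the right-hand side of the ODE.

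Your approach bypasses the ODE entirely: you exploit the explicit antiderivative $V(t)=\frac{2}{\sqrt{x-c}}\arctan\sqrt{(t-x)/(x-c)}$, integrate the left-hand side by parts once against $f$, swap the order of integration on the right-hand side by Fubini, and identify both inner integrals with the same closed form $\int_x^\beta\frac{dt}{\sqrt{(t-c)(\beta-t)}}=2\arctan\sqrt{(\beta-x)/(x-c)}$. This is more elementary and avoids the somewhat delicate differentiation-under-the-integral step with a moving singular endpoint. What the paper's ODE argument buys is uniformity: the companion Lemmas~\ref{lem:Convolution1b}--\ref{lem:Convolution41ii} are dispatched ``similarly'' by writing down the corresponding IVP, without needing to discover a new explicit antiderivative each time. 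Your method would also go through for those lemmas, but requires the analogous closed-form evaluations case by case.
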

\begin{proof}
 We prove \eqref{eq:Convolution1a} by verifying that both sides of the equation
satisfy the following initial value problem for $y(x)$:
 \begin{eqnarray*}
  y^\prime(x)+\frac{1}{2(x-c)}y(x) &=& -\frac{1}{x-c}\left(\frac{f(1-\ep)}{\sqrt{1-x-\ep}}-\int_x^{1-\ep} dt 
\frac{f^\prime(t)}{\sqrt{t-x}}\right)\\
  y(1-\ep) &=& 0.
 \end{eqnarray*}
 To this end, we first plug $y(x)=\int_{x+\delta}^{1-\ep} dt \frac{f(t)}{(t-c)\sqrt{t-x}}$, where $\delta>0$ is 
small, into the left hand side of the differential equation to obtain
 \begin{eqnarray*}
  y^\prime(x)+\frac{1}{2(x-c)}y(x) &=& -\frac{f(x+\delta)}{(x+\delta-c)\sqrt{
\delta}}+\frac{1}{x-c}\int_{x+\delta}^{1-\ep} dt \frac{f(t)}{2(t-x)^{3/2}}\\
  &=& 
\frac{\sqrt{\delta}f(x+\delta)}{(x-c)(x+\delta-c)}-\frac{f(1-\ep)}{(x-c)\sqrt{1-\ep-x}}+\frac{1}{x-c}\int_{x+\delta}^{1-\ep}dt\frac{f^\prime(t)}{\sqrt{t-x}}.
 \end{eqnarray*}
We can now send $\delta\to0$ to arrive at the right hand side of the
differential equation. By the assumptions we conclude that the limit
$y(x)=\int_x^{1-\ep} dt \frac{f(t)}{(t-c)\sqrt{t-x}}$ satisfies the differential
equation above. It is also continuous at $x=1-\ep$ and satisfies the initial
condition $y(1-\ep)=0$. Next, we set $y(x) = \frac{1}{\sqrt{x-c}}\int_x^{1-\ep}
dt \frac{1}{\sqrt{t-c}} \left(\frac{f(1-\ep)}{\sqrt{1-t-\ep}}-\int_t^{1-\ep} du
\frac{f^\prime(u)}{\sqrt{u-t}}\right)$ and straightforwardly verify that it
satisfies the initial value problem as well. By uniqueness of the solution of
the initial value problem on the interval $]c,1-\ep]$ we infer
\eqref{eq:Convolution1a}.
\end{proof}
Similarly, we get the following lemmas.
\begin{lemma}\label{lem:Convolution1b}
For $\ep>0$, $x<1-\ep$, $a<x$,
 \[
  \int_x^{1-\ep} dt \frac{f(t)}{\sqrt{(t-a)(t-x)}} = \int_x^{1-\ep} dt \frac{1}{t-a} 
\left(\frac{\sqrt{1-a-\ep}f(1-\ep)}{\sqrt{1-t-\ep}}-\int_t^{1-\ep} du \frac{\sqrt{u-a}f^\prime(u)}{\sqrt{u-t}}\right).
 \]
\end{lemma}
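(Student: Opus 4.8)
The plan is to mimic the proof of Lemma~\ref{lem:Convolution1a}: I would show that the left-hand side $Y(x):=\int_x^{1-\ep}dt\,\frac{f(t)}{\sqrt{(t-a)(t-x)}}$ and the right-hand side $Z(x)$ solve one and the same initial value problem, under the same regularity assumptions on $f$ as in Lemma~\ref{lem:Convolution1a}, namely
\[
y'(x)=-\frac{1}{x-a}\left(\frac{\sqrt{1-a-\ep}\,f(1-\ep)}{\sqrt{1-x-\ep}}-\int_x^{1-\ep}du\,\frac{\sqrt{u-a}\,f'(u)}{\sqrt{u-x}}\right),\qquad y(1-\ep)=0.
\]
For $Z$ this is immediate: writing $Z(x)=\int_x^{1-\ep}dt\,\frac{G(t)}{t-a}$, where $G(t)$ denotes the inner bracket, differentiation of the outer integral gives $Z'(x)=-G(x)/(x-a)$, which is exactly the stated $y'(x)$, and $Z(1-\ep)=0$ since the integration range collapses. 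The substance of the argument is thus to verify that $Y$ satisfies the same relation.

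To handle $Y$, I would first regularize the square-root singularity at the lower endpoint by setting $Y_\delta(x)=\int_{x+\delta}^{1-\ep}dt\,\frac{f(t)}{\sqrt{t-a}\sqrt{t-x}}$ with $\delta>0$ small, differentiate under the integral, and obtain
\[
Y_\delta'(x)=-\frac{f(x+\delta)}{\sqrt{x+\delta-a}\sqrt{\delta}}+\frac12\int_{x+\delta}^{1-\ep}dt\,\frac{f(t)}{\sqrt{t-a}\,(t-x)^{3/2}}.
\]
Then, exactly as in Lemma~\ref{lem:Convolution1a}, I would consider $Y_\delta'(x)+\frac{1}{2(x-a)}Y_\delta(x)$: the partial-fraction identity $\frac{1}{t-x}+\frac{1}{x-a}=\frac{t-a}{(t-x)(x-a)}$ collapses the two integrals into $\frac{1}{2(x-a)}\int_{x+\delta}^{1-\ep}dt\,\frac{f(t)\sqrt{t-a}}{(t-x)^{3/2}}$, an integrand that now carries $\sqrt{t-a}$ in the numerator. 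Integrating this by parts via $-\frac{d}{dt}(t-x)^{-1/2}=\frac12(t-x)^{-3/2}$ and $\frac{d}{dt}\big(f(t)\sqrt{t-a}\big)=f'(t)\sqrt{t-a}+\frac{f(t)}{2\sqrt{t-a}}$ produces the desired boundary term at $t=1-\ep$, the desired $f'$-integral, and a leftover $\frac{1}{2(x-a)}\int\frac{f}{\sqrt{t-a}\sqrt{t-x}}$ which is precisely $\frac{1}{2(x-a)}Y_\delta$ and cancels the term I added. What remains, after letting $\delta\to0$, is $Y'(x)$ equal to the claimed right-hand side, while $Y(1-\ep)=0$ is clear.

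I expect the main obstacle to be purely the endpoint analysis at $t=x$: differentiation under the integral and the integration by parts are legitimate only for the regularized integral, so I must track the two $\delta^{-1/2}$-singular boundary contributions, namely $-\frac{f(x+\delta)}{\sqrt{x+\delta-a}\sqrt{\delta}}$ coming from $Y_\delta'$ and $+\frac{f(x+\delta)\sqrt{x+\delta-a}}{(x-a)\sqrt{\delta}}$ coming from the lower boundary of the by-parts step, and check that their sum $\frac{f(x+\delta)\sqrt{\delta}}{(x-a)\sqrt{x+\delta-a}}$ tends to $0$ as $\delta\to0$. This is where the hypotheses are used: $a<x$ guarantees $x-a>0$, and the differentiability of $f$ with locally bounded derivative ensures the remaining $f'$-integral converges. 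Finally, since $Y$ and $Z$ have the same derivative on $]a,1-\ep[$ and agree at $1-\ep$, the fundamental theorem of calculus (equivalently, uniqueness for the first-order problem, as invoked in Lemma~\ref{lem:Convolution1a}) yields $Y\equiv Z$, which is the assertion.
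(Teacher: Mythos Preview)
Your proposal is correct and follows essentially the same route the paper intends: the paper gives no separate proof of Lemma~\ref{lem:Convolution1b}, merely stating ``Similarly, we get the following lemmas'' after Lemma~\ref{lem:Convolution1a}, and your argument is precisely the natural adaptation of that proof. The only cosmetic difference is that here the right-hand side carries no prefactor $\frac{1}{\sqrt{x-a}}$, so the governing ODE has no homogeneous part and reduces to matching derivatives plus the value at $1-\ep$; your add-and-subtract of $\frac{1}{2(x-a)}Y_\delta$ reproduces the Lemma~\ref{lem:Convolution1a} computation verbatim and then cancels, exactly as you note.
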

\begin{lemma}\label{lem:Convolution1c}
For $\ep>0$, $x<1-\ep$, $a<x$, $c<x$,
 \begin{multline*}
  \int_x^{1-\ep} dt \frac{f(t)}{(t-c)\sqrt{(t-a)(t-x)}} = \frac{1}{\sqrt{x-c}} \int_x^{1-\ep} dt \frac{1}{(t-a)\sqrt{t-c}} \bigg(\frac{\sqrt{1-a-\ep}f(1-\ep)}{\sqrt{1-t-\ep}}-
 \\*
  -\int_t^{1-\ep} du \frac{\sqrt{u-a}f^\prime(u)}{\sqrt{u-t}}\bigg).
 \end{multline*}
\end{lemma}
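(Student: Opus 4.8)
The plan is to prove the identity exactly as Lemma~\ref{lem:Convolution1a} was proved: I will show that both sides solve the same first-order linear initial value problem on $]c,1-\ep]$ and then invoke uniqueness. Write $y(x)$ for the left-hand side and $z(x)$ for the right-hand side of the claimed identity, and abbreviate the inner bracket by
\[
G(t) := \frac{\sqrt{1-a-\ep}\,f(1-\ep)}{\sqrt{1-t-\ep}} - \int_t^{1-\ep} du\,\frac{\sqrt{u-a}\,f^\prime(u)}{\sqrt{u-t}},
\]
so that $z(x) = \frac{1}{\sqrt{x-c}}\int_x^{1-\ep}\frac{G(t)}{(t-a)\sqrt{t-c}}\,dt$. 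Both $y$ and $z$ visibly vanish at $x=1-\ep$, which fixes the common initial condition.

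First I would treat $z$, which is immediate: multiplying by the integrating factor $\sqrt{x-c}$ and differentiating the resulting integral yields $z^\prime(x)+\frac{1}{2(x-c)}z(x) = -\frac{G(x)}{(x-a)(x-c)}$. This is the target differential equation, and it is exactly the relation I must now verify for $y$.

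For $y$ I would follow the $\delta$-regularization of the proof of Lemma~\ref{lem:Convolution1a} verbatim, the only change being that the numerator function $f(t)$ there is replaced by $f(t)/\sqrt{t-a}$, which merely rides along. Plugging $y_\delta(x)=\int_{x+\delta}^{1-\ep}\frac{f(t)}{(t-c)\sqrt{(t-a)(t-x)}}\,dt$ into the left-hand side and using the same algebraic cancellation $\frac{1}{t-c}\big(\frac{1}{2(t-x)^{3/2}}+\frac{1}{2(x-c)\sqrt{t-x}}\big)=\frac{1}{2(x-c)(t-x)^{3/2}}$ collapses the pole $\frac{1}{t-c}$ to the constant factor $\frac{1}{x-c}$ and gives
\[
y_\delta^\prime(x)+\frac{1}{2(x-c)}y_\delta(x) = \frac{1}{x-c}\,\Phi_\delta^\prime(x) + \frac{\sqrt{\delta}\,f(x+\delta)}{\sqrt{x+\delta-a}\,(x-c)(x+\delta-c)},
\]
where $\Phi(x):=\int_x^{1-\ep}\frac{f(t)}{\sqrt{(t-a)(t-x)}}\,dt$ is the left-hand side of Lemma~\ref{lem:Convolution1b} and $\Phi_\delta$ its regularization. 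As $\delta\to0$ the explicit boundary term vanishes, and, just as in Lemma~\ref{lem:Convolution1a}, the limiting argument shows that $y$ satisfies $y^\prime(x)+\frac{1}{2(x-c)}y(x)=\frac{1}{x-c}\Phi^\prime(x)$. Finally, Lemma~\ref{lem:Convolution1b} states $\Phi(x)=\int_x^{1-\ep}\frac{G(t)}{t-a}\,dt$, whence $\Phi^\prime(x)=-\frac{G(x)}{x-a}$, so the source term becomes $-\frac{G(x)}{(x-a)(x-c)}$, precisely the equation found for $z$. Thus $y$ and $z$ solve the same initial value problem and coincide.

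I expect the only genuine obstacle to be the justification of the $\delta\to0$ limit of the derivative, i.e.\ interchanging the limit with $\frac{d}{dx}$: both $y_\delta^\prime$ and $\Phi_\delta^\prime$ carry individually divergent boundary terms that cancel only in the combination above. This is controlled exactly as in Lemma~\ref{lem:Convolution1a}, using the hypotheses that $f$ is differentiable with locally bounded derivative on $]c,1[$ to obtain local uniform estimates; in particular this guarantees that $\Phi$ is differentiable with $\Phi_\delta^\prime\to\Phi^\prime$ as a whole, so that passing to the limit is legitimate. Everything else is routine bookkeeping of the boundary contributions at $t=1-\ep$ and the verification of the initial condition.
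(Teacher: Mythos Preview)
Your proposal is correct and follows essentially the same approach the paper intends: the paper merely says ``Similarly, we get the following lemmas'' after the proof of Lemma~\ref{lem:Convolution1a}, i.e.\ the same initial value problem / uniqueness argument, which is precisely what you carry out. Your use of Lemma~\ref{lem:Convolution1b} to identify $\Phi'(x)=-G(x)/(x-a)$ is a clean shortcut compared to redoing the integration by parts from scratch, but it is the same method in substance.
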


From Lemma~\ref{lem:Convolution1a}, using Lemma~\ref{lem:Convolution1b}, we also obtain
\begin{eqnarray}\label{eq:Convolution3}
 \int_x^{1-\ep} dt \frac{\sqrt{t}f(t)}{(t-c)\sqrt{t-x}} &=& \int_x^{1-\ep} dt \frac{1}{t} \left(\frac{\sqrt{1-\ep}f(1-\ep)}{\sqrt{1-t-\ep}}-\int_t^{1-\ep} du \frac{\sqrt{u}f^\prime(u)}{\sqrt{u-t}}\right)\\*
 &&+\frac{c}{\sqrt{x-c}}\int_x^{1-\ep} dt \frac{1}{t\sqrt{t-c}} \left(\frac{\sqrt{1-\ep}f(1-\ep)}
{\sqrt{1-t-\ep}}-\int_t^{1-\ep} du \frac{\sqrt{u}f^\prime(u)}{\sqrt{u-t}}\right).\N
\end{eqnarray}
We may combine Lemma~\ref{lem:Convolution1a} and Lemma~\ref{lem:Convolution1c}
into one formula using $e \in \{0,1\}$. This gives
\begin{eqnarray}\label{eq:Convolution5}
 \int_x^{1-\ep} dt \frac{f(t)}{(t-c)\sqrt{(t-a)^e(t-x)}} &=& \frac{1}{\sqrt{x-c}} \int_x^{1-\ep} dt \frac{1}{(t-a)^e\sqrt{t-c}} \bigg(\frac{\sqrt{(1-a-\ep)^e}f(1-\ep)}{\sqrt{1-t-\ep}}-\N\\*
 &&-\int_t^{1-\ep} du \frac{\sqrt{(u-a)^e}f^\prime(u)}{\sqrt{u-t}}\bigg).
\end{eqnarray}

\begin{lemma}\label{lem:Convolution4}
For $\ep>0$, $0<x<1-\ep$, $a<x$, $c>1$, $ac<x$,
 \begin{eqnarray*}
  \int_x^{1-\ep} dt \frac{f(t)}{(x-ct)\sqrt{(t-a)(t-x)}} &=& \frac{1}{\sqrt{x(x-ac)}} \int_x^{1-\ep} dt \sqrt{\frac{t}{t-ac}} \cdot\\*
  &&\cdot\bigg(\frac{(t^2-ac(1-\ep))\sqrt{1-a-\ep}f(1-\ep)}{t(t-a)(t-c(1-\ep))\sqrt{1-t-\ep}}-\\*
  &&-\int_t^{1-\ep} du \frac{\sqrt{u-a}f^\prime(u)}{(t-cu)\sqrt{u-t}}-\frac{a}{t(t-a)}\int_t^{1-\ep} du 
\frac{\sqrt{u-a}f^\prime(u)}{\sqrt{u-t}}\bigg).
 \end{eqnarray*}
\end{lemma}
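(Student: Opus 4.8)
The plan is to prove the identity by the same device used for Lemmas~\ref{lem:Convolution1a}--\ref{lem:Convolution1c}: I show that both sides, regarded as functions of $x$, solve one and the same first-order linear initial value problem, and then invoke uniqueness. Writing $K(x,t)=\frac{1}{(x-ct)\sqrt{(t-a)(t-x)}}$ for the kernel of the left-hand side and reading off the prefactor $\frac{1}{\sqrt{x(x-ac)}}$ of the right-hand side, the natural integrating factor is $\sqrt{x(x-ac)}$, so the relevant homogeneous operator is $y'(x)+\frac{2x-ac}{2x(x-ac)}y(x)$, since $\frac{2x-ac}{2x(x-ac)}=\frac{d}{dx}\log\sqrt{x(x-ac)}$. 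The common initial condition is $y(1-\ep)=0$, which both sides satisfy trivially because both integrals run from $x$ to $1-\ep$.

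The easy direction is the right-hand side. Denoting it by $\frac{1}{\sqrt{x(x-ac)}}G(x)$ with $G(x)=\int_x^{1-\ep}\sqrt{\tfrac{t}{t-ac}}\,\Psi(t)\,dt$ and $\Psi(t)$ the bracketed expression of the statement, the choice of integrating factor makes the homogeneous part cancel, so that $y'+\frac{2x-ac}{2x(x-ac)}y=\frac{1}{\sqrt{x(x-ac)}}G'(x)=-\frac{1}{\sqrt{x(x-ac)}}\sqrt{\tfrac{x}{x-ac}}\,\Psi(x)=-\frac{1}{x-ac}\Psi(x)$. This pins down the right-hand side $R(x):=-\frac{1}{x-ac}\Psi(x)$ of the differential equation that the left-hand side must reproduce.

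For the left-hand side I regularize the lower endpoint to $x+\delta$ and differentiate under the integral sign. Using the elementary identities $\partial_xK=K\big[\frac{-1}{x-ct}+\frac{1}{2(t-x)}\big]$ and $\partial_tK=K\big[\frac{c}{x-ct}-\frac{1}{2(t-a)}-\frac{1}{2(t-x)}\big]$, I rewrite $\partial_xK=-\partial_tK+K\big[\frac{c-1}{x-ct}-\frac{1}{2(t-a)}\big]$. Integration by parts against the $-\partial_tK$ piece produces a boundary term at $t=x+\delta$ that exactly cancels the $\delta^{-1/2}$ contribution coming from differentiating the lower limit, in the same way as in the proof of Lemma~\ref{lem:Convolution1a}. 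After sending $\delta\to0$ one is thus left with
\[
-f(1-\ep)K(x,1-\ep)+\int_x^{1-\ep}f'(t)K\,dt+\int_x^{1-\ep}f(t)K\Big[\frac{c-1}{x-ct}-\frac{1}{2(t-a)}+\frac{2x-ac}{2x(x-ac)}\Big]\,dt.
\]

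It remains to show that this equals $R(x)$, and this reorganization is the main obstacle: the integrand is not termwise equal to $R$, since the last integral still carries the double pole $\frac{c-1}{(x-ct)^2\sqrt{(t-a)(t-x)}}$ arising from $K\cdot\frac{1}{x-ct}$. I reduce it by a second integration by parts, writing $\frac{1}{(x-ct)^2}=\frac{1}{c}\,\partial_t\frac{1}{x-ct}$; the resulting derivative of the square-root factors feeds additional $f'$-terms and new boundary contributions. Collecting the $f'$-integrals into the two shapes $\int_x^{1-\ep}\frac{\sqrt{u-a}f'(u)}{(x-cu)\sqrt{u-x}}du$ and $\int_x^{1-\ep}\frac{\sqrt{u-a}f'(u)}{\sqrt{u-x}}du$, and the boundary pieces at $t=x$ and $t=1-\ep$ into the explicit term of $\Psi$, the partial-fraction identities force precisely the coefficients $\frac{a}{x(x-a)}$ and $\frac{x^2-ac(1-\ep)}{x(x-a)(x-c(1-\ep))}$ appearing in the statement; this is where the specific combination $x-ac$ in the integrating factor is pinned down. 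Once $y'+\frac{2x-ac}{2x(x-ac)}y=R(x)$ is verified for the left-hand side, uniqueness of the solution of the initial value problem on an interval avoiding the singularities $x=0$ and $x=ac$ of the coefficient yields the claimed equality.
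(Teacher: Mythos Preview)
Your approach is correct and is precisely the one the paper intends: the paper does not spell out a proof of this lemma but merely says ``Similarly, we get the following lemmas'' after the detailed proof of Lemma~\ref{lem:Convolution1a}, i.e., verify that both sides satisfy the same first-order linear initial value problem in $x$ and invoke uniqueness. You have correctly identified the integrating factor $\sqrt{x(x-ac)}$ (hence the operator $y'+\tfrac{2x-ac}{2x(x-ac)}y$) from the prefactor on the right-hand side, and the regularization-and-cancellation of the $\delta^{-1/2}$ boundary term mirrors the argument for Lemma~\ref{lem:Convolution1a}; the only remaining work, as you note, is the bookkeeping of the second integration by parts to remove the double pole in $x-ct$, which is straightforward algebra.
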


\begin{lemma}\label{lem:Convolution41i}
For $\ep>0$, $0<x<1-\ep$, $a<x$, $c>1$, $ac<x$,
 \begin{eqnarray*}
  \int_x^{1-\ep} dt \frac{\HA_{\sf w_3}(\frac{x}{t})f(t)}{(x-ct)\sqrt{t(t-a)}} &=& \frac{1}{\sqrt{x(x-ac)}} \int_x^{1-\ep} dt \frac{1}{\sqrt{t(t-ac)}} \cdot\\*
  &&\cdot\bigg(\frac{c\sqrt{(1-\ep)(1-a-\ep)}\HA_{\sf w_3}(\frac{t}{1-\ep})f(1-\ep)}{t-c(1-\ep)}-\\*
  &&-c\int_t^{1-\ep} du \frac{\sqrt{u(u-a)}\HA_{\sf w_3}(\frac{t}{u})f^\prime(u)}{t-cu}+\int_t^{1-\ep} du 
\frac{f(u)}{\sqrt{(u-a)(u-t)}}\bigg).
 \end{eqnarray*}
\end{lemma}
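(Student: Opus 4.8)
The plan is to mimic the initial-value-problem argument of Lemma~\ref{lem:Convolution1a}: denote the left-hand side by $F(x)$, exhibit a first-order linear ODE together with the condition $F(1-\ep)=0$ that $F$ satisfies, show that the right-hand side solves the very same problem, and conclude by uniqueness of the solution. The initial condition is immediate on both sides, since each is an integral from $x$ to $1-\ep$ that collapses at $x=1-\ep$. The one structural input I need is the derivative of the weight: because $\frac{d}{dz}\HA_{\sf w_3}(z)=-f_{\sf w_3}(z)=-\frac{1}{z\sqrt{1-z}}$, the chain rule gives $\partial_x\HA_{\sf w_3}(x/t)=-\frac{\sqrt{t}}{x\sqrt{t-x}}$ and $\partial_t\HA_{\sf w_3}(x/t)=\frac{1}{\sqrt{t}\sqrt{t-x}}$.

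First I would differentiate $F$ under the integral sign. The lower-limit boundary term drops out because $\HA_{\sf w_3}(x/t)\to\HA_{\sf w_3}(1)=0$ as $t\to x$ (in fact like $\sqrt{t-x}$, so the integrand is genuinely integrable there), while the remaining factors stay finite. What survives is
\[
F'(x)=-\frac{1}{x}\int_x^{1-\ep}\frac{f(t)\,dt}{(x-ct)\sqrt{(t-a)(t-x)}}-\int_x^{1-\ep}\frac{\HA_{\sf w_3}(x/t)f(t)\,dt}{(x-ct)^2\sqrt{t(t-a)}}.
\]
On the right-hand side I would observe that the inner double integral, call it $G(x)$, has an integrand depending only on $t$ and the inner variable $u$, not on $x$; hence $y(x):=\frac{1}{\sqrt{x(x-ac)}}G(x)$ satisfies $y'(x)+\tfrac12\bigl(\tfrac1x+\tfrac1{x-ac}\bigr)y(x)=-\frac{1}{x(x-ac)}\,[\text{inner factor}]_{t=x}$ together with $y(1-\ep)=0$. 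It therefore remains to verify that $F$ obeys the same ODE, i.e.\ that $F'(x)+\tfrac12\bigl(\tfrac1x+\tfrac1{x-ac}\bigr)F(x)$ equals this inhomogeneity.

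To match the two pieces of $F'$ I would use the earlier results. The first integral is, up to the factor $-1/x$, exactly the convolution handled in Lemma~\ref{lem:Convolution4}; applying that lemma produces the prefactor $\frac{1}{\sqrt{x(x-ac)}}$ and the purely root-valued term $\int\frac{f(u)}{\sqrt{(u-a)(u-x)}}\,du$ appearing in the target inner factor. For the $\HA_{\sf w_3}$-weighted second integral I would write $\frac{1}{(x-ct)^2}=\frac1c\,\partial_t\frac{1}{x-ct}$ and integrate by parts in $t$: the $t=x$ boundary vanishes, the $t=1-\ep$ boundary yields the $f(1-\ep)$ term of the inner factor, the derivative $\partial_t\HA_{\sf w_3}(x/t)$ feeds back into the root-valued part, and the differentiation of $f$ generates precisely the $\sqrt{u(u-a)}\,\HA_{\sf w_3}(x/u)f'(u)$ term. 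The hard part will be the bookkeeping in this matching: one must combine the $\tfrac12(\tfrac1x+\tfrac1{x-ac})F(x)$ term with the two reorganized integrals, keep the $\ep$-regularized boundary and finite-part contributions consistent throughout, and simplify the algebraic prefactors ($\frac{1}{\sqrt{x(x-ac)}}$, $\sqrt{t/(t-ac)}$, and the factors $\frac{1}{x-ct}$) so that everything collapses exactly onto $-\frac{1}{x(x-ac)}\,[\text{inner factor}]_{t=x}$. Once this identity is established, uniqueness of the solution of the first-order linear initial value problem yields the asserted equality.
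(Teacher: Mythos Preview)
Your approach is correct and is exactly what the paper intends: after proving Lemma~\ref{lem:Convolution1a} by an initial-value-problem argument, the paper simply states ``Similarly, we get the following lemmas'' and gives no further proof, so the IVP method you outline is the one to use. The derivative formulae for $\HA_{\sf w_3}(x/t)$, the vanishing of the boundary term at $t=x$, and the identification of the ODE $y'+\tfrac12\bigl(\tfrac1x+\tfrac1{x-ac}\bigr)y=\text{inhom.}$ from the right-hand side are all correct.

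One tactical remark: invoking Lemma~\ref{lem:Convolution4} for the first integral in $F'(x)$ is probably a detour rather than a help. That lemma rewrites $\int\frac{f(t)}{(x-ct)\sqrt{(t-a)(t-x)}}\,dt$ as a nested integral with prefactor $\frac{1}{\sqrt{x(x-ac)}}$, whereas the target inhomogeneity carries the rational prefactor $\frac{1}{x(x-ac)}$ and the bare integral $\int\frac{f(u)}{\sqrt{(u-a)(u-x)}}\,du$ without any $\frac{1}{x-cu}$ in the integrand. The cleaner route is to do everything at once: after your integration by parts on the $1/(x-ct)^2$ piece (which correctly generates the $f'$ term and the boundary term), the $\partial_t\HA_{\sf w_3}(x/t)=\frac{1}{\sqrt{t(t-x)}}$ contribution produces a term $\frac{1}{c}\int\frac{f(t)}{t(x-ct)\sqrt{(t-a)(t-x)}}\,dt$ which combines with the first piece $-\frac{1}{x}\int\frac{f(t)}{(x-ct)\sqrt{(t-a)(t-x)}}\,dt$ via the identity $-\frac{1}{x}+\frac{1}{ct}=\frac{x-ct}{xct}$, and the remaining $\HA_{\sf w_3}$-weighted pieces combine with $pF$ by straightforward partial-fraction algebra. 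The bookkeeping is indeed the hard part, as you say, but no auxiliary lemma is needed.
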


\begin{lemma}\label{lem:Convolution41ii}
For $\ep>0$, $0<x<1-\ep$, $a<x$, $c>1$, $ac<x$,
 \begin{eqnarray*}
  \int_x^{1-\ep} dt \frac{t\HA_{\sf w_3}(\frac{x}{t})f(t)}{(x-ct)\sqrt{t(t-a)}} &=& \sqrt{\frac{x}{x-ac}} \int_x^{1-\ep} dt \frac{1}{\sqrt{t(t-ac)}} \cdot\\*
  &&\cdot\bigg(\frac{\sqrt{(1-\ep)(1-a-\ep)}\HA_{\sf w_3}(\frac{t}{1-\ep})f(1-\ep)}{t-c(1-\ep)}\\*
  &&-\int_t^{1-\ep} du \frac{\sqrt{u(u-a)}\HA_{\sf w_3}(\frac{t}{u})f^\prime(u)}{t-cu}+\frac{1}{c}\int_t^{1-\ep} du 
\frac{f(u)}{\sqrt{(u-a)(u-t)}}\bigg)\\*
  &&- \frac{1}{c}\int_x^{1-\ep} dt \frac{1}{t}\int_t^{1-\ep} du \frac{f(u)}{\sqrt{(u-a)(u-t)}}.
 \end{eqnarray*}
\end{lemma}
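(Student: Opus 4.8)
The plan is to reduce the statement to the already proven Lemma~\ref{lem:Convolution41i} by splitting the kernel rationally, rather than by setting up a fresh initial value problem as in the earlier lemmas. The starting observation is the partial-fraction identity
\[
\frac{t}{x-ct}=-\frac{1}{c}+\frac{x}{c}\cdot\frac{1}{x-ct},
\]
which turns the left-hand side into
\[
\int_x^{1-\ep} dt \frac{t\HA_{\sf w_3}\!\left(\frac{x}{t}\right)f(t)}{(x-ct)\sqrt{t(t-a)}}
=\frac{x}{c}\int_x^{1-\ep} dt \frac{\HA_{\sf w_3}\!\left(\frac{x}{t}\right)f(t)}{(x-ct)\sqrt{t(t-a)}}
-\frac{1}{c}\,J,
\]
where $J:=\int_x^{1-\ep} dt\,\HA_{\sf w_3}(\frac{x}{t})f(t)/\sqrt{t(t-a)}$. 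The first integral on the right is exactly the object evaluated in Lemma~\ref{lem:Convolution41i}, so the whole task splits into substituting that closed form and separately evaluating $J$.

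For the first piece I would insert Lemma~\ref{lem:Convolution41i} and use $\tfrac{x}{c}\cdot\tfrac{1}{\sqrt{x(x-ac)}}=\tfrac{1}{c}\sqrt{\tfrac{x}{x-ac}}$. Dividing the bracket of Lemma~\ref{lem:Convolution41i} by $c$ then reproduces, term by term, the first three summands of the bracket claimed here: the boundary contribution at $t=c(1-\ep)$, the $f'$-integral weighted by $\HA_{\sf w_3}(\frac{t}{u})/(t-cu)$, and the $\tfrac{1}{c}$-weighted integral $\int_t^{1-\ep} du\, f(u)/\sqrt{(u-a)(u-t)}$. This part is pure coefficient matching and needs no analysis.

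The substance of the proof is to show that $-\tfrac{1}{c}J$ equals the trailing double integral, i.e. that
\[
J=\int_x^{1-\ep}\frac{dt}{t}\int_t^{1-\ep} du \frac{f(u)}{\sqrt{(u-a)(u-t)}}.
\]
I would establish this by starting from the right-hand side and exchanging the order of integration over the triangle $x\le t\le u\le 1-\ep$, giving $\int_x^{1-\ep} du \frac{f(u)}{\sqrt{u-a}}\int_x^u \frac{dt}{t\sqrt{u-t}}$. The inner integral is elementary: the substitution $w=\sqrt{u-t}$ reduces it to a rational integral and yields
\[
\int_x^u \frac{dt}{t\sqrt{u-t}}=\frac{1}{\sqrt{u}}\ln\frac{\sqrt{u}+\sqrt{u-x}}{\sqrt{u}-\sqrt{u-x}}=\frac{1}{\sqrt{u}}\HA_{\sf w_3}\!\left(\frac{x}{u}\right),
\]
where the last equality is just the depth-one closed form $\HA_{\sf w_3}(y)=\ln\frac{1+\sqrt{1-y}}{1-\sqrt{1-y}}$ recorded above. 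Re-inserting this collapses the double integral back to $J$, and assembling the two pieces gives precisely the asserted right-hand side.

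I expect the main obstacle to be bookkeeping with the integrable square-root singularities at $u=t$ and with the $\ep$-regularised endpoint, rather than any deeper analytic difficulty. Concretely, I must justify the Fubini interchange across the diagonal $u=t$ (legitimate because the singularity $1/\sqrt{u-t}$ is absolutely integrable in both orders and $f$ has locally bounded derivative), and verify that the kernels never meet their poles: since $c>1$, $ac<x$ and $t\ge x$, one has $x-ct<0$ and $t-cu<0$ throughout the domain, so all denominators stay strictly away from zero and every interchange of integrals is over an absolutely integrable integrand.
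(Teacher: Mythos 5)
Your proof is correct, and it takes a genuinely different route from the paper's. The paper gives no explicit proof of Lemma~\ref{lem:Convolution41ii}; it is covered by the blanket remark that the later lemmas follow ``similarly'' to Lemma~\ref{lem:Convolution1a}, i.e.\ by checking that both sides satisfy the same first--order initial value problem and invoking uniqueness, and only afterwards does the paper record the partial--fraction reformulation \eqref{eq:Convolution41ii} as a note. You instead take that reformulation as the starting point: the splitting $\tfrac{t}{x-ct}=-\tfrac{1}{c}+\tfrac{x}{c(x-ct)}$ reduces the claim to Lemma~\ref{lem:Convolution41i} plus the single new identity $\int_x^{1-\ep}\tfrac{dt}{t}\int_t^{1-\ep}du\,\tfrac{f(u)}{\sqrt{(u-a)(u-t)}}=\int_x^{1-\ep}dt\,\tfrac{\HA_{\sf w_3}(x/t)f(t)}{\sqrt{t(t-a)}}$, which you establish by exchanging the order of integration and evaluating $\int_x^u\tfrac{dt}{t\sqrt{u-t}}=\tfrac{1}{\sqrt{u}}\HA_{\sf w_3}\bigl(\tfrac{x}{u}\bigr)$ in closed form. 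I verified that computation, the prefactor identity $\tfrac{x}{c}\cdot\tfrac{1}{\sqrt{x(x-ac)}}=\tfrac{1}{c}\sqrt{\tfrac{x}{x-ac}}$ (valid since $0<x$ and $ac<x$), and the term--by--term matching after dividing the bracket of Lemma~\ref{lem:Convolution41i} by $c$; all are right, and your justification of the Fubini step (absolute integrability of $1/\sqrt{u-t}$, and $x-ct<0$, $t-cu<0$ throughout the domain because $c>1$ and $u\ge t\ge x>0$) is adequate. What your approach buys is that no fresh differential equation or uniqueness argument is needed and the extra double integral is evaluated explicitly; what it costs is self--containedness, since it presupposes Lemma~\ref{lem:Convolution41i}, whereas the paper's IVP method treats every lemma of the section uniformly and independently.
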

Note that the previous formula may also be written as
\begin{eqnarray}\label{eq:Convolution41ii}
 \int_x^{1-\ep} dt \frac{t\HA_{\sf w_3}(\frac{x}{t})f(t)}{(x-ct)\sqrt{t(t-a)}} &=& \frac{x}{c}\int_x^{1-\ep} dt \frac{\HA_{\sf w_3}(\frac{x}{t})f(t)}{(x-ct)\sqrt{t(t-a)}}
\N\\*&&
 - \frac{1}{c}\int_x^{1-\ep} dt \frac{1}{t}\int_t^{1-\ep} du \frac{f(u)}{\sqrt{(u-a)(u-t)}}.
\end{eqnarray}
For later use we also note that the following identities hold:
\begin{eqnarray}
 \int_x^1 dt \frac{\HA_{\sf w_3}(\frac{x}{t})}{(x-ct)\sqrt{t(1-t)}} &=& -\frac{\pi}{\sqrt{x(c-x)}}\int_x^1 dt \frac{1}{\sqrt{t(c-t)}}\label{eq:Convolution41iBase}\\
 \int_x^1 dt \frac{t\HA_{\sf w_3}(\frac{x}{t})}{(x-ct)\sqrt{t(1-t)}} &=&
-\frac{\pi}{c}\left(\HA_{\sf 0}(x)+\sqrt{\frac{x}{c-x}}\int_x^1 dt
\frac{1}{\sqrt{t(c-t)}}\right).\label{eq:Convolution41iiBase}
\end{eqnarray}

The following theorem allows us to set up Mellin representations of expressions
of the form
\begin{equation}\label{eq:SumPattern1}
 \binom{2N}{N}S_{k_1,\dots,k_m}(x_1,\dots,x_m)(N)
\end{equation}
in a direct way, thereby removing the need to carry out the calculations for individual Mellin convolutions in such cases.

\begin{theorem}\label{thm:Pattern1}
 Let $a_0,\dots,a_k<0$ and define $f_{\sf b_i}(x):=\frac{1}{\sqrt{(x-a_i)(x-a_{i+1})}}$ for $i \in \{0,\dots,k-1\}$ and $f_{\sf b_k}(x):=\frac{1}{\sqrt{(1-x)(x-a_k)}}$. Then we have
 \begin{equation}\label{eq:ConvolutionPattern1}
  \int_x^1dt\frac{\HA_{\sf a_1,\dots,a_k}(t)}{(t-a_0)\sqrt{t-x}} = \frac{\HA_{\sf b_0,\dots,b_k}(x)}{\sqrt{x-a_0}}
 \end{equation}
 and
 \begin{equation}\label{eq:Pattern1}
  \binom{2N}{N}\Mvec\left[\frac{\HA_{\sf a_1,\dots,a_k}(x)}{x-a_0}\right](N) = \frac{4^N}{\pi}\Mvec\left[\frac{\HA_{\sf b_0,\dots,b_k}(x)}{\sqrt{x(x-a_0)}}\right](N).
 \end{equation}
\end{theorem}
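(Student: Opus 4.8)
The plan is to prove the two displays in turn: first the iterated-integral identity \eqref{eq:ConvolutionPattern1}, by induction on the word length $k$, and then to read off the Mellin statement \eqref{eq:Pattern1} from it through a single Mellin convolution against the binomial weight. The starting point is that the left-hand side of \eqref{eq:ConvolutionPattern1} is exactly of the shape treated by Lemma~\ref{lem:Convolution1a}, with $f(t)=\HA_{\sf a_1,\dots,a_k}(t)$ and $c=a_0$. Since all $a_i<0$, the factor $t-a_i$ stays positive on $[x,1]$, so $\HA_{\sf a_1,\dots,a_k}$ is smooth there with locally bounded derivative and the lemma applies after passing to the limit $\ep\to0$. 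The base case $k=0$ (empty word, $\HA_{\emptyset}\equiv1$, $f'\equiv0$) is then immediate from Lemma~\ref{lem:Convolution1a}:
\[
 \int_x^1\frac{dt}{(t-a_0)\sqrt{t-x}}=\frac{1}{\sqrt{x-a_0}}\int_x^1\frac{dt}{\sqrt{(1-t)(t-a_0)}}=\frac{\HA_{\sf b_0}(x)}{\sqrt{x-a_0}},
\]
which matches the definition $f_{\sf b_0}(x)=\frac{1}{\sqrt{(1-x)(x-a_0)}}$ in the case $k=0$.

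For the inductive step two structural facts drive the argument. First, the base-point value $\HA_{\sf a_1,\dots,a_k}(1)=0$ annihilates the boundary term of Lemma~\ref{lem:Convolution1a} in the limit $\ep\to0$. Second, the derivative rule $\frac{d}{du}\HA_{\sf a_1,\dots,a_k}(u)=-\frac{1}{u-a_1}\HA_{\sf a_2,\dots,a_k}(u)$ rewrites the surviving inner integral as $\int_t^1 du\,\frac{\HA_{\sf a_2,\dots,a_k}(u)}{(u-a_1)\sqrt{u-t}}$, which is again of the left-hand-side form of \eqref{eq:ConvolutionPattern1}, now with the word shortened to $\sf a_2,\dots,a_k$ and the distinguished pole moved from $a_0$ to $a_1$. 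Applying the induction hypothesis pointwise in $t$ to this shorter instance yields $\frac{\HA_{\sf b_1,\dots,b_k}(t)}{\sqrt{t-a_1}}$, after checking that the $\sf b$-alphabet of the shortened instance coincides with $f_{\sf b_1},\dots,f_{\sf b_k}$ of the original problem (the relabelling $a_i\mapsto a_{i+1}$ reproduces exactly those letters, and the terminal letter $f_{\sf b_k}$ is unchanged). Substituting back, the prefactor $\frac{1}{\sqrt{x-a_0}}$ remains outside while the two remaining roots combine via $\frac{1}{\sqrt{t-a_0}}\cdot\frac{1}{\sqrt{t-a_1}}=f_{\sf b_0}(t)$, so that the outer integral is recognised as $\HA_{\sf b_0,\dots,b_k}(x)$; the signs cancel cleanly and the induction closes.

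With \eqref{eq:ConvolutionPattern1} established, \eqref{eq:Pattern1} follows from the binomial representation \eqref{eq:RepresentBinomial} together with the multiplicativity of the Mellin transform under convolution, $\Mvec[f\ast g](N)=\Mvec[f](N)\,\Mvec[g](N)$. Setting $g(x)=\frac{\HA_{\sf a_1,\dots,a_k}(x)}{x-a_0}$ and $P(x)=\frac{1}{\sqrt{x(1-x)}}$, so that $\binom{2N}{N}=\frac{4^N}{\pi}\Mvec[P](N)$, one has $\binom{2N}{N}\Mvec[g](N)=\frac{4^N}{\pi}\Mvec[P\ast g](N)$. Evaluating the convolution by \eqref{eq:MellinConv} and using $P(x/y)=\frac{y}{\sqrt{x(y-x)}}$ gives
\[
 (P\ast g)(x)=\frac{1}{\sqrt{x}}\int_x^1 dy\,\frac{\HA_{\sf a_1,\dots,a_k}(y)}{(y-a_0)\sqrt{y-x}}=\frac{\HA_{\sf b_0,\dots,b_k}(x)}{\sqrt{x(x-a_0)}},
\]
where the last equality is precisely \eqref{eq:ConvolutionPattern1}. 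This yields \eqref{eq:Pattern1}.

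I expect the analytic bookkeeping around Lemma~\ref{lem:Convolution1a} to demand the most care: justifying the interchange of the $\ep\to0$ limit with the nested integrations, checking integrability at the endpoints (the innermost letter $f_{\sf b_k}$ carries an integrable $(1-x)^{-1/2}$ singularity at the base point, while $\HA_{\sf a_1,\dots,a_k}(1)=0$ makes the boundary terms genuinely harmless), and confirming that the hypotheses of Lemma~\ref{lem:Convolution1a} persist at every level of the recursion as the base point and the distinguished pole vary. The purely algebraic heart of the argument, namely the telescoping of the $\sf a$-word into the $\sf b$-word, is by contrast routine once the derivative rule and the base-point value are in hand.
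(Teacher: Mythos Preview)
Your proof is correct and follows essentially the same approach as the paper: induction on $k$ via Lemma~\ref{lem:Convolution1a} for \eqref{eq:ConvolutionPattern1}, using $\HA_{\sf a_1,\dots,a_k}(1)=0$ to kill the boundary term and the derivative rule to expose the shorter instance, followed by the Mellin convolution against the binomial weight \eqref{eq:RepresentBinomial} to obtain \eqref{eq:Pattern1}. Your write-up is in fact slightly more careful than the paper's in spelling out the analytic justifications (the $\ep\to0$ limit and endpoint integrability), which the paper leaves implicit.
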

\begin{proof}
 We prove \eqref{eq:ConvolutionPattern1} by induction on $k$. For $k=0$ we easily obtain
 \[
  \int_x^1dt\frac{1}{(t-a_0)\sqrt{t-x}} = \frac{1}{\sqrt{x-a_0}}\int_x^1 dt \frac{1}{\sqrt{(1-t)(t-a_0)}} = \frac{\HA_{\sf b_0}(x)}{\sqrt{x-a_0}}
 \]
 from Lemma~\ref{lem:Convolution1a}. For $k>0$ we assume
that \eqref{eq:ConvolutionPattern1} holds for all values smaller than $k$.
Lemma~\ref{lem:Convolution1a} now yields
 \[
  \int_x^1dt\frac{\HA_{\sf a_1,\dots,a_k}(t)}{(t-a_0)\sqrt{t-x}} = \frac{1}{\sqrt{x-a_0}}\int_x^1 dt \frac{1}{\sqrt{t-a_0}} \int_t^1 du \frac{\HA_{\sf a_2,\dots,a_k}(u)}{(u-a_1)\sqrt{u-t}}.
 \]
 Applying the induction hypothesis to the inner integral we obtain
 \[
  \int_t^1 du \frac{\HA_{\sf a_2,\dots,a_k}(u)}{(u-a_1)\sqrt{u-t}} = \frac{\HA_{\sf b_1,\dots,b_k}(t)}{\sqrt{t-a_1}},
 \]
 which concludes the proof of \eqref{eq:ConvolutionPattern1}.
 Finally, by \eqref{eq:RepresentBinomial} and \eqref{eq:MellinMult} we get
 \[
  \binom{2N}{N}\Mvec\left[\frac{\HA_{\sf a_1,\dots,a_k}(x)}{x-a_0}\right](N) =
\frac{4^N}{\pi}\Mvec\left[\frac{1}{\sqrt{x}}\int_x^1dt\frac{\HA_{\sf
a_1,\dots,a_k}(t)}{(t-a_0)\sqrt{t-x}}\right](N).
 \]
\end{proof}

Similarly, the next theorem provides formulae which facilitate the computation of Mellin representations of expressions of the form
\begin{equation}\label{eq:SumPattern2}
 \frac{1}{(2N+1)\binom{2N}{N}}\sum_{i=1}^Nx_0^i\binom{2i}{i}S_{k_1,\dots,k_m}(x_1,\dots,x_m)(i)
\end{equation}
or having similar pre-factors.

\begin{theorem}\label{thm:Pattern2}
 Let $a_0,\dots,a_k,c<0$ and define $f_{\sf b_i}(x):=\frac{1}{\sqrt{(x-a_i)(x-a_{i+1})}}$ for $i \in \{0,\dots,k-1\}$, $f_{\sf b_k}(x):=\frac{1}{\sqrt{(1-x)(x-a_k)}}$, $f_{\sf v}(x):=\frac{1}{x\sqrt{x-c}}$, and $f_{\sf w}(x):=\frac{1}{(x-a_0)\sqrt{x-c}}$. Then we have
 \begin{eqnarray}
  \int_x^1 dt \frac{\sqrt{t-a_0}f_{\sf b_0}(t)\HA_{\sf b_1,\dots,b_k}(t)}{\sqrt{t-x}} &=& \pi \HA_{\sf a_1,\dots,a_k}(x)\label{eq:ConvolutionPattern2a}\\
  \int_x^1 dt \frac{\HA_{\sf b_0,\dots,b_k}(t)}{(t-c)\sqrt{(t-a_0)(t-x)}} &=& \frac{\pi}{\sqrt{x-c}} \int_x^1 dt \frac{\HA_{\sf a_1,\dots,a_k}(t)}{(t-a_0)\sqrt{t-c}}\label{eq:ConvolutionPattern2}
 \end{eqnarray}
 and
 \begin{eqnarray}
  \frac{1}{(2N+1)\binom{2N}{N}}\Mvec\left[\frac{x\HA_{\sf b_0,\dots,b_k}(x)}{(x-c)\sqrt{x(x-a_0)}}\right](N) &=& \frac{\pi}{2{\cdot}4^N}\Mvec\left[\frac{\HA_{\sf w,a_1,\dots,a_k}(x)}{\sqrt{x-c}}\right](N)\label{eq:Pattern2}\\
  \frac{1}{(N+1)\binom{2N}{N}}\Mvec\left[\frac{x\HA_{\sf b_0,\dots,b_k}(x)}{(x-c)\sqrt{x(x-a_0)}}\right](N) &=& \frac{\pi}{4^N}\Mvec\Bigg[\frac{\HA_{\sf w,a_1,\dots,a_k}(x)}{\sqrt{x-c}}
  \N\\*&&
  -\frac{1}{2}\HA_{\sf v,w,a_1,\dots,a_k}(x)\Bigg](N)\label{eq:Pattern2a}\\
  \frac{1}{N\binom{2N}{N}}\Mvec\left[\frac{x\HA_{\sf b_0,\dots,b_k}(x)}{(x-c)\sqrt{x(x-a_0)}}\right](N) &=& \frac{\pi}{4^N}\Mvec\Bigg[\frac{\HA_{\sf a_0,\dots,a_k}(x)}{x}
  \N\\*&&
  +c\frac{\HA_{\sf w,a_1,\dots,a_k}(x)}{x\sqrt{x-c}}\Bigg](N)\label{eq:Pattern2b}.
 \end{eqnarray}
\end{theorem}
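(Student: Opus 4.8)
\noindent The plan is to establish first the two convolution identities \eqref{eq:ConvolutionPattern2a} and \eqref{eq:ConvolutionPattern2}, and then to read off the Mellin representations \eqref{eq:Pattern2}--\eqref{eq:Pattern2b} from them, exactly in the spirit of the proof of Theorem~\ref{thm:Pattern1}, by feeding the basic binomial representations into the convolution property $\Mvec[f(x)\ast g(x)](N)=\Mvec[f(x)](N)\cdot\Mvec[g(x)](N)$. I would prove \eqref{eq:ConvolutionPattern2a} by induction on $k$. Since $\sqrt{t-a_0}\,f_{\sf b_0}(t)=1/\sqrt{t-a_1}$, its left-hand side is $\int_x^1 dt\,\HA_{\sf b_1,\dots,b_k}(t)/\sqrt{(t-a_1)(t-x)}$, which is of the type covered by Lemma~\ref{lem:Convolution1b} with $a=a_1$ and $f=\HA_{\sf b_1,\dots,b_k}$; the base case $k=0$ is the elementary integral $\int_x^1 dt/\sqrt{(1-t)(t-x)}=\pi=\pi\HA_{\emptyset}(x)$. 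In the induction step I would apply Lemma~\ref{lem:Convolution1b}: the boundary term vanishes as $\ep\to0$ because $\HA_{\sf b_1,\dots,b_k}(1)=0$, and substituting $\tfrac{d}{du}\HA_{\sf b_1,\dots,b_k}(u)=-f_{\sf b_1}(u)\HA_{\sf b_2,\dots,b_k}(u)$ the factor $\sqrt{u-a_1}$ cancels one root of $f_{\sf b_1}$, so that the inner integral becomes $\int_t^1 du\,\HA_{\sf b_2,\dots,b_k}(u)/\sqrt{(u-a_2)(u-t)}=\pi\HA_{\sf a_2,\dots,a_k}(t)$ by the induction hypothesis; the remaining outer integral $\int_x^1 dt\,(t-a_1)^{-1}\pi\HA_{\sf a_2,\dots,a_k}(t)=\pi\HA_{\sf a_1,\dots,a_k}(x)$ closes the induction. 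Then \eqref{eq:ConvolutionPattern2} follows from a single application of Lemma~\ref{lem:Convolution1c} (with $a=a_0$, $f=\HA_{\sf b_0,\dots,b_k}$, the boundary term again dropping out since $\HA_{\sf b_0,\dots,b_k}(1)=0$): after cancelling $\sqrt{u-a_0}$ against $f_{\sf b_0}$, the inner integral is precisely the left-hand side of \eqref{eq:ConvolutionPattern2a} evaluated at base point $t$, hence equals $\pi\HA_{\sf a_1,\dots,a_k}(t)$, and pulling the constant out yields the claimed right-hand side.

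For the Mellin representations I would first record the base identity $1/((2N{+}1)\binom{2N}{N})=\tfrac{1}{2\cdot 4^N}\Mvec[1/\sqrt{1-x}](N)$, which is a one-line computation via $\Bet{N+1,\tfrac12}$ and the duplication formula for $\binom{2N}{N}$. Writing $g(x)$ for the integrand on the left of \eqref{eq:Pattern2}, the convolution property together with \eqref{eq:MellinConv} gives $\tfrac{1}{\sqrt{1-x}}\ast g=\int_x^1 dy\,g(y)/(\sqrt{y}\sqrt{y-x})$, and since $g(y)/\sqrt{y}=\HA_{\sf b_0,\dots,b_k}(y)/((y-c)\sqrt{y-a_0})$ this integral is exactly the left-hand side of \eqref{eq:ConvolutionPattern2}, so \eqref{eq:Pattern2} drops out. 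For \eqref{eq:Pattern2b} I would instead start from \eqref{eq:RepresentInverseBinomial}, so that the kernel is $1/(x\sqrt{1-x})$ and the extra $\sqrt{y}$ produces $\tfrac1x\int_x^1 dy\,y\,\HA_{\sf b_0,\dots,b_k}(y)/((y-c)\sqrt{(y-a_0)(y-x)})$; splitting $y/(y-c)=1+c/(y-c)$ I would identify the first piece with \eqref{eq:ConvolutionPattern2a} for the extended tuple $(a_0,\dots,a_k)$, giving $\pi\HA_{\sf a_0,\dots,a_k}(x)$, and the second piece with $c$ times \eqref{eq:ConvolutionPattern2}, giving the term proportional to $\HA_{\sf w,a_1,\dots,a_k}(x)$. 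Finally, \eqref{eq:Pattern2a} would follow from \eqref{eq:Pattern2} purely algebraically through $\tfrac{1}{N+1}=\tfrac{2}{2N+1}-\tfrac{1}{(N+1)(2N+1)}$, upon noting that multiplication of $\Mvec[\HA_{\sf w,a_1,\dots,a_k}(x)/\sqrt{x-c}](N)$ by $1/(N+1)$ produces $\Mvec[\HA_{\sf v,w,a_1,\dots,a_k}(x)](N)$ through the prepending identity $\Mvec[\HA_{\sf a,\vec{b}}](N)=\tfrac{1}{N+1}\Mvec[x\,a(x)\HA_{\sf \vec{b}}](N)$ applied with the letter $f_{\sf v}$.

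The square-root cancellations and the Beta-function identity are routine; the step I expect to need the most care is the passage $\ep\to0$ in Lemmas~\ref{lem:Convolution1a}--\ref{lem:Convolution1c}, namely checking that the $1/\sqrt{1-t}$ singularity of the final letter $f_{\sf b_k}$ remains integrable and that $\HA_{\sf b_0,\dots,b_k}(1)=0$ annihilates every regularized boundary term, together with the relabelling that lets \eqref{eq:ConvolutionPattern2a} be reused for the $(k{+}1)$-tuple $(a_0,\dots,a_k)$ inside the derivation of \eqref{eq:Pattern2b}.
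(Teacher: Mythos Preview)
Your proposal is correct and follows essentially the same route as the paper: induction on $k$ via Lemma~\ref{lem:Convolution1b} for \eqref{eq:ConvolutionPattern2a}, a single application of Lemma~\ref{lem:Convolution1c} to reduce \eqref{eq:ConvolutionPattern2} to \eqref{eq:ConvolutionPattern2a}, and then the Mellin identities \eqref{eq:Pattern2}--\eqref{eq:Pattern2b} from the basic representations $\tfrac{1}{(2N+1)\binom{2N}{N}}=\tfrac{1}{2\cdot4^N}\Mvec[1/\sqrt{1-x}](N)$ and \eqref{eq:RepresentInverseBinomial} together with the splitting $y/(y-c)=1+c/(y-c)$ and the algebraic relation $\tfrac{1}{N+1}=(2-\tfrac{1}{N+1})\tfrac{1}{2N+1}$. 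The only cosmetic differences are the order of exposition and that the paper obtains the $1/((2N{+}1)\binom{2N}{N})$ representation from the shift $\tfrac{1}{(2N+1)\binom{2N}{N}}=\tfrac{2}{(N+1)\binom{2(N+1)}{N+1}}$ rather than via a direct Beta-function computation.
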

\begin{proof}
 We have $\frac{1}{(2N+1)\binom{2N}{N}}=\frac{2}{(N+1)\binom{2(N+1)}{N+1}}$, hence \eqref{eq:RepresentInverseBinomial} implies
 \[
  \frac{1}{(2N+1)\binom{2N}{N}}=\frac{1}{2{\cdot}4^N}\Mvec\left[\frac{1}{\sqrt{1-x}}\right](N).
 \]
 Then, \eqref{eq:MellinMult} yields
 \[
  \frac{1}{(2N+1)\binom{2N}{N}}\Mvec\left[\frac{x\HA_{\sf b_0,\dots,b_k}(x)}{(x-c)\sqrt{x(x-a_0)}}\right](N) = \frac{1}{2{\cdot}4^N}\Mvec\left[\int_x^1 dt \frac{\HA_{\sf b_0,\dots,b_k}(t)}{(t-c)\sqrt{(t-a_0)(t-x)}}\right](N)
 \]
 implying \eqref{eq:Pattern2} by \eqref{eq:ConvolutionPattern2}. In addition, the identity $\frac{1}{N+1}=\left(2-\frac{1}{N+1}\right)\frac{1}{2N+1}$ allows us to infer \eqref{eq:Pattern2a} from \eqref{eq:Pattern2} and \eqref{eq:MellinMult}. Furthermore, \eqref{eq:RepresentInverseBinomial} and \eqref{eq:MellinMult} imply
 \begin{eqnarray*}
  \frac{1}{N\binom{2N}{N}}\Mvec\left[\frac{x\HA_{\sf b_0,\dots,b_k}(x)}{(x-c)\sqrt{x(x-a_0)}}\right](N) &=& \frac{1}{4^N}\Mvec\left[\int_x^1 dt \frac{t\HA_{\sf b_0,\dots,b_k}(t)}{x(t-c)\sqrt{(t-a_0)(t-x)}}\right](N)\\
  &=&\frac{1}{4^N}\Mvec\left[\frac{1}{x}\int_x^1 dt \frac{\HA_{\sf b_0,\dots,b_k}(t)}{\sqrt{(t-a_0)(t-x)}}
\right.\\*&&\left.
  +\frac{c}{x}\int_x^1 dt \frac{\HA_{\sf b_0,\dots,b_k}(t)}{(t-c)\sqrt{(t-a_0)(t-x)}}\right](N).
 \end{eqnarray*}
 By \eqref{eq:ConvolutionPattern2a} and \eqref{eq:ConvolutionPattern2} this implies \eqref{eq:Pattern2b}. Next, we apply Lemma~\ref{lem:Convolution1c} to obtain
 \[
  \int_x^1 dt \frac{\HA_{\sf b_0,\dots,b_k}(t)}{(t-c)\sqrt{(t-a_0)(t-x)}} = \frac{1}{\sqrt{x-c}} \int_x^1 dt \frac{1}{(t-a_0)\sqrt{t-c}} \int_t^1 du \frac{\sqrt{u-a_0}f_{\sf b_0}(u)\HA_{\sf b_1,\dots,b_k}(u)}{\sqrt{u-t}},
 \]
 which reduces \eqref{eq:ConvolutionPattern2} to \eqref{eq:ConvolutionPattern2a}. Finally, it remains to prove \eqref{eq:ConvolutionPattern2a}. We proceed by induction on $k$. For $k=0$ we directly obtain
 \[
  \int_t^1 du \frac{\sqrt{u-a_0}f_{\sf b_0}(u)}{\sqrt{u-t}} = \int_t^1 du \frac{1}{\sqrt{(1-u)(u-t)}} = \pi.
 \]
 For $k>0$ we assume \eqref{eq:ConvolutionPattern2a} holds for values smaller than $k$. 
Then, Lemma~\ref{lem:Convolution1b} implies
 \[
  \int_t^1 du \frac{\HA_{\sf b_1,\dots,b_k}(u)}{\sqrt{(u-a_1)(u-t)}} = \int_t^1 du \frac{1}{u-a_1}\int_u^1 dv \frac{\sqrt{v-a_1}f_{\sf b_1}(v)\HA_{\sf b_2,\dots,b_k}(v)}{\sqrt{v-t}},
 \]
 where we just apply the induction hypothesis to the inner integral now.
\end{proof}

The pattern shown in the next theorem emerges in Mellin representations of expressions of the form
\begin{equation}\label{eq:SumPattern4}
 \sum_{i=1}^N\frac{x_0^i}{i\binom{2i}{i}}\cdot\binom{2N}{N}S_{k_1,\dots,k_m}(x_1,\dots,x_m)(N).
\end{equation}
Note that expressions of this type are not strictly nested sums, but rather the product of two nested sums. The same will be true for the variation considered afterwards.

\begin{theorem}\label{thm:Pattern4}
 Let $a_0,\dots,a_k<0$ and $c>1$. Define $f_{\sf b_i}(x):=\frac{1}{\sqrt{(x-a_i)(x-a_{i+1})}}$ and $f_{\sf c_i}(x):=\frac{1}{\sqrt{(x-a_ic)(x-a_{i+1}c)}}$ for $i \in \{0,\dots,k-1\}$ as well as $f_{\sf b_k}(x):=\frac{1}{\sqrt{(1-x)(x-a_k)}}$ and $f_{\sf c_k}(x):=\frac{1}{\sqrt{(c-x)(x-a_kc)}}$. Define further $f_{\sf d_i}(x):=\frac{1}{(x-a_i)\sqrt{x(x-a_ic)}}$ for $i \in \{0,\dots,k\}$. Then we have
 \begin{equation}\label{eq:ConvolutionPattern4}
  \int_x^1dt\frac{\HA_{\sf b_0,\dots,b_k}(t)}{(x-ct)\sqrt{(t-a_0)(t-x)}} = \frac{\pi}{\sqrt{x(x-a_0c)}}\left(\sum_{i=0}^ka_i\HA_{\sf c_0,\dots,c_{i-1},d_i,a_{i+1},\dots,a_k}(x)-\frac{\HA_{\sf c_0,\dots,c_k}(x)}{\sqrt{c-1}}\right)
 \end{equation}
 and
 \begin{eqnarray}
  \sum_{i=1}^N\frac{(\frac{4}{c})^i}{i\binom{2i}{i}}\cdot\Mvec\left[\frac{\HA_{\sf b_0,\dots,b_k}(x)}{\sqrt{x(x-a_0)}}\right](N) &=& \frac{\pi}{c^N}\Mvec\left[\frac{\sum\limits_{i=0}^ka_i\HA_{\sf c_0,\dots,c_{i-1},d_i,a_{i+1},\dots,a_k}(x)-\frac{1}{\sqrt{c-1}}\HA_{\sf c_0,\dots,c_k}(x)}{\sqrt{x(x-a_0c)}}\right](N)\N\\*
  &&+\sum_{i=1}^\infty\frac{(\frac{4}{c})^i}{i\binom{2i}{i}}\cdot\Mvec\left[\frac{\HA_{\sf b_0,\dots,b_k}(x)}{\sqrt{x(x-a_0)}}\right](N)\label{eq:Pattern4}.
 \end{eqnarray}
\end{theorem}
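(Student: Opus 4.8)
The statement splits into the pointwise convolution identity \eqref{eq:ConvolutionPattern4} and the Mellin identity \eqref{eq:Pattern4}, and the plan is to establish the former first, since the latter then drops out by a short Mellin-space manipulation. For \eqref{eq:ConvolutionPattern4} I would argue by induction on $k$, in the same spirit as Theorems~\ref{thm:Pattern1} and~\ref{thm:Pattern2}, with Lemma~\ref{lem:Convolution4} as the engine. Throughout, the upper limit $1$ is read as the limit $\ep\to0$ of $1-\ep$; all boundary terms generated by Lemma~\ref{lem:Convolution4} at $t=1-\ep$ carry the factor $\HA_{\sf b_0,\dots,b_k}(1-\ep)$, which tends to $0$ since the iterated integral vanishes at its base point, so these terms drop out and need not be tracked. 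I would note at the outset that the hypotheses $a_0<x$, $c>1$, $a_0c<x$ of Lemma~\ref{lem:Convolution4} hold because $a_i<0<x<1<c$.

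For the base case $k=0$ I would apply Lemma~\ref{lem:Convolution4} with $f=\HA_{\sf b_0}$ and $a=a_0$, using $f'=-f_{\sf b_0}$ so that $\sqrt{u-a_0}\,f'(u)=-1/\sqrt{1-u}$. The two inner integrals then collapse to the elementary evaluations $\int_t^1 du/\sqrt{(1-u)(u-t)}=\pi$ and $\int_t^1 du/\big((t-cu)\sqrt{(1-u)(u-t)}\big)=-\pi/\sqrt{t(c-1)(c-t)}$, the second a standard root integral whose only pole $t/c$ lies below the integration range precisely because $c>1$. After multiplication by $\sqrt{t/(t-a_0c)}$ the two contributions become $\pi\,a_0 f_{\sf d_0}(t)$ and $-\pi f_{\sf c_0}(t)/\sqrt{c-1}$, so that together with the prefactor $1/\sqrt{x(x-a_0c)}$ of Lemma~\ref{lem:Convolution4} the outer integration $\int_x^1 dt$ reproduces the right-hand side $\frac{\pi}{\sqrt{x(x-a_0c)}}\big(a_0\HA_{\sf d_0}(x)-\HA_{\sf c_0}(x)/\sqrt{c-1}\big)$ of \eqref{eq:ConvolutionPattern4} for $k=0$.

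For the inductive step I would again invoke Lemma~\ref{lem:Convolution4} with $a=a_0$ and $f=\HA_{\sf b_0,\dots,b_k}$, now with $\sqrt{u-a_0}\,f'(u)=-\HA_{\sf b_1,\dots,b_k}(u)/\sqrt{u-a_1}$. The inner integral weighted only by $1/\sqrt{u-t}$ is evaluated by \eqref{eq:ConvolutionPattern2a} to $-\pi\HA_{\sf a_1,\dots,a_k}(t)$, while the inner integral additionally carrying $1/(t-cu)$ is, after this substitution, exactly the left-hand side of \eqref{eq:ConvolutionPattern4} for the shortened word $b_1,\dots,b_k$ with $a_0$ replaced by $a_1$, so the induction hypothesis rewrites it over the $c$- and $d$-alphabets. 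The decisive observation is the pair of algebraic collapses $\sqrt{t/(t-a_0c)}\cdot1/\sqrt{t(t-a_1c)}=f_{\sf c_0}(t)$ and $\sqrt{t/(t-a_0c)}\cdot a_0/(t(t-a_0))=a_0 f_{\sf d_0}(t)$: the first shows that integrating $\int_x^1 dt$ against the induction-hypothesis output merely prepends the letter $c_0$ to each of its words, shifting the running sum from $\sum_{i=1}^k$ to its $c_0$-prefixed form and carrying along the $-\HA_{\sf c_0,\dots,c_k}(x)/\sqrt{c-1}$ tail, while the second turns the $\HA_{\sf a_1,\dots,a_k}$ contribution into the new leading summand $a_0\HA_{\sf d_0,a_1,\dots,a_k}(x)$. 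Reassembling, the prefactor $1/\sqrt{x(x-a_0c)}$ and the common factor $\pi$ combine to $\pi/\sqrt{x(x-a_0c)}$ and one recovers exactly the full sum $\sum_{i=0}^k$ of \eqref{eq:ConvolutionPattern4}.

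With \eqref{eq:ConvolutionPattern4} in hand, \eqref{eq:Pattern4} follows quickly. Using \eqref{eq:RepresentInverseBinomial} I would write the summand as $\frac{(4/c)^i}{i\binom{2i}{i}}=\frac{1}{c^i}\Mvec[1/(x\sqrt{1-x})](i)$ and apply the summation formula \eqref{eq:MellinSummation} with $c$ replaced by $1/c$, obtaining $\sum_{i=1}^N\frac{(4/c)^i}{i\binom{2i}{i}}=\frac{1}{c^N}\Mvec[1/((x-c)\sqrt{1-x})](N)+\sum_{i=1}^\infty\frac{(4/c)^i}{i\binom{2i}{i}}$, where the $N$-independent constant is identified with the convergent infinite sum since $c>1$. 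Multiplying through by the fixed factor $\Mvec[\HA_{\sf b_0,\dots,b_k}(x)/\sqrt{x(x-a_0)}](N)$ and turning the product of Mellin transforms into a convolution via \eqref{eq:MellinMult}, the convolution $\frac{\HA_{\sf b_0,\dots,b_k}(x)}{\sqrt{x(x-a_0)}}\ast\frac{1}{(x-c)\sqrt{1-x}}$ reduces by \eqref{eq:MellinConv} precisely to the left-hand side of \eqref{eq:ConvolutionPattern4}, and inserting its closed form yields \eqref{eq:Pattern4}. The main obstacle is the inductive step of \eqref{eq:ConvolutionPattern4}: aligning the shifted indices so that the hypothesis applies to the $1/(t-cu)$-integral, and verifying the two collapses that let the outer integration build up the $c$- and $d$-letters; by comparison the base-case root integral and the closing Mellin manipulation are routine.
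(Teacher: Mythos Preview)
Your proposal is correct and follows essentially the same route as the paper: induction on $k$ with Lemma~\ref{lem:Convolution4} as the workhorse, the base case handled via the two elementary integrals $\int_t^1 du/\sqrt{(1-u)(u-t)}=\pi$ and $\int_t^1 du/\big((t-cu)\sqrt{(1-u)(u-t)}\big)=-\pi/\sqrt{(c-1)t(c-t)}$, the inductive step combining the induction hypothesis on the $1/(t-cu)$ inner integral with \eqref{eq:ConvolutionPattern2a} on the other, and the two algebraic collapses $\sqrt{t/(t-a_0c)}\cdot 1/\sqrt{t(t-a_1c)}=f_{\sf c_0}(t)$ and $\sqrt{t/(t-a_0c)}\cdot a_0/(t(t-a_0))=a_0 f_{\sf d_0}(t)$ doing the bookkeeping. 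The derivation of \eqref{eq:Pattern4} from \eqref{eq:RepresentInverseBinomial}, \eqref{eq:MellinSummation} and \eqref{eq:MellinMult}/\eqref{eq:MellinConv} also matches the paper exactly.
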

\begin{proof}
 We prove \eqref{eq:ConvolutionPattern4} by induction on $k$. For $k=0$ Lemma~\ref{lem:Convolution4} yields
 \begin{eqnarray*}
  \int_x^1dt\frac{\HA_{\sf b_0}(t)}{(x-ct)\sqrt{(t-a_0)(t-x)}} &=& \frac{1}{\sqrt{x(x-a_0c)}} \int_x^1 dt \sqrt{\frac{t}{t-a_0c}} \left(\int_t^1 du \frac{\sqrt{u-a_0}f_{\sf b_0}(u)}{(t-cu)\sqrt{u-t}}+\right.\\*
  && \left.+\frac{a_0}{t(t-a_0)}\int_t^1 du \frac{\sqrt{u-a_0}f_{\sf b_0}(u)}{\sqrt{u-t}}\right)\\
  &=& \frac{1}{\sqrt{x(x-a_0c)}} \int_x^1 dt \sqrt{\frac{t}{t-a_0c}} \Bigg(-\frac{\pi}{\sqrt{c-1}\sqrt{t(c-t)}}+\\*
  &&+\frac{a_0\pi}{t(t-a_0)}\Bigg)\\
  &=& \frac{\pi}{\sqrt{x(x-a_0c)}}\left(a_0\HA_{\sf d_0}(x)-\frac{\HA_{\sf c_0}(x)}{\sqrt{c-1}}\right).
 \end{eqnarray*}
 For $k>0$ we assume that \eqref{eq:ConvolutionPattern4} holds for values smaller than $k$. 
Using Lemma~\ref{lem:Convolution4}, the induction hypothesis, and \eqref{eq:ConvolutionPattern2a} we obtain
 \begin{eqnarray*}
  \int_x^1dt\frac{\HA_{\sf b_0,\dots,b_k}(t)}{(x-ct)\sqrt{(t-a_0)(t-x)}} &=& \frac{1}{\sqrt{x(x-a_0c)}} \int_x^1 dt \sqrt{\frac{t}{t-a_0c}} \cdot\\*
  &&\cdot\Bigg(\int_t^1 du \frac{\HA_{\sf b_1,\dots,b_k}(u)}{(t-cu)\sqrt{(u-a_1)(u-t)}}+\\*
  &&+\frac{a_0}{t(t-a_0)}\int_t^1 du \frac{\HA_{\sf b_1,\dots,b_k}(u)}{\sqrt{(u-a_1)(u-t)}}\Bigg)\\
  &=& \frac{1}{\sqrt{x(x-a_0c)}} \int_x^1 dt \sqrt{\frac{t}{t-a_0c}} \left(\frac{\pi}{\sqrt{t(t-a_1c)}}\cdot\right.\\*
  && \left.\cdot\left(\sum_{i=1}^ka_i\HA_{\sf c_1,\dots,c_{i-1},d_i,a_{i+1},\dots,a_k}(t)-\frac{\HA_{\sf c_1,\dots,c_k}(t)}{\sqrt{c-1}}\right)+\right.\\*
  &&\left.+\frac{a_0}{t(t-a_0)}\pi\HA_{\sf a_1,\dots,a_k}(t)\right)\\
  &=& \frac{\pi}{\sqrt{x(x-a_0c)}} \Bigg(\sum_{i=1}^ka_i\HA_{\sf c_0,\dots,c_{i-1},d_i,a_{i+1},\dots,a_k}(x)-\frac{\HA_{\sf c_0,\dots,c_k}(x)}{\sqrt{c-1}}\\*
  &&+a_0\HA_{\sf d_0,a_1,\dots,a_k}(x)\Bigg).
 \end{eqnarray*}
 This establishes \eqref{eq:ConvolutionPattern4}. Next, from \eqref{eq:RepresentInverseBinomial} and \eqref{eq:MellinSummation} we obtain
 \[
  \sum_{i=1}^N\frac{(\frac{4}{c})^i}{i\binom{2i}{i}} = c^{-N}\Mvec\left[\frac{1}{(x-c)\sqrt{1-x}}\right](N)+\sum_{i=1}^\infty\frac{(\frac{4}{c})^i}{i\binom{2i}{i}}.
 \]
 Hence by \eqref{eq:MellinMult} we have
 \begin{eqnarray*}
  \sum_{i=1}^N\frac{(\frac{4}{c})^i}{i\binom{2i}{i}}\cdot\Mvec\left[\frac{\HA_{\sf b_0,\dots,b_k}(x)}{\sqrt{x(x-a_0)}}\right](N) &=& c^{-N}\Mvec\left[\int_x^1dt\frac{\HA_{\sf b_0,\dots,b_k}(t)}{(x-ct)\sqrt{(t-a_0)(t-x)}}\right](N)\\*
  && +\sum_{i=1}^\infty\frac{(\frac{4}{c})^i}{i\binom{2i}{i}}\Mvec\left[\frac{\HA_{\sf b_0,\dots,b_k}(x)}{\sqrt{x(x-a_0)}}\right](N).
 \end{eqnarray*}
 Finally, this implies \eqref{eq:Pattern4} by virtue of \eqref{eq:ConvolutionPattern4}.
\end{proof}
Note that we can write the infinite sum above in terms of $\HA$ again.
\begin{equation}
 \sum_{i=1}^\infty\frac{(\frac{4}{c})^i}{i\binom{2i}{i}} = \frac{\mathrm{H}_{\sf w_1}(\frac{1}{c})}{\sqrt{c-1}} = \frac{\pi-\HA_{\sf w_1}(\frac{1}{c})}{\sqrt{c-1}}
\end{equation}

As a variation of the previous type of expressions we also deal with expressions of the form
\begin{equation}\label{eq:SumPattern4a}
 \sum_{i=1}^N\frac{x_0^i}{i^2\binom{2i}{i}}\cdot\binom{2N}{N}S_{k_1,\dots,k_m}(x_1,\dots,x_m)(N).
\end{equation}

\begin{theorem}\label{thm:Pattern4a}
 Let $a_0,\dots,a_k<0$ and $c>1$. Define $f_{\sf b_i}(x):=\frac{1}{\sqrt{(x-a_i)(x-a_{i+1})}}$ and $f_{\sf c_i}(x):=\frac{1}{\sqrt{(x-a_ic)(x-a_{i+1}c)}}$ for $i \in \{0,\dots,k-1\}$ as well as $f_{\sf b_k}(x):=\frac{1}{\sqrt{(1-x)(x-a_k)}}$ and $f_{\sf c_k}(x):=\frac{1}{\sqrt{(c-x)(x-a_kc)}}$. Define further $f_{\sf d_i}(x):=\frac{1}{\sqrt{x(x-a_ic)}}$ for $i \in \{0,\dots,k\}$ and $f_{\sf w}(x):=\frac{1}{\sqrt{x(c-x)}}$. Then we have
 \begin{eqnarray}\label{eq:ConvolutionPattern4a}
  \int_x^1dt\frac{\HA_{\sf w_3}(\frac{x}{t})\HA_{\sf b_0,\dots,b_k}(t)}{(x-ct)\sqrt{t(t-a_0)}} &=& \frac{\pi}{\sqrt{x(x-a_0c)}}\left(\sum_{i=0}^k\HA_{\sf c_0,\dots,c_{i-1},d_i,a_i,\dots,a_k}(x)-\right.\\*
  &&\left.-\HA_{\sf c_0,\dots,c_k,w}(x)-\sum_{i=0}^k\HA_{\sf c_0,\dots,c_{i-1},d_i,0,a_{i+1},\dots,a_k}(x)\right)\N\\
  \int_x^1dt\frac{t\HA_{\sf w_3}(\frac{x}{t})\HA_{\sf b_0,\dots,b_k}(t)}{(x-ct)\sqrt{t(t-a_0)}} &=& \frac{x}{c}\int_x^1dt\frac{\HA_{\sf w_3}(\frac{x}{t})\HA_{\sf b_0,\dots,b_k}(t)}{(x-ct)\sqrt{t(t-a_0)}}-\frac{\pi}{c}\HA_{\sf 0,a_0,\dots,a_k}(x)\label{eq:ConvolutionPattern4b}
 \end{eqnarray}
 and
 \begin{eqnarray}
  \sum_{i=1}^N\frac{(\frac{4}{c})^i}{i^2\binom{2i}{i}}\cdot\Mvec\left[\frac{\HA_{\sf b_0,\dots,b_k}(x)}{\sqrt{x(x-a_0)}}\right](N) &=& \frac{\pi}{c^N}\Mvec\left[\frac{1}{\sqrt{x(x-a_0c)}}\left(\sum_{i=0}^k\HA_{\sf c_0,\dots,c_{i-1},d_i,a_i,\dots,a_k}(x)-\right.\right.\N\\*
  &&\left.\left.-\HA_{\sf c_0,\dots,c_k,w}(x)-\sum_{i=0}^k\HA_{\sf c_0,\dots,c_{i-1},d_i,0,a_{i+1},\dots,a_k}(x)\right)\right](N)\N\\*
  &&+\sum_{i=1}^\infty\frac{(\frac{4}{c})^i}{i^2\binom{2i}{i}}\cdot\Mvec\left[\frac{\HA_{\sf b_0,\dots,b_k}(x)}{\sqrt{x(x-a_0)}}\right](N)\label{eq:Pattern4a}.
 \end{eqnarray}
\end{theorem}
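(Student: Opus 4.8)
The plan is to prove the two convolution identities \eqref{eq:ConvolutionPattern4a} and \eqref{eq:ConvolutionPattern4b} first and then read off the Mellin identity \eqref{eq:Pattern4a} exactly as in the proof of Theorem~\ref{thm:Pattern4}. I would begin with \eqref{eq:ConvolutionPattern4b}, since it does not depend on \eqref{eq:ConvolutionPattern4a}: specializing the general identity \eqref{eq:Convolution41ii} to $f=\HA_{\sf b_0,\dots,b_k}$ reduces the claim to evaluating the nested integral $\int_x^1\frac{dt}{t}\int_t^1 du\,\frac{\HA_{\sf b_0,\dots,b_k}(u)}{\sqrt{(u-a_0)(u-t)}}$. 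Applying Lemma~\ref{lem:Convolution1b} with $f'=-f_{\sf b_0}\HA_{\sf b_1,\dots,b_k}$ and then \eqref{eq:ConvolutionPattern2a} shows the inner integral equals $\pi\HA_{\sf a_0,\dots,a_k}(t)$, and the remaining $\int_x^1\frac{dt}{t}(\cdot)$ produces $\pi\HA_{\sf 0,a_0,\dots,a_k}(x)$, which is \eqref{eq:ConvolutionPattern4b}.

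Next I would prove \eqref{eq:ConvolutionPattern4a} by induction on $k$ via Lemma~\ref{lem:Convolution41i} applied with $f=\HA_{\sf b_0,\dots,b_k}$ and $a=a_0$. Since $\HA_{\sf b_0,\dots,b_k}(1)=0$, the boundary term vanishes as $\ep\to0$. The term carrying $f'=-f_{\sf b_0}\HA_{\sf b_1,\dots,b_k}$ simplifies, using $\sqrt{u(u-a_0)}f_{\sf b_0}(u)=u/\sqrt{u(u-a_1)}$ (for $k\ge1$; for $k=0$ one gets $u/\sqrt{u(1-u)}$), into a $t\HA_{\sf w_3}$-type integral, while the last term equals $\pi\HA_{\sf a_0,\dots,a_k}(t)$ by Lemma~\ref{lem:Convolution1b} and \eqref{eq:ConvolutionPattern2a}. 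In the base case $k=0$ these two inner integrals are handled directly by \eqref{eq:Convolution41iiBase} and by $\pi\HA_{\sf a_0}(t)$; multiplying by $1/\sqrt{t(t-a_0c)}=f_{\sf d_0}(t)$, integrating in $t$, and identifying the prefactors with $f_{\sf d_0}$, $f_{\sf c_0}$ and $f_{\sf w}$ yields $\frac{\pi}{\sqrt{x(x-a_0c)}}\big(\HA_{\sf d_0,a_0}(x)-\HA_{\sf c_0,w}(x)-\HA_{\sf d_0,0}(x)\big)$, the $k=0$ instance of \eqref{eq:ConvolutionPattern4a}.

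The inductive step is where the two identities couple. The $t\HA_{\sf w_3}$-type derivative term is an integral of the shape appearing on the left of \eqref{eq:ConvolutionPattern4b} at level $k-1$ (with $x\to t$, parameters $a_1,\dots,a_k$, word $b_1,\dots,b_k$); rewriting it by \eqref{eq:ConvolutionPattern4b} converts it into $\frac{t}{c}$ times the plain $\HA_{\sf w_3}$-type integral at level $k-1$ minus $\frac{\pi}{c}\HA_{\sf 0,a_1,\dots,a_k}(t)$, and the plain integral is then supplied by the induction hypothesis \eqref{eq:ConvolutionPattern4a}. Carrying the outer weight $f_{\sf d_0}$ and the factor $\sqrt{t/(t-a_0c)}$ through the $t$-integration, together with the $\pi\HA_{\sf a_0,\dots,a_k}(t)$ contribution, should telescope into the stated sums over $\HA_{\sf c_0,\dots,c_{i-1},d_i,a_i,\dots,a_k}$ and $\HA_{\sf c_0,\dots,c_{i-1},d_i,0,a_{i+1},\dots,a_k}$ together with the single term $\HA_{\sf c_0,\dots,c_k,w}$, completing the induction.

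For the Mellin identity \eqref{eq:Pattern4a} I would first note, by the shift relations \eqref{eq:MellinShifts} and \eqref{eq:RepresentInverseBinomial}, that $\frac{1}{i^2\binom{2i}{i}}=\frac{1}{4^i}\Mvec[\HA_{\sf w_3}(x)/x](i)$, so that the summation formula \eqref{eq:MellinSummation} gives $\sum_{i=1}^N\frac{(4/c)^i}{i^2\binom{2i}{i}}=c^{-N}\Mvec[\HA_{\sf w_3}(x)/(x-c)](N)+\sum_{i=1}^\infty\frac{(4/c)^i}{i^2\binom{2i}{i}}$. Multiplying by $\Mvec[\HA_{\sf b_0,\dots,b_k}(x)/\sqrt{x(x-a_0)}](N)$ and invoking \eqref{eq:MellinMult} and \eqref{eq:MellinConv}, the $c^{-N}$ contribution becomes the Mellin transform of the convolution $\int_x^1 dt\,\frac{\HA_{\sf w_3}(x/t)\HA_{\sf b_0,\dots,b_k}(t)}{(x-ct)\sqrt{t(t-a_0)}}$, so that \eqref{eq:ConvolutionPattern4a} delivers \eqref{eq:Pattern4a}. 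The main obstacle is the coupled induction above: tracking the $1-\ep$ regularization and the square-root prefactors so that the derivative term reduces cleanly to the level-$(k-1)$ objects, and confirming that the emerging letters assemble into exactly the nested $\HA$-words claimed rather than a shuffle-equivalent rearrangement, is the most delicate bookkeeping.
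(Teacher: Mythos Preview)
Your proposal is correct and follows essentially the same route as the paper: first obtain \eqref{eq:ConvolutionPattern4b} from \eqref{eq:Convolution41ii} together with \eqref{eq:ConvolutionPattern2a}, then prove \eqref{eq:ConvolutionPattern4a} by induction on $k$ via Lemma~\ref{lem:Convolution41i}, invoking \eqref{eq:ConvolutionPattern4b} at level $k-1$ in the step and \eqref{eq:Convolution41iiBase} in the base case, and finally deduce \eqref{eq:Pattern4a} from the Mellin representation of $\sum_{i=1}^N\frac{(4/c)^i}{i^2\binom{2i}{i}}$ and the convolution formula. The bookkeeping you flag as delicate does work out cleanly: the factor $c$ from Lemma~\ref{lem:Convolution41i} cancels the $1/c$ in \eqref{eq:ConvolutionPattern4b}, and the extra $t$ combines with the outer weight $f_{\sf d_0}(t)=1/\sqrt{t(t-a_0c)}$ and the prefactor $1/\sqrt{t(t-a_1c)}$ from the induction hypothesis to produce $f_{\sf c_0}(t)$, so the words assemble exactly as stated without any shuffle rearrangement.
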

\begin{proof}
 Equation \eqref{eq:ConvolutionPattern4b} immediately follows from \eqref{eq:Convolution41ii} and \eqref{eq:ConvolutionPattern2a}. We prove \eqref{eq:ConvolutionPattern4a} by induction on $k$. If $k=0$, then applying Lemma~\ref{lem:Convolution41i} and \eqref{eq:ConvolutionPattern2a} yields
 \begin{eqnarray*}
  \int_x^1dt\frac{\HA_{\sf w_3}(\frac{x}{t})\HA_{\sf b_0}(t)}{(x-ct)\sqrt{t(t-a_0)}} &=& \frac{1}{\sqrt{x(x-a_0c)}} \int_x^1 dt \frac{1}{\sqrt{t(t-a_0c)}} \bigg(c\int_t^1 du \frac{u\HA_{\sf w_3}(\frac{t}{u})}{(t-cu)\sqrt{u(1-u)}}\\*
  &&+\pi\HA_{\sf a_0}(t)\bigg),
 \end{eqnarray*}
 from which, by \eqref{eq:Convolution41iiBase}, we obtain the right hand side of \eqref{eq:ConvolutionPattern4a}
 \[
  \frac{\pi}{\sqrt{x(x-a_0c)}} \left(-\HA_{\sf d_0,0}(x)-\HA_{\sf c_0,w}(x)+\HA_{\sf d_0,a_0}(x)\right).
 \]
 If $k>0$, then by Lemma~\ref{lem:Convolution41i} and \eqref{eq:ConvolutionPattern2a} we rewrite the left hand side of \eqref{eq:ConvolutionPattern4a} as
 \[
  \frac{1}{\sqrt{x(x-a_0c)}} \int_x^1 dt \frac{1}{\sqrt{t(t-a_0c)}} \left(c\int_t^1 du \frac{u\HA_{\sf w_3}(\frac{t}{u})\HA_{\sf b_1,\dots,b_k}(u)}{(t-cu)\sqrt{u(u-a_1)}}
  +\pi\HA_{\sf a_0,\dots,a_k}(t)\right),
 \]
 which in turn, by \eqref{eq:ConvolutionPattern4b}, equals
 \[
  \frac{1}{\sqrt{x(x-a_0c)}} \int_x^1 dt \frac{1}{\sqrt{t(t-a_0c)}} \left(t\int_t^1 du \frac{\HA_{\sf w_3}(\frac{t}{u})\HA_{\sf b_1,\dots,b_k}(u)}{(t-cu)\sqrt{u(u-a_1)}}-\pi\HA_{\sf 0,a_1,\dots,a_k}(t)
  +\pi\HA_{\sf a_0,\dots,a_k}(t)\right).
 \]
 Altogether, using the induction hypothesis we obtain
 \begin{eqnarray*}
  \int_x^1dt\frac{\HA_{\sf w_3}(\frac{x}{t})\HA_{\sf b_0,\dots,b_k}(t)}{(x-ct)\sqrt{t(t-a_0)}} &=& \frac{\pi}{\sqrt{x(x-a_0c)}} \int_x^1 dt \frac{1}{\sqrt{t(t-a_0c)}} \Bigg(\sqrt{\frac{t}{t-a_1c}}\cdot\\*
  &&\Bigg(\sum_{i=1}^k\HA_{\sf c_1,\dots,c_{i-1},d_i,a_i,\dots,a_k}(t)-\HA_{\sf c_1,\dots,c_k,w}(t)-\\*
  &&-\sum_{i=1}^k\HA_{\sf c_1,\dots,c_{i-1},d_i,0,a_{i+1},\dots,a_k}(t)\Bigg)-\HA_{\sf 0,a_1,\dots,a_k}(t)+\HA_{\sf a_0,\dots,a_k}(t)\Bigg)\\
  &=& \frac{\pi}{\sqrt{x(x-a_0c)}}\Bigg(\sum_{i=1}^k\HA_{\sf c_0,\dots,c_{i-1},d_i,a_i,\dots,a_k}(x)-\HA_{\sf c_0,\dots,c_k,w}(x)-\\*
  &&-\sum_{i=1}^k\HA_{\sf c_0,\dots,c_{i-1},d_i,0,a_{i+1},\dots,a_k}(x)-\HA_{\sf d_0,0,a_1,\dots,a_k}(x)+\HA_{\sf d_0,a_0,\dots,a_k}(x)\Bigg).
 \end{eqnarray*}
 This concludes the proof of \eqref{eq:ConvolutionPattern4a}. Finally, from \eqref{eq:RepresentInverse} and \eqref{eq:RepresentInverseBinomial} by \eqref{eq:MellinMult} and \eqref{eq:MellinSummation} we obtain
 \[
  \sum_{i=1}^N\frac{(\frac{4}{c})^i}{i^2\binom{2i}{i}} = c^{-N}\Mvec\left[\frac{\HA_{\sf w_3}(x)}{x-c}\right](N)+\sum_{i=1}^\infty\frac{(\frac{4}{c})^i}{i^2\binom{2i}{i}}.
 \]
 Hence by \eqref{eq:MellinMult} we have
 \begin{eqnarray*}
  \sum_{i=1}^N\frac{(\frac{4}{c})^i}{i^2\binom{2i}{i}}\cdot\Mvec\left[\frac{\HA_{\sf b_0,\dots,b_k}(x)}{\sqrt{x(x-a_0)}}\right](N) &=& c^{-N}\Mvec\left[\int_x^1dt\frac{\HA_{\sf w_3}(\frac{x}{t})\HA_{\sf b_0,\dots,b_k}(t)}{(x-ct)\sqrt{t(t-a_0)}}\right](N)\\*
  && +\sum_{i=1}^\infty\frac{(\frac{4}{c})^i}{i^2\binom{2i}{i}}\Mvec\left[\frac{\HA_{\sf b_0,\dots,b_k}(x)}{\sqrt{x(x-a_0)}}\right](N),
 \end{eqnarray*}
 which implies \eqref{eq:Pattern4a} by virtue of \eqref{eq:ConvolutionPattern4a}.
\end{proof}
Note that we can write the infinite sum above in terms of $\HA$ again.
\begin{equation}
 \sum_{i=1}^\infty\frac{(\frac{4}{c})^i}{i^2\binom{2i}{i}} = \frac{\mathrm{H}_{\sf w_1}(\frac{1}{c})^2}{2} = \frac{\left(\pi-\HA_{\sf w_1}(\frac{1}{c})\right)^2}{2}
\end{equation}

The following theorem allows to set up Mellin representations of expressions of the form
\begin{equation}\label{eq:SumPattern3}
 \frac{1}{N\binom{2N}{N}}S_{k_1,\dots,k_m}(x_1,\dots,x_m)(N).
\end{equation}

\begin{theorem}\label{thm:Pattern3}
 Let $a_0,\dots,a_k<0$ and with $e_0,e_1 \in \{0,1\}$ define $f_{\sf b_i^{e_0,e_1}}(x):=\frac{1}{x\sqrt{(x-a_i)^{e_0}(x-a_{i+1})^{e_1}}}$ for $i \in \{0,\dots,k-1\}$ and $f_{\sf b_k^{e_0}}(x):=\frac{1}{x\sqrt{(1-x)(x-a_k)^{e_0}}}$. Then we have
 \begin{equation}\label{eq:ConvolutionPattern3}
  \int_x^1dt\frac{\sqrt{t}\HA_{\sf a_1,\dots,a_k}(t)}{(t-a_0)\sqrt{t-x}} = \sum_{e_0,\dots,e_k=0}^1\frac{\prod_{i=0}^ka_i^{e_i}}{\sqrt{(x-a_0)^{e_0}}}\HA_{\sf b_0^{e_0,e_1},b_1^{e_1,e_2},\dots,b_k^{e_k}}(x)
 \end{equation}
 and
 \begin{equation}\label{eq:Pattern3}
  \frac{1}{N\binom{2N}{N}}\Mvec\left[\frac{\HA_{\sf a_1,\dots,a_k}(x)}{x-a_0}\right](N) = \frac{1}{4^N}\Mvec\left[\frac{1}{x}\sum_{e_0,\dots,e_k=0}^1\frac{\prod_{i=0}^ka_i^{e_i}}{\sqrt{(x-a_0)^{e_0}}}\HA_{\sf b_0^{e_0,e_1},b_1^{e_1,e_2},\dots,b_k^{e_k}}(x)\right](N).
 \end{equation}
\end{theorem}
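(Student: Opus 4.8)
The plan is to prove the convolution identity \eqref{eq:ConvolutionPattern3} first, by induction on $k$, and then to deduce the Mellin identity \eqref{eq:Pattern3} from it in the same way as in the preceding theorems, via the basic representation \eqref{eq:RepresentInverseBinomial} and the multiplication property \eqref{eq:MellinMult}. The decisive tool throughout is the explicit formula \eqref{eq:Convolution3}, which was tailored precisely to integrands of the shape $\frac{\sqrt{t}f(t)}{(t-c)\sqrt{t-x}}$ occurring on the left-hand side of \eqref{eq:ConvolutionPattern3}; this is why the factor $\sqrt{t}$ (rather than the plain numerator of Theorem~\ref{thm:Pattern1}) is exactly what the induction can digest.

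For the base case $k=0$ the word $\HA_{\sf a_1,\dots,a_k}$ is empty, and I would apply \eqref{eq:Convolution3} with $f\equiv1$ and $c=a_0$. Since $f'=0$ and $f(1)=1$, only the boundary contributions survive, and after sending $\ep\to0$ the two resulting integrals are $\int_x^1\frac{dt}{t\sqrt{1-t}}=\HA_{\sf b_0^0}(x)$ and $\frac{a_0}{\sqrt{x-a_0}}\int_x^1\frac{dt}{t\sqrt{(1-t)(t-a_0)}}=\frac{a_0}{\sqrt{x-a_0}}\HA_{\sf b_0^1}(x)$. These are exactly the $e_0=0$ and $e_0=1$ summands on the right-hand side of \eqref{eq:ConvolutionPattern3} for $k=0$.

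For the induction step I would take $f(t)=\HA_{\sf a_1,\dots,a_k}(t)$ and $c=a_0$ in \eqref{eq:Convolution3}. Because all letters $a_i<0$ are regular at the base point, $f(1)=0$, so both boundary terms drop out, while $f'(t)=-\frac{\HA_{\sf a_2,\dots,a_k}(t)}{t-a_1}$. The two remaining double integrals then contain, as their inner integral, $\int_t^1 du\,\frac{\sqrt{u}\HA_{\sf a_2,\dots,a_k}(u)}{(u-a_1)\sqrt{u-t}}$, which is an instance of the left-hand side of \eqref{eq:ConvolutionPattern3} with one fewer letter, outer parameter shifted to $a_1$, and argument $t$. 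The induction hypothesis rewrites it as $\sum_{e_1,\dots,e_k=0}^1\frac{\prod_{i=1}^k a_i^{e_i}}{\sqrt{(t-a_1)^{e_1}}}\HA_{\sf b_1^{e_1,e_2},\dots,b_k^{e_k}}(t)$, using that the letters $b_i$ for $i\ge1$ depend only on $a_i,a_{i+1}$ and are hence unchanged by the index shift. Substituting this back, the outer integration $\int_x^1\frac{dt}{t}(\cdots)$ merges the factor $\frac{1}{t}$ with $\frac{1}{\sqrt{(t-a_1)^{e_1}}}$ into $f_{\sf b_0^{0,e_1}}(t)$ and thus prepends the letter $b_0^{0,e_1}$, producing the $e_0=0$ summands; the outer integration $\frac{a_0}{\sqrt{x-a_0}}\int_x^1\frac{dt}{t\sqrt{t-a_0}}(\cdots)$ merges $\frac{1}{t\sqrt{t-a_0}}$ with $\frac{1}{\sqrt{(t-a_1)^{e_1}}}$ into $f_{\sf b_0^{1,e_1}}(t)$, prepending $b_0^{1,e_1}$ and producing the $e_0=1$ summands together with the prefactor $\frac{a_0}{\sqrt{x-a_0}}$. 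Collecting both values of $e_0$ reproduces the full sum in \eqref{eq:ConvolutionPattern3}.

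Finally, for \eqref{eq:Pattern3} I would use $\frac{1}{N\binom{2N}{N}}=\frac{1}{4^N}\Mvec\left[\frac{1}{x\sqrt{1-x}}\right](N)$ from \eqref{eq:RepresentInverseBinomial} and \eqref{eq:MellinMult} to turn the product of Mellin transforms into the Mellin transform of the convolution $\frac{1}{x\sqrt{1-x}}\ast\frac{\HA_{\sf a_1,\dots,a_k}(x)}{x-a_0}$; evaluating it by \eqref{eq:MellinConv} and the substitution $y=x/t$ yields $\frac{1}{x}\int_x^1 dt\,\frac{\sqrt{t}\HA_{\sf a_1,\dots,a_k}(t)}{(t-a_0)\sqrt{t-x}}$, and inserting \eqref{eq:ConvolutionPattern3} gives the claim. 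I expect the main obstacle to be the combinatorial bookkeeping in the induction step, namely verifying that the two outer integrations attach precisely the letters $b_0^{0,e_1}$ and $b_0^{1,e_1}$ and that, combined with $\frac{a_0^{e_0}}{\sqrt{(x-a_0)^{e_0}}}$, they exactly reconstitute the $e_0$-summation; the analytic steps (the $\ep$-regularization, the vanishing of boundary terms from $f(1)=0$, and the convolution computation) are routine and parallel the proofs of the earlier theorems.
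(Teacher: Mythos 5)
Your proposal is correct and follows essentially the same route as the paper's proof: induction on $k$ via Eq.~\eqref{eq:Convolution3} (with the same base case $f\equiv1$ and the same reduction of the inner integral to the induction hypothesis after observing $f(1)=0$ and $f'(t)=-\HA_{\sf a_2,\dots,a_k}(t)/(t-a_1)$), followed by deducing \eqref{eq:Pattern3} from \eqref{eq:RepresentInverseBinomial} and \eqref{eq:MellinMult}. The bookkeeping of the letters $b_0^{0,e_1}$ and $b_0^{1,e_1}$ that you flag as the main obstacle works out exactly as you describe and is also how the paper assembles the $e_0$-summation.
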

\begin{proof}
 We prove \eqref{eq:ConvolutionPattern3} by induction on $k$. For $k=0$ we easily obtain
 \[
  \int_x^1dt\frac{\sqrt{t}}{(t-a_0)\sqrt{t-x}} = \int_x^1 dt \frac{1}{t\sqrt{1-t}}+\frac{a_0}{\sqrt{x-a_0}}\int_x^1 dt \frac{1}{t\sqrt{(1-t)(t-a_0)}} = \HA_{\sf b_0^0}(x)+a_0\frac{\HA_{\sf b_0^1}(x)}{\sqrt{x-a_0}}
 \]
 using \eqref{eq:Convolution3}. For $k>0$ we assume \eqref{eq:ConvolutionPattern3} holds for all values smaller than $k$. Now \eqref{eq:Convolution3} and the induction hypothesis yield
 \begin{eqnarray*}
  \int_x^1dt\frac{\sqrt{t}\HA_{\sf a_1,\dots,a_k}(t)}{(t-a_0)\sqrt{t-x}} &=& \int_x^1 dt \frac{1}{t} \int_t^1 du \frac{\sqrt{u}\HA_{\sf a_2,\dots,a_k}(u)}{(u-a_1)\sqrt{u-t}}\\*
  &&+\frac{a_0}{\sqrt{x-a_0}}\int_x^1 dt \frac{1}{t\sqrt{t-a_0}} \int_t^1 du \frac{\sqrt{u}\HA_{\sf a_2,\dots,a_k}(u)}{(u-a_1)\sqrt{u-t}}\\
  &=&\sum_{e_1,\dots,e_k=0}^1\left(\prod_{i=1}^ka_i^{e_i}\right)\HA_{\sf b_0^{0,e_1},b_1^{e_1,e_2},\dots,b_k^{e_k}}(x)\\*
  &&+\frac{a_0}{\sqrt{x-a_0}}\sum_{e_1,\dots,e_k=0}^1\left(\prod_{i=1}^ka_i^{e_i}\right)\HA_{\sf b_0^{1,e_1},b_1^{e_1,e_2},\dots,b_k^{e_k}}(x),
 \end{eqnarray*}
 which completes the proof of \eqref{eq:ConvolutionPattern3}.
 Finally, by \eqref{eq:RepresentInverseBinomial} and \eqref{eq:MellinMult} we obtain
 \[
  \frac{1}{N\binom{2N}{N}}\Mvec\left[\frac{\HA_{\sf a_1,\dots,a_k}(x)}{x-a_0}\right](N) = \frac{1}{4^N}\Mvec\left[\frac{1}{x}\int_x^1dt\frac{\sqrt{t}\HA_{\sf a_1,\dots,a_k}(t)}{(t-a_0)\sqrt{t-x}}\right](N),
 \]
 from which we infer \eqref{eq:Pattern3} by virtue of \eqref{eq:ConvolutionPattern3}.
\end{proof}

The next theorem deals with formulae arising in expressions of the form
\begin{equation}\label{eq:SumPattern5}
 \binom{2N}{N}\sum_{i=1}^N\frac{x_0^i}{i\binom{2i}{i}}S_{k_1,\dots,k_m}(x_1,\dots,x_m)(i).
\end{equation}

\begin{theorem}\label{thm:Pattern5}
 Let $a_0,\dots,a_k,c<0$ and with $e_0,e_1 \in \{0,1\}$ define $f_{\sf b_i^{e_0,e_1}}(x):=\frac{1}{x\sqrt{(x-a_i)^{e_0}(x-a_{i+1})^{e_1}}}$ for $i \in \{0,\dots,k-1\}$ as well as $f_{\sf b_k^{e_0}}(x):=\frac{1}{x\sqrt{(1-x)(x-a_k)^{e_0}}}$ and $f_{\sf w^{e_0}}(x):=\frac{1}{(x-a_0)^{e_0}\sqrt{x(x-c)}}$. Then with $e_1,\dots,e_k \in \{0,1\}$ we have
 \begin{equation}\label{eq:ConvolutionPattern5}
  \int_x^1dt\frac{\HA_{\sf b_1^{e_1,e_2},\dots,b_k^{e_k}}(t)}{t\sqrt{(t-a_1)^{e_1}(t-x)}} = \frac{\pi}{\sqrt{x}}\sum_{i_1=0}^{e_1}\dots\sum_{i_k=0}^{e_k}\left(\prod_{j=1}^k\frac{(1-e_j)a_j+2i_j-e_j}{a_j}\right)\HA_{\sf i_1a_1,\dots,i_ka_k}(x)
 \end{equation}
 and
 \begin{eqnarray}\label{eq:Pattern5}
  \binom{2N}{N}\Mvec\left[\sum_{e_0,\dots,e_k=0}^1\frac{\left(\prod_{i=0}^ka_i^{e_i}\right)\HA_{\sf b_0^{e_0,e_1},b_1^{e_1,e_2},\dots,b_k^{e_k}}(x)}{(x-c)\sqrt{(x-a_0)^{e_0}}}\right](N) &=& 4^N\Mvec\Bigg[\frac{\HA_{\sf w_0^0,a_1,\dots,a_k}(x)}{\sqrt{x(x-c)}}\\*
  &&+a_0\frac{\HA_{\sf w_0^1,a_1,\dots,a_k}(x)}{\sqrt{x(x-c)}}\Bigg](N).\N
 \end{eqnarray}
\end{theorem}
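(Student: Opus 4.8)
The plan is to prove the two assertions in the order they are stated, since \eqref{eq:Pattern5} is deduced from \eqref{eq:ConvolutionPattern5} by exactly the Mellin manipulations already used for Theorem~\ref{thm:Pattern2}. Throughout I read the upper limit $1$ as the $\ep\to0$ limit of the cut-off $1-\ep$ in \eqref{eq:Convolution5}; the boundary terms there drop out because every $\HA_{\sf \vec{w}}$ vanishes at its base point $x=1$, the remaining singular letters being controlled by the finite-part regularization fixed earlier.

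First I would establish the convolution identity \eqref{eq:ConvolutionPattern5} by induction on $k$, applying the combined formula \eqref{eq:Convolution5} with $c=0$, $a=a_1$, and $e=e_1$. The key algebraic fact is that $\frac{d}{du}\HA_{\sf b_1^{e_1,e_2},\dots,b_k^{e_k}}(u)=-f_{\sf b_1^{e_1,e_2}}(u)\HA_{\sf b_2^{e_2,e_3},\dots,b_k^{e_k}}(u)$, so that the factor $\sqrt{(u-a_1)^{e_1}}$ produced by \eqref{eq:Convolution5} cancels the corresponding root inside the leading letter. The inner integral then becomes $\int_t^1 du\,\HA_{\sf b_2^{e_2,e_3},\dots,b_k^{e_k}}(u)/(u\sqrt{(u-a_2)^{e_2}(u-t)})$, which is the left-hand side of \eqref{eq:ConvolutionPattern5} with $k$ lowered by one; the induction hypothesis turns it into $\frac{\pi}{\sqrt t}\sum_{i_2,\dots,i_k}(\dots)\HA_{\sf i_2a_2,\dots,i_ka_k}(t)$. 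It then remains to carry out the outer integral $\int_x^1 \HA_{\sf i_2a_2,\dots}(t)/(t(t-a_1)^{e_1})\,dt$: for $e_1=0$ this is directly $\HA_{\sf 0,i_2a_2,\dots}(x)$, and for $e_1=1$ the split $\frac{1}{t(t-a_1)}=\frac{1}{a_1}\bigl(\frac{1}{t-a_1}-\frac{1}{t}\bigr)$ produces $\frac{1}{a_1}(\HA_{\sf a_1,\dots}-\HA_{\sf 0,\dots})$. Both cases assemble into the $i_1$-summand $\frac{(1-e_1)a_1+2i_1-e_1}{a_1}\HA_{\sf i_1a_1,\dots}$, completing the step. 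The base case $k=1$ is the same computation with the cut-bearing last letter $\HA_{\sf b_1^{e_1}}$, where the inner integral reduces to the elementary $\int_t^1 du/(u\sqrt{(1-u)(u-t)})=\pi/\sqrt{t}$; this is where the global factor $\pi$ enters.

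For \eqref{eq:Pattern5} I would abbreviate the bracketed integrand on the left by $G(x)$ and use \eqref{eq:RepresentBinomial} to write $\binom{2N}{N}=\frac{4^N}{\pi}\Mvec[1/\sqrt{x(1-x)}](N)$, then invoke the convolution theorem \eqref{eq:MellinMult}. A change of variable $t=x/y$ in \eqref{eq:MellinConv} collapses the convolution of $1/\sqrt{x(1-x)}$ with $G$ to $\frac{1}{\sqrt x}\int_x^1 G(t)/\sqrt{t-x}\,dt$, so the claim reduces to showing $\int_x^1 G(t)/\sqrt{t-x}\,dt=\frac{\pi}{\sqrt{x-c}}\bigl(\HA_{\sf w^0,a_1,\dots,a_k}(x)+a_0\HA_{\sf w^1,a_1,\dots,a_k}(x)\bigr)$. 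Splitting the defining sum of $G$ by $e_0\in\{0,1\}$ and applying \eqref{eq:Convolution5} once more — now with the genuine pole at $c$ and $a=a_0$, $e=e_0$ — the same derivative/cancellation step reduces the inner integral to the already-proven \eqref{eq:ConvolutionPattern5}; the surviving outer prefactor $\frac{1}{(t-a_0)^{e_0}\sqrt t\sqrt{t-c}}$ is exactly $f_{\sf w^{e_0}}(t)$, so each term integrates up to $\HA_{\sf w^{e_0},i_1a_1,\dots,i_ka_k}(x)$.

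The crux — and the step I expect to be most delicate — is the final bookkeeping, where one faces a sum over all $e_0,\dots,e_k\in\{0,1\}$ and all $i_j\in\{0,\dots,e_j\}$ weighted by $\prod_{i=0}^k a_i^{e_i}$ times $\prod_{j=1}^k\frac{(1-e_j)a_j+2i_j-e_j}{a_j}$. Since both the weight and the resulting word factor over the positions $j$, I would regroup the double sum by the target letter $\ell_j=i_ja_j\in\{0,a_j\}$ in each slot. The three contributing pairs $(e_j,i_j)$ carry weights $+1$ (giving $\ell_j=0$), $-1$ (giving $\ell_j=0$), and $+1$ (giving $\ell_j=a_j$), so the $\ell_j=0$ contributions cancel pairwise and at every slot only the letter $a_j$ survives with coefficient $1$. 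Hence the inner double sum collapses to $a_0^{e_0}\HA_{\sf w^{e_0},a_1,\dots,a_k}(x)$ summed over $e_0$, which is precisely the required right-hand side; feeding it back through \eqref{eq:MellinMult} yields \eqref{eq:Pattern5}.
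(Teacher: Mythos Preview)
Your proof is correct and follows essentially the same approach as the paper: induction on $k$ for \eqref{eq:ConvolutionPattern5} via \eqref{eq:Convolution5} with $c=0$ and the elementary integral $\int_t^1 du/(u\sqrt{(1-u)(u-t)})=\pi/\sqrt t$, followed by the reduction of \eqref{eq:Pattern5} to \eqref{eq:ConvolutionPattern5} through \eqref{eq:RepresentBinomial}, \eqref{eq:MellinMult}, and another application of \eqref{eq:Convolution5}. The paper records the collapsing step as the one-line identity $\sum_{e=0}^1 a^e\sum_{i=0}^e\frac{(1-e)a+2i-e}{a}f(i)=f(1)$, which is exactly your slot-by-slot cancellation argument, and likewise handles the partial fraction in the induction uniformly via $\frac{1}{(t-a)^et}=\sum_{i=0}^e\frac{(1-e)a+2i-e}{a(t-ia)}$ rather than by the case split on $e_1$ you use.
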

\begin{proof}
 For the left hand side of \eqref{eq:Pattern5}, using \eqref{eq:RepresentBinomial} and \eqref{eq:MellinMult}, we obtain
 \[
  \frac{4^N}{\pi}\Mvec\left[\sum_{e_0,\dots,e_k=0}^1\frac{\prod_{i=0}^ka_i^{e_i}}{\sqrt{x}}\int_x^1dt \frac{\HA_{\sf b_0^{e_0,e_1},b_1^{e_1,e_2},\dots,b_k^{e_k}}(t)}{(t-c)\sqrt{(t-a_0)^{e_0}(t-x)}}\right](N).
 \]
 Applying \eqref{eq:Convolution5} to the integral yields
 \[
  \frac{4^N}{\pi}\Mvec\left[\sum_{e_0,\dots,e_k=0}^1\frac{\prod_{i=0}^ka_i^{e_i}}{\sqrt{x(x-c)}}\int_x^1dt \frac{1}{(t-a_0)^{e_0}\sqrt{t-c}}\int_t^1du \frac{\sqrt{(u-a_0)^{e_0}}f_{\sf b_0^{e_0,e_1}}(u)\HA_{\sf b_1^{e_1,e_2},\dots,b_k^{e_k}}(u)}{\sqrt{u-t}}\right](N),
 \]
 which equals
 \[
  4^N\Mvec\left[\sum_{e_0,\dots,e_k=0}^1\frac{\prod_{i=0}^ka_i^{e_i}}{\sqrt{x(x-c)}}\sum_{i_1=0}^{e_1}\dots\sum_{i_k=0}^{e_k}\left(\prod_{j=1}^k\frac{(1-e_j)a_j+2i_j-e_j}{a_j}\right)\HA_{\sf w_0^{e_0},i_1a_1,\dots,i_ka_k}(x)\right](N)
 \]
 by virtue of \eqref{eq:ConvolutionPattern5}. Noting that $\sum_{e=0}^1a^e\sum_{i=0}^e\frac{(1-e)a+2i-e}{a}f(i)=f(1)$, we can simplify this expression to obtain the right hand side of \eqref{eq:Pattern5}. Finally, we prove \eqref{eq:ConvolutionPattern5} by induction on $k$. For $k=1$ we obtain
 \[
  \int_x^1dt\frac{\HA_{\sf b_1^{e_1}}(t)}{t\sqrt{(t-a_1)^{e_1}(t-x)}} = \frac{1}{\sqrt{x}}\int_x^1dt\frac{1}{(t-a_1)^{e_1}\sqrt{t}}\int_t^1du\frac{1}{u\sqrt{(1-u)(u-t)}}
 \]
 from \eqref{eq:Convolution5} and we compute
 \[
  \int_x^1dt\frac{1}{t\sqrt{(1-t)(t-x)}} = \frac{\pi}{\sqrt{x}}.
 \]
 Altogether, for the two cases $e_1=0$ and $e_1=1$ this yields
 \[
  \int_x^1dt\frac{\HA_{\sf b_1^0}(t)}{t\sqrt{t-x}} = \frac{\pi}{\sqrt{x}}\HA_{\sf 0}(x) \qquad\text{and}\qquad \int_x^1dt\frac{\HA_{\sf b_1^1}(t)}{t\sqrt{(t-a_1)(t-x)}} = \frac{\pi}{a_1\sqrt{x}}\left(\HA_{\sf a_1}(x)-\HA_{\sf 0}(x)\right),
 \]
 which can be written uniformly as
 \[
  \int_x^1dt\frac{\HA_{\sf b_1^{e_1}}(t)}{t\sqrt{(t-a_1)^{e_1}(t-x)}} = \frac{\pi}{\sqrt{x}}\sum_{i_1=0}^{e_1}\frac{(1-e_1)a_1+2i_1-e_1}{a_1}\HA_{\sf i_1a_1}(x)
 \]
 based on $\frac{1}{(t-a)^et}=\sum_{i=0}^e\frac{(1-e)a+2i-e}{a(t-ia)}$, where $e \in \{0,1\}$. Similarly, for $k>1$ applying \eqref{eq:Convolution5} and the induction hypothesis results in
 \[
  \int_x^1dt\frac{\HA_{\sf b_1^{e_1,e_2},\dots,b_k^{e_k}}(t)}{t\sqrt{(t-a_1)^{e_1}(t-x)}} = \frac{\pi}{\sqrt{x}}\sum_{i_2=0}^{e_2}\dots\sum_{i_k=0}^{e_k}\left(\prod_{j=2}^k\frac{(1-e_j)a_j+2i_j-e_j}{a_j}\right)\int_x^1dt\frac{\HA_{\sf i_2a_2,\dots,i_ka_k}(t)}{(t-a_1)^{e_1}t},
 \]
 which can be identified with \eqref{eq:ConvolutionPattern5}.
\end{proof}


\begin{example}
 In order to illustrate the use of the previous theorems we will now, step by
step, set up an integral representation for the binomial sum
 \[ 
\sum_{i=1}^N\frac{(-1)^i}{(2i+1)\binom{2i}{i}}\sum_{j=1}^i\binom{2j}{j}\frac{
S_2(j)}{j}.
 \]
 Starting from an integral representation of $S_2(N)$, see \eqref{eq:RepresentS2}, we immediately obtain
 \[
  \frac{S_2(N)}{N} = -\Mvec\left[\frac{\HA_{\sf 1,0}(x)}{x}\right](N)+\zeta_2\Mvec\left[\frac{1}{x}\right](N)
 \]
 by \eqref{eq:RepresentInverse} and \eqref{eq:MellinConv}. Multiplying this by $\binom{2N}{N}$ we arrive at two terms matching the structure of the left hand side of \eqref{eq:Pattern1}. Although the condition $a_0,\dots,a_k<0$ of Theorem~\ref{thm:Pattern1} is not satisfied, we still can derive the relevant formulae from it. For $k=0$ the limit $a_0\to0$ yields
 \[
  \binom{2N}{N}\Mvec\left[\frac{1}{x}\right](N)=\frac{4^N}{\pi}\Mvec\left[\frac{\HA_{\sf w_1}(x)}{x}\right](N).
 \]
 Similarly, sending $a_0\to0$ and $a_2\to0$, for $k=2$ we obtain
 \[
  \binom{2N}{N}\Mvec\left[\frac{1}{x}\int_x^1dt\frac{\HA_{\sf 0}(t)}{t-a_1}\right](N) = \frac{4^N}{\pi}\Mvec\left[\frac{1}{x}\int_x^1dt\int_t^1du\frac{\HA_{\sf w_1}(u)}{\sqrt{t(t-a_1)}\sqrt{u(u-a_1)}}\right](N)
 \]
 and analytic continuation of both integrands, avoiding $a_1\ge0$, in the limit $a_1\to1$ gives $-f_{\sf 1}(t)\HA_{\sf 0}(t)$ and $-f_{\sf w_1}(t)f_{\sf w_1}(u)\HA_{\sf w_1}(u)$, respectively. Altogether, we have
 \[
  \binom{2N}{N}\frac{S_2(N)}{N} = \frac{4^N}{\pi}\Mvec\left[\frac{-\HA_{\sf w_1,w_1,w_1}(x)+\zeta_2\HA_{\sf w_1}(x)}{x}\right](N)
 \]
 and by \eqref{eq:MellinSummation} we immediately arrive at
 \[
  \sum_{i=1}^N\binom{2i}{i}\frac{S_2(i)}{i} = \frac{4^N}{\pi}\Mvec\left[\frac{-\HA_{\sf w_1,w_1,w_1}(x)+\zeta_2\HA_{\sf w_1}(x)}{x-\frac{1}{4}}\right](N)+\frac{\HA_{\sf \frac{1}{4},w_1,w_1,w_1}(0)-\zeta_2\HA_{\sf \frac{1}{4},w_1}(0)}{\pi},
 \]
 see also \eqref{eq:RepresentSimpleBinomialSum}. We compute the constants $\HA_{\sf \frac{1}{4},w_1,w_1,w_1}(0)=-\frac{2\pi}{3}\zeta_3$ and $\HA_{\sf \frac{1}{4},w_1}(0)=0$ as Cauchy principal values. Next, we multiply by $\frac{1}{(2N+1)\binom{2N}{N}}$ and use \eqref{eq:RepresentInverseBinomial2} to obtain
 \begin{eqnarray*}
  \frac{1}{(2N+1)\binom{2N}{N}}\sum_{i=1}^N\binom{2i}{i}\frac{S_2(i)}{i} &=& \frac{4^N}{\pi}\frac{1}{(2N+1)\binom{2N}{N}}\Mvec\left[\frac{-\HA_{\sf w_1,w_1,w_1}(x)+\zeta_2\HA_{\sf w_1}(x)}{x-\frac{1}{4}}\right](N)\\*
  &&-4^{-N}\frac{\zeta_3}{3}\Mvec\left[\frac{1}{\sqrt{1-x}}\right](N).
 \end{eqnarray*}
 The first term matches the structure of the left hand side of \eqref{eq:Pattern2}, but again the relevant condition $a_0,\dots,a_k,c<0$ of Theorem~\ref{thm:Pattern2} is not satisfied. As above, the relevant formulae can be derived by analytic continuation. Taking the limits $a_0\to0$, $a_1\to1$, and $a_2\to0$ as described above we obtain
 \[
  \frac{1}{(2N+1)\binom{2N}{N}}\Mvec\left[\frac{-\HA_{\sf w_1,w_1,w_1}(x)+\zeta_2\HA_{\sf w_1}(x)}{x-c}\right](N) = \frac{\pi}{2{\cdot}4^N}\Mvec\left[\frac{1}{\sqrt{x-c}}\int_x^1dt\frac{-\HA_{\sf 1,0}(t)+\zeta_2}{t\sqrt{t-c}}\right](N).
 \]
 In view of \eqref{eq:modifiedMellin} we do the analytic continuation to $c=\frac{1}{4}$ on both sides of the real line, avoiding $c\ge0$. In order to arrive at
 \[
  \frac{1}{(2N+1)\binom{2N}{N}}\Mvec\left[\frac{-\HA_{\sf w_1,w_1,w_1}(x)+\zeta_2\HA_{\sf w_1}(x)}{x-\frac{1}{4}}\right](N) = \frac{\pi}{2{\cdot}4^N}\Mvec\left[\frac{1}{\sqrt{x-\frac{1}{4}}}\int_x^1dt\frac{-\HA_{\sf 1,0}(t)+\zeta_2}{t\sqrt{t-\frac{1}{4}}}\right](N)
 \]
 we also rely on the following identities for $x\in(0,1)$.
 \begin{eqnarray*}
  \lim_{c\to\frac{1}{4}\pm i0}\frac{-\HA_{\sf w_1,w_1,w_1}(x)+\zeta_2\HA_{\sf w_1}(x)}{x-c} &=& \lim_{y\to x\mp i0}\frac{-\HA_{\sf w_1,w_1,w_1}(y)+\zeta_2\HA_{\sf w_1}(y)}{y-\frac{1}{4}}\\
  \lim_{c\to\frac{1}{4}\pm i0}\frac{1}{\sqrt{x-c}}\int_x^1dt\frac{-\HA_{\sf 1,0}(t)+\zeta_2}{t\sqrt{t-c}} 
&=& \lim_{y\to x\mp i0}\frac{1}{\sqrt{y-\frac{1}{4}}}\int_y^1dt\frac{-\HA_{\sf 1,0}(t)+\zeta_2}{t\sqrt{t-\frac{1}{4}}}.
 \end{eqnarray*}
 Altogether, we have
 \[
  \frac{1}{(2N+1)\binom{2N}{N}}\sum_{i=1}^N\binom{2i}{i}\frac{S_2(i)}{i} = \frac{1}{2}\Mvec\left[\frac{-\HA_{\sf w_8,1,0}(x)+\zeta_2\HA_{\sf w_8}(x)}{\sqrt{x-\frac{1}{4}}}\right](N)-4^{-N}\frac{\zeta_3}{3}\Mvec\left[\frac{1}{\sqrt{1-x}}\right](N).
 \]
 Finally we apply \eqref{eq:MellinSummation} once more in order to obtain
 \begin{multline*}
  \sum_{i=1}^N\frac{(-1)^i}{(2i+1)\binom{2i}{i}}\sum_{j=1}^i\binom{2j}{j}\frac{S_2(j)}{j} = \frac{1}{2}(-1)^N\Mvec\left[\frac{x(-\HA_{\sf w_8,1,0}(x)+\zeta_2\HA_{\sf w_8}(x))}{(x+1)\sqrt{x-\frac{1}{4}}}\right](N)\\*
  -\frac{1}{2}\Mvec\left[\frac{x(-\HA_{\sf w_8,1,0}(x)+\zeta_2\HA_{\sf w_8}(x))}{(x+1)\sqrt{x-\frac{1}{4}}}\right](0)
  -\frac{\zeta_3}{3}(-4)^{-N}\Mvec\left[\frac{x}{(x+4)\sqrt{1-x}}\right](N)\\*
  +\frac{\zeta_3}{3}\Mvec\left[\frac{x}{(x+4)\sqrt{1-x}}\right](0),
 \end{multline*}
 cf.~\eqref{eq:RepresentDoubleBinomialSum}. We may also express the constants as
 \begin{eqnarray*}
  \Mvec\left[\frac{x}{(x+4)\sqrt{1-x}}\right](0) &=& 2+\frac{8}{\sqrt{5}}\ln\left(\frac{\sqrt{5}-1}{2}\right)\\
  \Mvec\left[\frac{x(-\HA_{\sf w_8,1,0}(x)+\zeta_2\HA_{\sf w_8}(x))}{(x+1)\sqrt{x-\frac{1}{4}}}\right](0)&=& \frac{4}{3}\zeta_3\left(1-\frac{8}{\sqrt{5}}\ln\left(\frac{\sqrt{5}-1}{2}\right)\right)\\*
  &&+\frac{2}{\sqrt{5}}\int_0^1dx\arccos\left(\frac{3-2x}{2+2x}\right)\frac{\HA_{\sf 1,0}(x)-\zeta_2}{x\sqrt{x-\frac{1}{4}}}.
 \end{eqnarray*}
\end{example}
\section{Mellin Representations of Nested Finite Binomial Sums}
\label{sec:5}

\vspace{1mm}
\noindent
In the following we will present a larger class of binomial and inverse binomial sums
weighted by (generalized) harmonic sums up to depth {\sf d = 2} and present their
Mellin representations based on iterated integrals over the letters of the alphabet given in 
Section~\ref{sec:3}.

Starting from \eqref{eq:RepresentBinomial} and \eqref{eq:RepresentInverseBinomial} it is immediate from 
\eqref{eq:MellinMult} and \eqref{eq:RepresentInverse} to determine the Mellin representation of binomial 
coefficients with an arbitrary power of $N$ in the denominator:
\begin{eqnarray}
\frac{1}{N^k} \binom{2N}{N} &=& \frac{4^N}{\pi}\Mvec\left[\frac{1}{x} \HA_{\sf  
\underbrace{\mbox{\scriptsize 0, \ldots ,0}}_{\mbox{\scriptsize
$k-1$}}, w_1}(x) \right](N), \quad k \in \mathbb{N}, k \geq 1,
\\
\frac{1}{\displaystyle N^k \binom{2N}{N}} &=& \frac{1}{4^N}\Mvec\left[\frac{1}{x}\HA_{\sf \underbrace{\mbox{\scriptsize 0, \ldots ,0}}_{\mbox{\scriptsize
$k-1$}}, w_3}(x)\right](N), \quad k \in \mathbb{N}, k \geq 2.
\end{eqnarray}
Based on the properties of the binomial coefficient we can write $\frac{1}{(2N+1)\binom{2N}{N}}$ and 
$\frac{1}{(N+1)\binom{2N}{N}}$ in terms of shifts of $\frac{1}{N\binom{2N}{N}}$ and $\frac{1}{N^2\binom{2N}{N}}$. 
Thereby we obtain the following integral representations~:
\begin{eqnarray}
\frac{1}{\displaystyle (2N+1) \binom{2N}{N}} &=& \frac{1}{2~4^N}\int_0^1 dx x^N \frac{1}{\sqrt{1-x}}\label{eq:RepresentInverseBinomial2}
\\
\frac{1}{\displaystyle (N+1) \binom{2N}{N}} &=& \frac{1}{4^N} \int_0^1 dx x^N \left(\frac{1}{\sqrt{1-x}}
-\frac{1}{2}\HA_{\sf w_3}(x)\right)~.
\end{eqnarray}
For the harmonic \cite{Vermaseren:1998uu,Blumlein:1998if} and S-sums \cite{Moch:2001zr,Ablinger:2013cf} appearing in the 
subsequent representations we explicitly give the integral representations in terms of $\HA$, cf. also \cite{Ablinger14}.
One obtains \cite{Vermaseren:1998uu,Blumlein:1998if,Ablinger:2012qm,Ablinger:2013cf}
\begin{eqnarray}
S_1(N) &=& \int_0^1dx \frac{x^N-1}{x-1}\\
S_2(N) &=& \int_0^1dx x^N\frac{\HA_{\sf 0}(x)}{x-1}+\zeta_2\label{eq:RepresentS2}\\
S_{-2}(N) &=& (-1)^N\int_0^1dx x^N\frac{\HA_{\sf 0}(x)}{x+1}-\frac{\zeta_2}{2}\\
S_{1,1}(N) &=& -\int_0^1dx \frac{x^N-1}{x-1}\HA_{\sf 1}(x)\\
S_{1,2}\left(\frac{1}{2},1;N\right) &=& \frac{1}{2^N}\int_0^1dx x^N\frac{\zeta_2-\HA_{\sf 1,0}(x)}{x-2}+\frac{5}{8}\zeta_3\\
S_{1,2}\left(\frac{1}{2},-1;N\right) &=& \left(-\frac{1}{2}\right)^N\int_0^1dx x^N\frac{\HA_{\sf -1,0}(x)}{x+2}-\frac{\zeta_2}{2^{N+1}}\int_0^1dx \frac{x^N}{x-2}-\frac{13}{24}\zeta_3\\
S_{1,2}\left(-\frac{1}{2},1;N\right) &=& \left(-\frac{1}{2}\right)^N\int_0^1dx x^N\frac{\zeta_2-\HA_{\sf 1,0}(x)}{x+2}
\N\\*&&
+\frac{1}{3}\ln^3\left(\tfrac{3}{2}\right)+\ln\left(\tfrac{3}{2}\right)\Li_2\left(-\tfrac{1}{2}\right)
-\Li_3\left(-\tfrac{1}{2}\right)-2\Li_3\left(\tfrac{1}{3}\right)~.
\end{eqnarray}
Here $\Li_n(x)$ denotes the polylogarithm \cite{LEWIN:81}
\begin{eqnarray}
\Li_n(x) = {\rm H}_{\sf
\underbrace{\mbox{\scriptsize 0, \ldots ,0}}_{\mbox{\scriptsize
$n-1$}}, 1}(x)~. 
\end{eqnarray}
In the following we will list the integral representations of different type
of (inverse) binomial sums. We start with some very 
simple examples, which do not involve harmonic or S-sums.
\begin{eqnarray}
\sum_{i=1}^N \frac{1}{\displaystyle i \binom{2i}{i}} &=& \int_0^1 
dx \frac{(\frac{x}{4})^N-1}{x-4}\frac{1}{\sqrt{1-x}}
\\
\sum_{i=1}^N \frac{1}{\displaystyle i^2 \binom{2i}{i}} &=& \int_0^1 
dx \frac{(\frac{x}{4})^N-1}{x-4}\HA_{\sf w_3}(x)
\\
\sum_{i = 1}^N \binom{2i}{i} (-1)^i &=& \frac{1}{\pi} \int_0^1 dx \frac{(-4x)^N-1}{x+\frac{1}{4}} \sqrt{\frac{x}{1-x}}\\
\sum_{i = 1}^N \frac{1}{i}\binom{2i}{i} (-1)^i &=& \frac{1}{\pi} \int_0^1 dx \frac{(-4x)^N-1}{x+\frac{1}{4}}\HA_{\sf w_1}(x)
\\
\sum_{i = 1}^N \frac{1}{i^2}\binom{2i}{i} (-1)^i &=& \frac{1}{\pi} \int_0^1 dx \frac{(-4x)^N-1}{x+\frac{1}{4}}\HA_{\sf 0,w_1}(x)
\\
\sum_{i = 1}^N \frac{1}{i^3}\binom{2i}{i} (-1)^i &=& \frac{1}{\pi} \int_0^1 dx \frac{(-4x)^N-1}{x+\frac{1}{4}}\HA_{\sf 0,0,w_1}(x)
\\
\sum_{i = 1}^N \binom{2i}{i} (-2)^i &=& \frac{1}{\pi}
\int_0^1 dx \frac{(-8x)^N-1}{x+\frac{1}{8}}\sqrt{\frac{x}{1-x}}\\
\sum_{i = 1}^N \frac{1}{\displaystyle i \binom{2i}{i}} \sum_{j=1}^i \binom{2j}{j} 
(-2)^j &=&
\int_0^1 dx \frac{(-2x)^N-1}{x+\frac{1}{2}}\left(1-\frac{1}{6\sqrt{2}\sqrt{x+\frac{1}{8}}}\right)
\N\\*&&
-\frac{2}{3}\int_0^1 dx \frac{(\frac{x}{4})^N-1}{x-4}\frac{1}{\sqrt{1-x}}
\\
\sum_{i = 1}^N \frac{1}{\displaystyle i^2 \binom{2i}{i}} \sum_{j=1}^i \binom{2j}{j} 
(-2)^j &=& 
-\int_0^1 dx \frac{(-2x)^N-1}{x+\frac{1}{2}}\left(\ln(x)+\frac{\HA_{\sf w_{28}}(x)}{6\sqrt{2}}\right)
\N\\*&&
-\frac{2}{3}\int_0^1 dx \frac{(\frac{x}{4})^N-1}{x-4}\HA_{\sf w_3}(x)~.
\end{eqnarray}
The following single sums have summands that fit the pattern \eqref{eq:SumPattern1}, so Theorem~\ref{thm:Pattern1} applies.
\begin{eqnarray}
\sum_{i = 1}^N \frac{1}{i^2}\binom{2i}{i}  S_1(i) &=& -\frac{1}{\pi}\int_0^1 dx \frac{4^Nx^N-1}{x-\frac{1}{4}}(\HA_{\sf 0,w_1,1}(x)+2\ln(2)\HA_{\sf 0,w_1}(x))
\\
\sum_{i = 1}^N \binom{2i}{i}  S_2(i) &=& -\frac{1}{\pi}\int_0^1 dx \frac{4^N x^N-1}{x-\frac{1}{4}}\sqrt{\frac{x}{1-x}}\left(\tfrac{1}{2}\HA_{\sf w_1}(x)^2-\zeta_2\right)
\\
\sum_{i = 1}^N \frac{1}{i}\binom{2i}{i}  S_{2}(i) &=& -\frac{1}{\pi}\int_0^1 dx 
\frac{4^Nx^N-1}{x-\frac{1}{4}}\left(\tfrac{1}{6}
\HA_{\sf w_1}(x)^3-\zeta_2\HA_{\sf w_1}(x)\right)\label{eq:RepresentSimpleBinomialSum}
\\
\sum_{i = 1}^N \frac{(-1)^i}{i}\binom{2i}{i}  S_{2}(i) &=& -\frac{1}{\pi}\int_0^1 dx 
\frac{(-4)^Nx^N-1}{x+\frac{1}{4}}\left[\tfrac{1}{6}\HA_{\sf w_1}(x)^3-\zeta_2\HA_{\sf 
w_1}(x)\right]
\\
\sum_{i = 1}^N \frac{1}{i^2}\binom{2i}{i}  S_2(i) &=& -\frac{1}{\pi}\int_0^1 dx \frac{4^N x^N-1}{x-\frac{1}{4}}\left(\HA_{\sf 0,w_1,w_1,w_1}(x)-\zeta_2\HA_{\sf 0,w_1}(x)\right)
\\
\sum_{i = 1}^N \binom{2i}{i}  S_{-2}(i) &=& \frac{1}{\pi}\Biggl[\int_0^1 dx \frac{(-4)^Nx^N-1}{x+\frac{1}{4}}\sqrt{\frac{x}{1+x}}\HA_{\sf w_2,w_1}(x)
\N\\*&&
- \frac{\zeta_2}{2}\int_0^1 dx \frac{4^Nx^N-1}{x-\frac{1}{4}}\sqrt{\frac{x}{1-x}}\Biggr]
\\
\sum_{i = 1}^N \frac{1}{i}\binom{2i}{i}  S_{-2}(i) &=& 
\frac{1}{\pi}\Biggl[\int_0^1 dx \frac{(-4)^Nx^N-1}{x+\frac{1}{4}}\HA_{\sf 
w_2,w_2,w_1}(x) 
\nonumber\\* &&
- \frac{\zeta_2}{2}\int_0^1 dx \frac{4^Nx^N-1}{x-\frac{1}{4}}\HA_{\sf w_1}(x)\Biggr]
\\
\sum_{i = 1}^N \frac{(-1)^i}{i}\binom{2i}{i}  S_{-2}(i) &=& 
\frac{1}{\pi}\Biggl[\int_0^1 dx \frac{4^Nx^N-1}{x-\frac{1}{4}}\HA_{\sf w_2,w_2,w_1}(x) 
\nonumber\\* &&
- \frac{\zeta_2}{2}\int_0^1 dx \frac{(-4)^Nx^N-1}{x+\frac{1}{4}}\HA_{\sf w_1}(x)\Biggr]
\\
\sum_{i = 1}^N \frac{1}{i^2}\binom{2i}{i}  S_{-2}(i) &=& \frac{1}{\pi}\Biggl[\int_0^1 dx \frac{(-4)^Nx^N-1}{x+\frac{1}{4}}\HA_{\sf 0,w_2,w_2,w_1}(x)
\N\\*&&
- \frac{\zeta_2}{2}\int_0^1 dx \frac{4^Nx^N-1}{x-\frac{1}{4}}\HA_{\sf 0,w_1}(x)\Biggr]
\\
\sum_{i = 1}^N \frac{1}{i}\binom{2i}{i}  S_{1,1}(i) 
&=& \frac{1}{\pi}\int_0^1 dx \frac{4^Nx^N-1}{x-\frac{1}{4}}\Biggl[\HA_{\sf w_1,1,1}(x)
+2\ln(2)\HA_{\sf w_1,1}(x)
\nonumber\\* &&
+[2\ln^2(2)-\zeta_2]\HA_{\sf w_1}(x)\Biggr]~.
\end{eqnarray}
We also include some examples with S-sums that are not harmonic sums, Theorem~\ref{thm:Pattern1} applies here as well.
\begin{eqnarray}
\lefteqn{\sum_{i = 1}^N \binom{2i}{i}  (-2)^i S_{1,2}\left(\frac{1}{2},1;i\right)
\ =\ 
\frac{1}{\pi} \Biggl\{ \int_0^1 dx \frac{(-4)^N x^N -1}{x + \tfrac{1}{4}}\sqrt{\frac{x}{2-x}}\left[
\HA_{\sf w_6,w_1,w_1}(x) - \zeta_2 \HA_{\sf w_6}(x)\right]
}\N\\*&&
+ \frac{5}{8} \zeta_3 \int_0^1 dx \frac{(-8)^N x^N-1}{x+ \frac{1}{8}} \sqrt{\frac{x}{1-x}}\Biggr\}
\label{eq:RepresentSimpleBinomialSum1}\\
\lefteqn{\sum_{i = 1}^N \binom{2i}{i}  (-2)^i S_{1,2}\left(\frac{1}{2},-1;i\right) 
\ =\ 
\frac{1}{\pi}\Biggl[\int_0^1 dx \frac{4^Nx^N-1}{x-\frac{1}{4}}\sqrt{\frac{x}{2+x}}\HA_{\sf w_5,w_2,w_1}(x)
}\N\\*&&
+\frac{\zeta_2}{2}\int_0^1 dx \frac{(-4)^Nx^N-1}{x+\frac{1}{4}}\sqrt{\frac{x}{2-x}}\HA_{\sf w_6}(x)
-\frac{13}{24}\zeta_3\int_0^1 dx \frac{(-8)^Nx^N-1}{x+\frac{1}{8}}\sqrt{\frac{x}{1-x}}\Biggr]
\quad\quad\quad\quad\label{eq:RepresentSimpleBinomialSum2}\\
\lefteqn{\sum_{i = 1}^N \binom{2i}{i}  (-2)^i S_{1,2}\left(-\frac{1}{2},1;i\right) 
\ =\ 
-\frac{1}{\pi}\Biggl\{\int_0^1 dx \frac{4^N x^N-1}{x-\frac{1}{4}}\sqrt{\frac{x}{2+x}}
\left[\HA_{\sf w_7,w_1,w_1}(x) - \zeta_2 \HA_{\sf w_7}(x)\right]
}\N\\*&&
+\left[2\Li_3(\tfrac{1}{3})+\Li_3(-\tfrac{1}{2})-\ln(\tfrac{3}{2})\Li_2(-\tfrac{1}{2})-\tfrac{1}{3}\ln(\tfrac{3}{2})^3\right]
\int_0^1 dx \frac{(-8)^N x^N -1}{x +\tfrac{1}{8}} \sqrt{\frac{x}{1-x}}\Biggr\}.
\end{eqnarray}
Next we list the two inverse binomial sums which occurred in Ref.~\cite{Ablinger:2014uka}. 
Their summands obey the pattern \eqref{eq:SumPattern3}.
\begin{eqnarray}
\sum_{i = 1}^N \frac{4^i}{i^3}\frac{1}{\displaystyle \binom{2i}{i}} &=& \int_0^1 dx 
\frac{x^N-1}{x-1} \HA_{\sf 0,w_3}(x)
\\
\sum_{i = 1}^N \frac{4^i}{i^2}\frac{1}{\displaystyle \binom{2i}{i}} S_1(i) &=& 
\int_0^1 dx \frac{x^N-1}{x-1}\left[\HA_{\sf 0,w_3}(x)-\HA_{\sf w_3,0}(x)
-\HA_{\sf w_3,1}(x)-2\ln(2)\HA_{\sf w_3}(x)\right].\quad
\end{eqnarray}
The following more involved variants of the above also fit the pattern \eqref{eq:SumPattern3}.
\begin{eqnarray}
\lefteqn{\sum_{j=1}^N \frac{1}{\displaystyle j \binom{2j}{j}}
S_{1,2}\left(\frac{1}{2},1\right)(j)
\ =\ 
\int_0^1 dx \frac{(\frac{x}{8})^N-1}{x-8} \Bigg[\left(\zeta_2\HA_{\sf w_3}(x)+\HA_{\sf 0,0,w_3}(x)-\HA_{\sf w_3,w_3,w_3}(x)\right)
}\N\\*&&
-2\frac{\zeta_2\HA_{\sf w_{11}}(x)+\HA_{\sf w_{10},0,w_3}(x)-\HA_{\sf w_{11},w_3,w_3}(x)}{\sqrt{2-x}}\Bigg]
+\frac{5}{8}\zeta_3\int_0^1 dx \frac{(\frac{x}{4})^N-1}{x-4} \frac{1}{\sqrt{1-x}}
\\
\lefteqn{\sum_{j=1}^N \frac{1}{\displaystyle j^2 \binom{2j}{j}}
S_{1,2}\left(\frac{1}{2},1\right)(j)
\ =\ 
\int_0^1 dx \frac{(\frac{x}{8})^N-1}{x-8} \Big[\zeta_2\HA_{\sf 0,w_3}(x)+\HA_{\sf 0,0,0,w_3}(x)-\HA_{\sf 0,w_3,w_3,w_3}(x)
}\N\\*&&
-2\left(\zeta_2\HA_{\sf w_{10},w_{11}}(x)+\HA_{\sf w_{10},w_{10},0,w_3}(x)-\HA_{\sf w_{10},w_{11},w_3,w_3}(x)\right)\Big]
+\frac{5}{8}\zeta_3\int_0^1 dx \frac{(\frac{x}{4})^N-1}{x-4} \HA_{\sf w_3}(x)
\quad\quad\\
\lefteqn{\sum_{j=1}^N \frac{1}{\displaystyle j \binom{2j}{j}}
S_{1,2}\left(\frac{1}{2},-1\right)(j) 
\ =\ 
\int_0^1 dx \frac{(-\frac{x}{8})^N-1}{x+8}\Bigg[2\frac{\HA_{\sf w_{15},w_4,w_3}(x)-\HA_{\sf w_{16},0,w_3}(x)}{\sqrt{2+x}}
}\N\\*&&
+\HA_{\sf 0,0,w_3}(x)-\HA_{\sf w_4,w_4,w_3}(x)\Bigg]+\frac{\zeta_2}{2}\int_0^1 dx \frac{(\frac{x}{8})^N-1}{x-8}\Bigg[2\frac{\HA_{\sf w_{11}}(x)}{\sqrt{2-x}}-\HA_{\sf w_3}(x)\Bigg]
\N\\*&&
-\frac{13}{24}\zeta_3\int_0^1 dx \frac{(\frac{x}{4})^N-1}{x-4}\frac{1}{\sqrt{1-x}}
\\
\lefteqn{\sum_{j=1}^N \frac{1}{\displaystyle j^2 \binom{2j}{j}}
S_{1,2}\left(\frac{1}{2},-1\right)(j) 
\ =\ 
\int_0^1 dx \frac{(-\frac{x}{8})^N-1}{x+8}\Big[2\HA_{\sf w_{16},w_{15},w_4,w_3}(x)-2\HA_{\sf w_{16},w_{16},0,w_3}(x)
}\N\\*&&
+\HA_{\sf 0,0,0,w_3}(x)-\HA_{\sf 0,w_4,w_4,w_3}(x)\Big]+\frac{\zeta_2}{2}\int_0^1 dx \frac{(\frac{x}{8})^N-1}{x-8}\Big[2\HA_{\sf w_{10},w_{11}}(x)-\HA_{\sf 0,w_3}(x)\Big]
\N\\*&&
-\frac{13}{24}\zeta_3\int_0^1 dx \frac{(\frac{x}{4})^N-1}{x-4}\HA_{\sf w_3}(x).
\end{eqnarray}
From here on we list double sums for which the inner sums already have been listed before. We start with a multitude of 
instances to which Theorem~\ref{thm:Pattern2} applies.
\begin{eqnarray}
\lefteqn{\sum_{i=1}^N \frac{1}{\displaystyle (2i+1) \binom{2i}{i}} \sum_{j=1}^i 
\binom{2j}{j} \frac{(-1)^j}{j^3} 
\ =\ 
\frac{1}{2}\int_0^1 dx \frac{(-x)^N-1}{x+1}\frac{x}{\sqrt{x+\frac{1}{4}}}\HA_{\sf w_{14},0,0}(x)
}\N\\*&&
-\frac{\HA_{\sf -\frac{1}{4},0,0,w_1}(0)}{2\pi}\int_0^1 dx \frac{(\frac{x}{4})^N-1}{x-4}\frac{x}{\sqrt{1-x}}
\\
\lefteqn{\sum_{i=1}^N \frac{1}{\displaystyle (i+1) \binom{2i}{i}} \sum_{j=1}^i 
\binom{2j}{j} \frac{(-1)^j}{j^3} 
\ =\ 
\int_0^1 dx \frac{(-x)^N-1}{x+1}\left(\frac{x}{\sqrt{x+\frac{1}{4}}}\HA_{\sf w_{14},0,0}(x)
-\frac{x}{2}
\HA_{\sf w_{14},w_{14},0,0}(x)\right)
}\N\\*&&
-\frac{\HA_{\sf -\frac{1}{4},0,0,w_1}(0)}{\pi}\int_0^1 dx \frac{(\frac{x}{4})^N-1}{x-4}\left(\frac{x}{\sqrt{1-x}}-\frac{x}{2}\HA_{\sf w_3}(x)\right)
\\
\lefteqn{\sum_{i=1}^N \frac{(-2)^i}{\displaystyle (2i+1) \binom{2i}{i}} \sum_{j=1}^i 
\binom{2j}{j} \frac{(-1)^j}{j^3} 
\ =\ 
\frac{1}{2}\int_0^1 dx \frac{(2x)^N-1}{x-\frac{1}{2}}\frac{x}{\sqrt{x+\frac{1}{4}}}\HA_{\sf w_{14},0,0}(x)
}\N\\*&&
-\frac{\HA_{\sf -\frac{1}{4},0,0,w_1}(0)}{2\pi}\int_0^1 dx \frac{(-\frac{x}{2})^N-1}{x+2}\frac{x}{\sqrt{1-x}}
\\
\lefteqn{\sum_{i=1}^N \frac{(-2)^i}{\displaystyle (i+1) \binom{2i}{i}} \sum_{j=1}^i 
\binom{2j}{j} \frac{(-1)^j}{j^3} 
\ =\ 
\int_0^1 dx \frac{(2x)^N-1}{x-\frac{1}{2}}\left(\frac{x}{\sqrt{x+\frac{1}{4}}}\HA_{\sf w_{14},0,0}(x)-\frac{x}{2}\HA_{\sf w_{14},w_{14},0,0}(x)\right)
}\N\\*&&
-\frac{\HA_{\sf -\frac{1}{4},0,0,w_1}(0)}{\pi}\int_0^1 dx \frac{(-\frac{x}{2})^N-1}{x+2}\left(\frac{x}{\sqrt{1-x}}-\frac{x}{2}\HA_{\sf w_3}(x)\right)
\\
\lefteqn{\sum_{i=1}^N \frac{1}{\displaystyle (2i+1) \binom{2i}{i}} \sum_{j=1}^i 
\binom{2j}{j} \frac{1}{j} 
S_{1}(j)
\ =\ 
-\frac{1}{2}\int_0^1 dx \frac{x^N-1}{x-1}\frac{x}{\sqrt{x-\frac{1}{4}}}\HA_{\sf w_8,1}(x)
}\N\\*&&
-\frac{\zeta_2}{3}\int_0^1 dx \frac{(\frac{x}{4})^N-1}{x-4}\frac{x}{\sqrt{1-x}}
\\
\lefteqn{\sum_{i=1}^N \frac{1}{\displaystyle (i+1) \binom{2i}{i}} \sum_{j=1}^i 
\binom{2j}{j} \frac{1}{j} 
S_{1}(j)
\ =\ 
\int_0^1 dx \frac{x^N-1}{x-1}\left(\frac{x}{2}\HA_{\sf w_8,w_8,1}(x)-\frac{x}{\sqrt{x-\frac{1}{4}}}\HA_{\sf w_8,1}(x)\right)
}\N\\*&&
+\frac{2}{3}\zeta_2\int_0^1 dx \frac{(\frac{x}{4})^N-1}{x-4}\left(\frac{x}{2}\HA_{\sf w_3}(x)-\frac{x}{\sqrt{1-x}}\right)
\\
\lefteqn{\sum_{i=1}^N \frac{1}{\displaystyle (2i+1) \binom{2i}{i}} \sum_{j=1}^i \binom{2j}{j} \frac{1}{j^2} S_{1}(j)
\ =\ 
-\frac{1}{2}\int_0^1 dx \frac{x^N-1}{x-1}\frac{x}{\sqrt{x-\frac{1}{4}}}\HA_{\sf w_8,0,1}(x)
}\N\\*&&
+\left(\frac{\HA_{\sf \frac{1}{4},0,w_1,1}(0)}{2\pi}-\ln(2)\zeta_2\right)\int_0^1 dx \frac{(\frac{x}{4})^N-1}{x-4}\frac{x}{\sqrt{1-x}}
\\
\lefteqn{\sum_{i=1}^N \frac{1}{\displaystyle (i+1) \binom{2i}{i}} \sum_{j=1}^i \binom{2j}{j} \frac{1}{j^2} S_{1}(j)
\ =\ 
\int_0^1 dx \frac{x^N-1}{x-1}\left(\frac{x}{2}\HA_{\sf w_8,w_8,0,1}(x)-\frac{x}{\sqrt{x-\frac{1}{4}}}\HA_{\sf w_8,0,1}(x)\right)
}\N\\*&&
+\left(\frac{\HA_{\sf \frac{1}{4},0,w_1,1}(0)}{\pi}-2\ln(2)\zeta_2\right)
\int_0^1 dx \frac{(\frac{x}{4})^N-1}{x-4}\left(\frac{x}{\sqrt{1-x}}-\frac{x}{2}\HA_{\sf w_3}(x)\right)
\\
\lefteqn{\sum_{i=1}^N \frac{1}{\displaystyle (2i+1) \binom{2i}{i}} \sum_{j=1}^i 
\binom{2j}{j} \frac{1}{j} 
S_{2}(j)
\ =\ 
\frac{1}{2}\int_0^1 dx \frac{x^N-1}{x-1}\frac{x}{\sqrt{x-\frac{1}{4}}}\left(\zeta_2\HA_{\sf w_8}(x)-\HA_{\sf w_8,1,0}(x)\right)
}\N\\*&&
-\frac{\zeta_3}{3}\int_0^1 dx \frac{(\frac{x}{4})^N-1}{x-4}\frac{x}{\sqrt{1-x}}
\\
\lefteqn{\sum_{i=1}^N \frac{1}{\displaystyle (i+1) \binom{2i}{i}} \sum_{j=1}^i 
\binom{2j}{j} \frac{1}{j} 
S_{2}(j)
\ =\ 
\int_0^1 dx \frac{x^N-1}{x-1}\Bigl[\frac{x}{2}\left(\HA_{\sf w_8,w_8,1,0}(x)-\zeta_2\HA_{\sf w_8,w_8}(x)\right)
}\N\\*&&
-\frac{x}{\sqrt{x-\frac{1}{4}}}\left(\HA_{\sf w_8,1,0}(x)-\zeta_2\HA_{\sf w_8}(x)\right)\Bigr]
+\frac{2}{3}\zeta_3\int_0^1 dx \frac{(\frac{x}{4})^N-1}{x-4}\left(\frac{x}{2}\HA_{\sf w_3}(x)-\frac{x}{\sqrt{1-x}}\right)
\\
\lefteqn{\sum_{i=1}^N \frac{(-1)^i}{\displaystyle (2i+1) \binom{2i}{i}} \sum_{j=1}^i 
\binom{2j}{j} \frac{1}{j} 
S_{2}(j)
\ =\ 
\frac{1}{2}\int_0^1 dx \frac{(-x)^N-1}{x+1}\frac{x}{\sqrt{x-\frac{1}{4}}}\left(\zeta_2\HA_{\sf w_8}(x)-\HA_{\sf w_8,1,0}(x)\right)
}\N\\*&&
-\frac{\zeta_3}{3}\int_0^1 dx \frac{(-\frac{x}{4})^N-1}{x+4}\frac{x}{\sqrt{1-x}}\label{eq:RepresentDoubleBinomialSum}
\\
\lefteqn{\sum_{i=1}^N \frac{(-1)^i}{\displaystyle (i+1) \binom{2i}{i}} \sum_{j=1}^i 
\binom{2j}{j} \frac{1}{j} 
S_{2}(j)
\ =\ 
\int_0^1 dx \frac{(-x)^N-1}{x+1}\Bigl[\frac{x}{2}\left(\HA_{\sf w_8,w_8,1,0}(x)-\zeta_2\HA_{\sf w_8,w_8}(x)\right)
}\N\\*&&
-\frac{x}{\sqrt{x-\frac{1}{4}}}\left(\HA_{\sf w_8,1,0}(x)-\zeta_2\HA_{\sf w_8}(x)\right)\Bigr]
+\frac{2}{3}\zeta_3\int_0^1 dx \frac{(-\frac{x}{4})^N-1}{x+4}\left(\frac{x}{2}\HA_{\sf w_3}(x)-\frac{x}{\sqrt{1-x}}\right)
\quad\quad\quad\quad\\
\lefteqn{\sum_{i=1}^N \frac{1}{\displaystyle (2i+1) \binom{2i}{i}} \sum_{j=1}^i \binom{2j}{j} \frac{(-1)^j}{j} S_{2}(j)
\ =\ 
\frac{1}{2}\int_0^1 dx \frac{(-x)^N-1}{x+1}\frac{x}{\sqrt{x+\frac{1}{4}}}\left(\zeta_2\HA_{\sf w_{14}}(x)-\HA_{\sf w_{14},1,0}(x)\right)
}\N\\*&&
+\left(\frac{2}{3}\ln\left(\tfrac{\sqrt{5}-1}{2}\right)^3-\frac{4}{5}\ln\left(\tfrac{\sqrt{5}-1}{2}\right)\zeta_2-\frac{4}{5}\zeta_3\right)
\int_0^1 dx \frac{(\frac{x}{4})^N-1}{x-4}\frac{x}{\sqrt{1-x}}
\\
\lefteqn{\sum_{i=1}^N \frac{1}{\displaystyle (i+1) \binom{2i}{i}} \sum_{j=1}^i \binom{2j}{j} \frac{(-1)^j}{j} S_{2}(j)
\ =\ 
\int_0^1 dx \frac{(-x)^N-1}{x+1}\Biggl[\frac{x}{2}\left(\HA_{\sf w_{14},w_{14},1,0}(x)-\zeta_2\HA_{\sf w_{14},w_{14}}(x)\right)
}\N\\*&&
-\frac{x}{\sqrt{x+\frac{1}{4}}}\left(\HA_{\sf w_{14},1,0}(x)-\zeta_2\HA_{\sf w_{14}}(x)\right)\Biggr]
\N\\*&&
+\left(\frac{4}{3}\ln\left(\tfrac{\sqrt{5}-1}{2}\right)^3-\frac{8}{5}\ln\left(\tfrac{\sqrt{5}-1}{2}\right)\zeta_2-\frac{8}{5}\zeta_3\right)
\int_0^1 dx \frac{(\frac{x}{4})^N-1}{x-4}\left(\frac{x}{\sqrt{1-x}}-\frac{x}{2}\HA_{\sf w_3}(x)\right)
\\
\lefteqn{\sum_{i=1}^N \frac{1}{\displaystyle (2i+1) \binom{2i}{i}} \sum_{j=1}^i 
\binom{2j}{j} \frac{1}{j} 
S_{-2}(j)
\ =\ 
\frac{1}{2}\int_0^1 dx \frac{(-x)^N-1}{x+1}\frac{x}{\sqrt{x+\frac{1}{4}}}\HA_{\sf w_{14},-1,0}(x)
}\N\\*&&
-\frac{\zeta_2}{4}\int_0^1 dx \frac{x^N-1}{x-1}\frac{x}{\sqrt{x-\frac{1}{4}}}\HA_{\sf w_8}(x)
-\frac{\HA_{\sf -\frac{1}{4},w_2,w_2,w_1}(0)}{2\pi}\int_0^1 dx \frac{(\frac{x}{4})^N-1}{x-4}\frac{x}{\sqrt{1-x}}
\\
\lefteqn{\sum_{i=1}^N \frac{1}{\displaystyle (i+1) \binom{2i}{i}} \sum_{j=1}^i 
\binom{2j}{j} \frac{1}{j} 
S_{-2}(j)
\ =\ 
\int_0^1 dx \frac{(-x)^N-1}{x+1}\Biggl(\frac{x}{\sqrt{x+\frac{1}{4}}}\HA_{\sf w_{14},-1,0}(x)
}\N\\*&&
-\frac{x}{2}\HA_{\sf w_{14},w_{14},-1,0}(x)\Biggr)
-\frac{\zeta_2}{2}\int_0^1 dx \frac{x^N-1}{x-1}\left(\frac{x}{\sqrt{x-\frac{1}{4}}}\HA_{\sf w_8}(x)-\frac{x}{2}\HA_{\sf w_8,w_8}(x)\right)
\N\\*&&
-\frac{\HA_{\sf -\frac{1}{4},w_2,w_2,w_1}(0)}{\pi}\int_0^1 dx \frac{(\frac{x}{4})^N-1}{x-4}\left(\frac{x}{\sqrt{1-x}}-\frac{x}{2}\HA_{\sf w_3}(x)\right)
\\
\lefteqn{\sum_{i=1}^N \frac{(-2)^i}{\displaystyle (2i+1) \binom{2i}{i}} \sum_{j=1}^i 
\binom{2j}{j} \frac{1}{j} 
S_{-2}(j)
\ =\ 
\frac{1}{2}\int_0^1 dx \frac{(2x)^N-1}{x-\frac{1}{2}}\frac{x}{\sqrt{x+\frac{1}{4}}}\HA_{\sf w_{14},-1,0}(x)
}\N\\*&&
-\frac{\zeta_2}{4}\int_0^1 dx \frac{(-2x)^N-1}{x+\frac{1}{2}}\frac{x}{\sqrt{x-\frac{1}{4}}}\HA_{\sf w_8}(x)
-\frac{\HA_{\sf -\frac{1}{4},w_2,w_2,w_1}(0)}{2\pi}\int_0^1 dx \frac{(-\frac{x}{2})^N-1}{x+2}\frac{x}{\sqrt{1-x}}
\\
\lefteqn{\sum_{i=1}^N \frac{(-2)^i}{\displaystyle (i+1) \binom{2i}{i}} \sum_{j=1}^i 
\binom{2j}{j} \frac{1}{j} 
S_{-2}(j)
\ =\ 
\int_0^1 dx \frac{(2x)^N-1}{x-\frac{1}{2}}\Biggl(\frac{x}{\sqrt{x+\frac{1}{4}}}\HA_{\sf w_{14},-1,0}(x)
}\N\\*&&
-\frac{x}{2}\HA_{\sf w_{14},w_{14},-1,0}(x)\Biggr)
-\frac{\zeta_2}{2}\int_0^1 dx \frac{(-2x)^N-1}{x+\frac{1}{2}}\left(\frac{x}{\sqrt{x-\frac{1}{4}}}\HA_{\sf w_8}(x)-\frac{x}{2}\HA_{\sf w_8,w_8}(x)\right)
\N\\*&&
-\frac{\HA_{\sf -\frac{1}{4},w_2,w_2,w_1}(0)}{\pi}\int_0^1 dx \frac{(-\frac{x}{2})^N-1}{x+2}\left(\frac{x}{\sqrt{1-x}}
-\frac{x}{2}\HA_{\sf w_3}(x)\right)
\\
\lefteqn{\sum_{i=1}^N \frac{1}{\displaystyle (2i+1) \binom{2i}{i}} \sum_{j=1}^i 
\binom{2j}{j} \frac{(-1)^j}{j} 
S_{-2}(j)
\ =\ 
\frac{1}{2}\int_0^1 dx \frac{x^N-1}{x-1}\frac{x}{\sqrt{x-\frac{1}{4}}}\HA_{\sf w_8,-1,0}(x)
}\N\\*&&
-\frac{\zeta_2}{4}\int_0^1 dx \frac{(-x)^N-1}{x+1}\frac{x}{\sqrt{x+\frac{1}{4}}}\HA_{\sf w_{14}}(x)
\N\\*&&
-\left(\frac{\zeta_2}{2}\ln\left(\tfrac{\sqrt{5}-1}{2}\right)+\frac{\HA_{\sf \frac{1}{4},w_2,w_2,w_1}(0)}{2\pi}\right)\int_0^1 dx \frac{(\frac{x}{4})^N-1}{x-4}\frac{x}{\sqrt{1-x}}
\\
\lefteqn{\sum_{i=1}^N \frac{1}{\displaystyle (i+1) \binom{2i}{i}} \sum_{j=1}^i 
\binom{2j}{j} \frac{(-1)^j}{j} 
S_{-2}(j)
\ =\ 
\int_0^1 dx \frac{x^N-1}{x-1}\Biggl(\frac{x}{\sqrt{x-\frac{1}{4}}}\HA_{\sf w_8,-1,0}(x)
}\N\\*&&
-\frac{x}{2}\HA_{\sf w_8,w_8,-1,0}(x)\Biggr)
-\frac{\zeta_2}{2}\int_0^1 dx \frac{(-x)^N-1}{x+1}\left(\frac{x}{\sqrt{x+\frac{1}{4}}}\HA_{\sf w_{14}}(x)-\frac{x}{2}\HA_{\sf w_{14},w_{14}}(x)\right)
\N\\*&&
-\left(\zeta_2\ln\left(\tfrac{\sqrt{5}-1}{2}\right)+\frac{\HA_{\sf \frac{1}{4},w_2,w_2,w_1}(0)}{\pi}\right)
\int_0^1 dx \frac{(\frac{x}{4})^N-1}{x-4}\left(\frac{x}{\sqrt{1-x}}-\frac{x}{2}\HA_{\sf w_3}(x)\right)
\\
\lefteqn{\sum_{i=1}^N \frac{1}{\displaystyle (2i+1) \binom{2i}{i}} \sum_{j=1}^i 
\binom{2j}{j} \frac{1}{j} 
S_{1,1}(j)
\ =\ 
\frac{1}{2}\int_0^1 dx \frac{x^N-1}{x-1}\frac{x}{\sqrt{x-\frac{1}{4}}}\HA_{\sf w_8,1,1}(x)
}\N\\*&&
-\left(\frac{\HA_{\sf \frac{1}{4},w_1,1,1}(0)}{2\pi}-\frac{2\ln(2)\zeta_2}{3}\right)\int_0^1 dx \frac{(\frac{x}{4})^N-1}{x-4}\frac{x}{\sqrt{1-x}}
\\
\lefteqn{\sum_{i=1}^N \frac{1}{\displaystyle (i+1) \binom{2i}{i}} \sum_{j=1}^i 
\binom{2j}{j} \frac{1}{j} 
S_{1,1}(j)
\ =\ 
\int_0^1 dx \frac{x^N-1}{x-1}\left(\frac{x}{\sqrt{x-\frac{1}{4}}}\HA_{\sf w_8,1,1}(x)-\frac{x}{2}\HA_{\sf w_8,w_8,1,1}(x)\right)
}\N\\*&&
-\left(\frac{\HA_{\sf \frac{1}{4},w_1,1,1}(0)}{\pi}-\frac{4\ln(2)\zeta_2}{3}\right)
\int_0^1 dx
\frac{(\frac{x}{4})^N-1}{x-4}\left(\frac{x}{\sqrt{1-x}}-\frac{x}{2}\HA_{\sf
w_3}(x)\right).
\end{eqnarray}
The summands of the next set of sums obey the pattern \eqref{eq:SumPattern5}. For shorter notation some of the constants involved in the integral representations are represented as infinite inverse binomial sums.
\begin{eqnarray}
\lefteqn{\sum_{i=1}^N \binom{2i}{i} (-2)^i \sum_{j=1}^i \frac{1}{\displaystyle j \binom{2j}{j}}
S_{1,2}\left(\frac{1}{2},1\right)(j) 
=
\int_0^1 dx \frac{(-x)^N-1}{x+1}\sqrt{\frac{x}{8-x}}\Bigl[\HA_{\sf w_{12},1,0}(x)-2\HA_{\sf w_{13},1,0}(x)
}\N\\*&&
-\zeta_2\left(\HA_{\sf w_{12}}(x)-2\HA_{\sf w_{13}}(x)\right)\Bigr]
-\frac{5\zeta_3}{8\sqrt{3}}\int_0^1 dx \frac{(-2x)^N-1}{x+\frac{1}{2}}\sqrt{\frac{x}{4-x}}
\N\\*&&
+c_1\int_0^1 dx \frac{(-8x)^N-1}{x+\frac{1}{8}}\sqrt{\frac{x}{1-x}}
\\
\lefteqn{c_1=\frac{1}{\pi}\sum_{j=1}^\infty \frac{1}{\displaystyle j \binom{2j}{j}}
S_{1,2}\left(\frac{1}{2},1\right)(j) \approx 0.10184720\dots
}\\
\lefteqn{\sum_{i=1}^N \binom{2i}{i} (-2)^i \sum_{j=1}^i \frac{1}{\displaystyle j^2 \binom{2j}{j}}
S_{1,2}\left(\frac{1}{2},1\right)(j) 
=
\int_0^1 dx \frac{(-x)^N-1}{x+1}\sqrt{\frac{x}{8-x}}(\zeta_2\HA_{\sf w_{12},2}(x)
}\N\\*&&
-\HA_{\sf w_{12},2,1,0}(x))
-\frac{5\zeta_3}{8}\int_0^1 dx \frac{(-2x)^N-1}{x+\frac{1}{2}}\sqrt{\frac{x}{4-x}}\HA_{\sf w_{19}}(x)
+c_2\int_0^1 dx \frac{(-8x)^N-1}{x+\frac{1}{8}}\sqrt{\frac{x}{1-x}}
\N\\
\\
\lefteqn{c_2=\frac{1}{\pi}\sum_{j=1}^\infty \frac{1}{\displaystyle j^2 \binom{2j}{j}}
S_{1,2}\left(\frac{1}{2},1\right)(j) \approx 0.08979755\dots
}\\
\lefteqn{\sum_{i=1}^N \binom{2i}{i} (-2)^i \sum_{j=1}^i \frac{1}{\displaystyle j \binom{2j}{j}}
S_{1,2}\left(\frac{1}{2},-1\right)(j) 
=
\int_0^1 dx \frac{x^N-1}{x-1}\sqrt{\frac{x}{8+x}}\left(\HA_{\sf w_{17},-1,0}(x) \right.
}\N\\*&&
\left. -2\HA_{\sf w_{18},-1,0}(x)\right)
+\frac{\zeta_2}{2}\int_0^1 dx \frac{(-x)^N-1}{x+1}\sqrt{\frac{x}{8-x}}\left(\HA_{\sf w_{12}}(x)-2\HA_{\sf w_{13}}(x)\right)
\N\\&&
+\frac{13}{24\sqrt{3}}\zeta_3\int_0^1 dx \frac{(-2x)^N-1}{x+\frac{1}{2}}\sqrt{\frac{x}{4-x}}
+c_3\int_0^1 dx \frac{(-8x)^N-1}{x+\frac{1}{8}}\sqrt{\frac{x}{1-x}}
\\
\lefteqn{c_3=\frac{1}{\pi}\sum_{j=1}^\infty \frac{1}{\displaystyle j \binom{2j}{j}}
S_{1,2}\left(\frac{1}{2},-1\right)(j) \approx -0.09960950\dots
}\\
\lefteqn{
\sum_{i=1}^N \binom{2i}{i} (-2)^i \sum_{j=1}^i \frac{1}{\displaystyle j^2 \binom{2j}{j}}
S_{1,2}\left(\frac{1}{2},-1\right)(j) 
=
\int_0^1 dx \frac{x^N-1}{x-1}\sqrt{\frac{x}{8+x}}\HA_{\sf w_{17},-2,-1,0}(x)
}\N\\*&&
-\frac{\zeta_2}{2}\int_0^1 dx \frac{(-x)^N-1}{x+1}\sqrt{\frac{x}{8-x}}\HA_{\sf w_{12},2}(x)
+\frac{13}{24}\zeta_3\int_0^1 dx \frac{(-2x)^N-1}{x+\frac{1}{2}}\sqrt{\frac{x}{4-x}}\HA_{\sf w_{19}}(x)
\N\\*&&
+c_4\int_0^1 dx \frac{(-8x)^N-1}{x+\frac{1}{8}}\sqrt{\frac{x}{1-x}}
\\
\lefteqn{c_4=\frac{1}{\pi}\sum_{j=1}^\infty \frac{1}{\displaystyle j^2 \binom{2j}{j}}
S_{1,2}\left(\frac{1}{2},-1\right)(j) \approx -0.08878871\dots
}
\end{eqnarray}
Finally, we also treat some sums which are not strictly nested. The summands obey the pattern \eqref{eq:SumPattern1} 
combined with \eqref{eq:SumPattern4} or \eqref{eq:SumPattern4a}. In order to save some space the right hand sides of the following identities still contain a binomial sum, for which we refer to the integral representations derived in Eq.~\eqref{eq:RepresentSimpleBinomialSum1} and \eqref{eq:RepresentSimpleBinomialSum2}, respectively.
\begin{eqnarray}
\lefteqn{\sum_{i=1}^N \binom{2i}{i} (-2)^i S_{1,2}\left(\frac{1}{2},1\right)(i) \sum_{j=1}^i \frac{1}{\displaystyle j \binom{2j}{j}}
\ =\ 
-\int_0^1 dx \frac{(-x)^N-1}{x+1}\sqrt{\frac{x}{8-x}}\bigg[\HA_{\sf w_{25},w_{26},0}(x)
}\N\\*&&
+2\HA_{\sf w_{13},1,0}(x)
+\frac{1}{\sqrt{3}}\HA_{\sf w_{25},w_{19},w_{19}}(x)-\zeta_2\bigg(2\HA_{\sf w_{13}}(x)
+\frac{\HA_{\sf w_{25}}(x)}{\sqrt{3}}\bigg)\bigg]
\N\\&&
-\frac{5}{8\sqrt{3}}\zeta_3 \int_0^1 dx \frac{(-2x)^N-1}{x+\frac{1}{2}}\sqrt{\frac{x}{4-x}}
+\frac{\pi}{3\sqrt{3}}\sum_{i=1}^N \binom{2i}{i} (-2)^i S_{1,2}\left(\frac{1}{2},1\right)(i)
\\
\lefteqn{\sum_{i=1}^N \binom{2i}{i} (-2)^i S_{1,2}\left(\frac{1}{2},1\right)(i) \sum_{j=1}^i \frac{1}{\displaystyle j^2 \binom{2j}{j}}
\ =\ 
-\int_0^1 dx \frac{(-x)^N-1}{x+1}\sqrt{\frac{x}{8-x}}\Big[\HA_{\sf w_{12},2,1,0}(x)
}\N\\*&&
+\HA_{\sf w_{12},0,1,0}(x)
+\HA_{\sf w_{25},w_{19},1,0}(x)+\HA_{\sf w_{25},w_{19},0,0}(x)+\HA_{\sf w_{25},w_{19},w_{19},w_{19}}(x)
-\zeta_2\big(\HA_{\sf w_{25},w_{19}}(x)+\HA_{\sf w_{12},2}(x)
\N\\*&&
+\HA_{\sf w_{12},0}(x)\big)\Big]
-\frac{5}{8}\zeta_3 \int_0^1 dx \frac{(-2x)^N-1}{x+\frac{1}{2}}\sqrt{\frac{x}{4-x}}\HA_{\sf w_{19}}(x)
+\frac{\zeta_2}{3}\sum_{i=1}^N \binom{2i}{i} (-2)^i S_{1,2}\left(\frac{1}{2},1\right)(i)
\N\\
\\
\lefteqn{\sum_{i=1}^N \binom{2i}{i} (-2)^i S_{1,2}\left(\frac{1}{2},-1\right)(i) \sum_{j=1}^i \frac{1}{\displaystyle j \binom{2j}{j}}
\ =\ 
-\int_0^1 dx \frac{x^N-1}{x-1}\sqrt{\frac{x}{8+x}}\Big[\frac{1}{\sqrt{3}}\HA_{\sf w_{21},w_{20},w_{19}}(x)
}\N\\*&&
+\HA_{\sf w_{21},w_{23},0}(x)+2\HA_{\sf w_{18},-1,0}(x)\Big]
-\frac{\zeta_2}{2} \int_0^1 dx \frac{(-x)^N-1}{x+1}\sqrt{\frac{x}{8-x}}\Bigg[2\HA_{\sf w_{13}}(x)+\frac{\HA_{\sf w_{25}}(x)}{\sqrt{3}}\Bigg]
\N\\*&&
+\frac{13}{24\sqrt{3}}\zeta_3 \int_0^1 dx \frac{(-2x)^N-1}{x+\frac{1}{2}}\sqrt{\frac{x}{4-x}}
+\frac{\pi}{3\sqrt{3}}\sum_{i=1}^N \binom{2i}{i} (-2)^i S_{1,2}\left(\frac{1}{2},-1\right)(i)
\\
\lefteqn{\sum_{i=1}^N \binom{2i}{i} (-2)^i S_{1,2}\left(\frac{1}{2},-1\right)(i) \sum_{j=1}^i \frac{1}{\displaystyle j^2 \binom{2j}{j}}
\ =\ 
\int_0^1 dx \frac{x^N-1}{x-1}\sqrt{\frac{x}{8+x}}\Big[\HA_{\sf w_{17},-2,-1,0}(x)
}\N\\*&&
-\HA_{\sf w_{17},0,-1,0}(x)
+\HA_{\sf w_{21},w_{20},-1,0}(x)-\HA_{\sf w_{21},w_{20},0,0}(x)-\HA_{\sf w_{21},w_{20},w_{19},w_{19}}(x)\Big]
-\frac{\zeta_2}{2} \int_0^1 dx \frac{(-x)^N-1}{x+1}
\N\\*&&
\times \sqrt{\frac{x}{8-x}}\Big[\HA_{\sf w_{25},w_{19}}(x)
+\HA_{\sf w_{12},2}(x)+\HA_{\sf w_{12},0}(x)\Big]
+\frac{13}{24}\zeta_3 \int_0^1 dx \frac{(-2x)^N-1}{x+\frac{1}{2}}\sqrt{\frac{x}{4-x}}\HA_{\sf w_{19}}(x)
\N\\*&&
+\frac{\zeta_2}{3}\sum_{i=1}^N \binom{2i}{i} (-2)^i S_{1,2}\left(\frac{1}{2},-1\right)(i)~.
\end{eqnarray}
\section{The Mellin Transform of \boldmath $D$-finite Functions}
\label{sec:6}

\vspace*{1mm}
\noindent
In the preceding Sections we described the formalism relating the nested
finite (inverse) binomial sums to Mellin transforms
of iterated integrals over also root-valued letters. In physical applications also the converse way is of interest. The 
corresponding methods are the subject of the present Section. We consider the
Mellin transform of \textit{D-finite} functions. 
Therefore let $\mathbb K$ be a field of characteristic 0. A function $f=f(x)$ is called \textit{D-finite} if there exist 
polynomials $p_d(x),p_{d-1}(x),\ldots,p_0(x)\in \mathbb K[x]$  (not all $p_i$
being $0$) such that the following  $D$-finite differential equation holds:
\begin{equation}
 p_d(x)f^{(d)}(x)+\cdots+p_1(x)f'(x)+p_0(x)f(x)=0.
\end{equation}
We emphasize that the class of $D$-finite functions is rather large due to its
closure properties. Namely, if we are given two such differential
equations that contain $D$-finite functions $f(x)$ and $g(x)$ as solutions, one
can compute $D$-finite differential equations that contain $f(x)+g(x)$,
$f(x)g(x)$ or $\int_0^x f(y)dy$ as solutions. In other words any composition
of these operations over known $D$-finite functions $f_i(x)$ is again a
$D$-finite function $h(x)$. In particular, if for the inner building blocks
$f_i(x)$ the $D$-finite differential equations are given, also the $D$-finite
differential equation of $h(x)$ can be computed.\\
Of special importance in this Section is the connection to recurrence
relations. A sequence $(f_n)_{n\geq0}$ with $f_n\in\mathbb K$ is called
\textit{P-finite} (or \textit{P-recursive}) if there exist polynomials 
$p_d(n),p_{d-1}(n),\ldots,p_0(n)\in \mathbb K[n]$ (not all $p_i$ being $0$) such
that a $P$-finite recurrence
\begin{equation}
 p_d(n)f_{n+d}+\cdots+p_1(n)f_{n+1}+p_0(n)f_n=0
\end{equation}
holds for all $n\in\mathbb N$ (from a certain point on).
In the following we utilize the fact that $D$-finite functions are
precisely the generating functions of $P$-finite sequences: 
if $f(x)$ is $D$-finite, then the coefficients 
$f_n$ of the formal power series expansion
\begin{equation}\label{Equ:GeneratingFu}
f(x) = \sum_{n=0}^{\infty} f_n x^n
\end{equation}
form a $P$-finite sequence. Conversely, for a given $P$-finite sequence
$(f_n)_{n\geq0}$, the function defined by the above sum (\ie its 
generating function) is $D$-finite (this is true in the sense of formal power series, even if the sum has a zero radius of 
convergence). Note that given a $D$-finite differential equation for a $D$-finite function $f(x)$ it is straightforward to 
construct a 
$P$-finite recurrence for the coefficients of its power series expansion. For a
recent overview of this holonomic machinery and further literature we
refer to~\cite{KauersPaule:2011}.

\noindent
Subsequently, we deal with the following problem:\\
\textbf{Given} a $D$-finite function $f(x)$.\\
\textbf{Find} an expression $F(n)$ given as a linear combination of indefinite
nested sums such that for all $n\in\mathbb N$ (from a certain point on) we have
\begin{equation}\label{Equ:MellRep}
\Mvec[f(x)](n)=F(n).
\end{equation}
We want to present three different but similar methods to solve the problem
above which are implemented in Ablinger's
Mathematica package
\texttt{HarmonicSums}~\cite{Ablinger:2010kw,Ablinger:2011te,Ablinger:2013cf,
Ablinger:2013hcp}. All of these methods rely on the $D$-finite machinery sketched
above. In addition the symbolic
summation package \texttt{Sigma}~\cite{SIG1,SIG2} is used which is based on an
algorithmic difference field
theory~\cite{Karr:81,Bron:00,Schneider:01,Schneider:05a,Schneider:10b,
Schneider:07d,Schneider:08c,Schneider:10a,Schneider:10c,Schneider:13b}. Here
one of the key ideas is to derive a recurrence relation that contains the Mellin
transform as solution and to execute \SigmaP's recurrence solver that finds all
solutions that can be expressed in
terms of indefinite nested sums and
products~\cite{Petkov:92,Abramov:94,Singer:99,Blumlein:2009tj}; similarly to the
differential case presented in Section~\ref{sec:4}, \eqref{Equ:NestedIntegrals}, these solutions
are also called d'Alembertian solutions.
Besides that we might end up at an expression of the Mellin transform that is
given in terms of definite multi-sums. In this case these sums are transformed
to expressions in terms of indefinite nested sums and products with the help
of the package {\tt
EvaluateMultiSums}~\cite{Ablinger:2010pb,Blumlein:2012hg,Schneider:2013zna},
which is based on \SigmaP. In order to deal with infinite summations the
package {\tt HarmonicSums} is used in addition that can
deal, e.g., with asymptotic expansions of the
arising special functions.

\vspace*{2mm}
\noindent
\textbf{Method 1:} Let $f(x)$ be a $D$-finite function. If we succeed in finding $(f_i)_{i\geq 0}$ such that 
$f(x)=\sum_{i=0}^\infty f_i x^i$ we have
\begin{equation}\label{eq-method1}
\int_0^1 x^n \sum_{i=0}^\infty f_i x^i dx =\sum_{i=0}^\infty f_i\int_0^1 x^{n+i}dx=\sum_{i=0}^\infty f_i \frac{1}{n+i+1}.
\end{equation}
Note that the equation above is true in the sense of formal power series. If we consider analytic functions, we have to make sure 
that the change of the integral and sum is valid. In order to compute the Mellin transform $\Mvec[f(x)](n)$ we can now proceed as 
follows: 
\begin{enumerate}
 \item Compute a $D$-finite differential equation for $f(x).$
 \item Use the differential equation to compute a $P$-finite recurrence for
$(f_i)_{i\geq 0}$ where $f_i$ are the coefficients in~\eqref{Equ:GeneratingFu}.
 \item Compute initial values for the recurrence.
 \item Solve the recurrence (by using {\tt Sigma}) to get a closed
form representation for 
$(f_i)_{i\geq 
0}.$ \label{RecSol}
 \item $F(n)=\sum_{i=0}^\infty f_i \frac{1}{n+i+1}$ is the Mellin transform of
$f(x)$.
 \item Transform $F(n)$ in terms of indefinite nested sums and products (using
{\tt Sigma}, 
{\tt HarmonicSums} and 
{\tt EvaluateMultiSums}).
\end{enumerate}

\noindent Note that $\SigmaP$ finds all solutions that can be expressed in
terms of indefinite nested sums and products. Hence as long as
such solutions suffice to solve the recurrence in item
$\ref{RecSol}$ we can proceed.
In particular, if we succeed in item $\ref{RecSol}$ to find a closed form for
$(f_i)_{i\geq 0}$, we can represent $f(x)$ by $\sum_{i=0}^\infty f_i x^i$ and
use Eq.~($\ref{eq-method1}$) to get the Mellin transform of $f(x).$

\begin{example}We want to compute the Mellin transform of
 $$f(x):=\int_0^x \frac{\sqrt{1-\tau}}{1+\tau} \, d\tau.$$
 We find that
$$(-3+x) f(x)'+2 (-1+x) (1+x) f(x)''=0$$
which leads to the recurrence
$$(i+2 (-1+i) i) f_i-3 (1+i) f_{i+1}-2 (1+i) (2+i) f_{i+2}=0$$
for the coefficients $f_i$ in the formal power series of $f(x)$. Initial values
can be computed easily and solving the recurrence leads to
$$f(x)=\sum_{i=1}^{\infty } x^i \big((-1)^i \big(-\frac{1}{i}-\frac{\prod_{p=1}^i \frac{1-2 p}{2 p}}{i (-1+2 i)}\big)+\frac{(-1)^i \sum_{j=1}^i \frac{\prod_{p=1}^i \frac{1-2 p}{2 p}}{-1+2 j}}{i}\big).$$
Hence
$$\Mvec[f(x)](n)=\sum_{\text{j}_1=1}^{\infty}\frac{(-1)^{\text{j}_1}
\big(1-\prod_{\mathit{p}_1=1}^{\text{j}_1}\frac{1-2\mathit{p}_1}{2\mathit{p}_1}
-\sum_{\text{j}_2=1}^{\text{j}_1}\frac{\prod_{\mathit{p}_1=1}^{\text{j}_2}\frac{
1-2\mathit{p}_1}{2\mathit{p}_1}}{-1+2\text{j}_2}-2\text{j}_1+2\big(\sum_{\text{j
}_2=1}^{\text{j}_1}\frac{\prod_{\mathit{p}_1=1}^{\text{j}_2}\frac{1-2\mathit{p}
_1}{2\mathit{p}_1}}{-1+2\text{j}_2}\big)\text{j}_1\big)}{\text{j}
_1\big(1+n+\text{j}_1\big)\big(-1+2\text{j}_1\big)}.$$
The package {\tt HarmonicSums}  offers the following 
command to apply this method:
\end{example}

\begin{mma}
{
\In \text{\bf GeneralMellin[GL[$\left\{\frac{(1 - \textnormal{VarGL})^{\frac{1}{2}}}{1 + \textnormal{VarGL}}\right\},x$],$x$,Method$\rightarrow 1$]}\\
\Out {\sum_{\text{j}_1=1}^{\infty}\frac{(-1)^{\text{j}_1}\big(1-\prod_{\mathit{p}_1=1}^{\text{j}_1}\frac{1-2\mathit{p}_1}{2\mathit{p}_1}-\sum_{\text{j}_2=1}^{\text{j}_1}\frac{\prod_{\mathit{p}_1=1}^{\text{j}_2}\frac{1-2\mathit{p}_1}{2\mathit{p}_1}}{-1+2\text{j}_2}-2\text{j}_1+2\big(\sum_{\text{j}_2=1}^{\text{j}_1}\frac{\prod_{\mathit{p}_1=1}^{\text{j}_2}\frac{1-2\mathit{p}_1}{2\mathit{p}_1}}{-1+2\text{j}_2}\big)\text{j}_1\big)}{\text{j}_1\big(1+n+\text{j}_1\big)\big(-1+2\text{j}_1\big)}}\\
}
\end{mma}
Before we present another method to compute the Mellin transform of a $D$-finite
function we state the following proposition.

\begin{prop}
 If the Mellin transform of a $D$-finite function is defined \ie the integral $\int_0^1x^nf(x)dx$ exist, then it is 
$P$-finite.
\end{prop}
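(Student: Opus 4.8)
The plan is to convert the differential equation annihilating $f$ into a recurrence for $M(n):=\Mvec[f(x)](n)$ by applying $\int_0^1 x^n(\cdot)\,dx$ to it. Write the $D$-finite equation as $Lf=0$ with $L=\sum_{j=0}^d p_j(x)\frac{d^j}{dx^j}$ and expand each coefficient into monomials, $p_j(x)=\sum_k c_{jk}x^k$. Applying the Mellin transform and using its linearity together with the shift rule \eqref{eq:MellinShifts}, which turns multiplication by $x^k$ into a shift of the argument, I obtain
\begin{equation}
0=\Mvec[(Lf)(x)](n)=\sum_{j=0}^d\sum_k c_{jk}\,\Mvec\!\left[f^{(j)}(x)\right](n+k).
\end{equation}
Thus the first step reduces everything to Mellin transforms of the derivatives $f^{(j)}$ at shifted arguments.

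Next I would invoke the integration-by-parts formula for $\Mvec[f^{(p)}]$ recorded in Section~\ref{sec:2}, which expresses $\Mvec[f^{(j)}](N)$ as $\frac{(-1)^jN!}{(N-j)!}\Mvec[f](N-j)$ plus a finite sum of boundary terms of the shape $\frac{(-1)^iN!}{(N-i)!}f^{(j-1-i)}(1)$. Substituting $N=n+k$, each prefactor $\frac{(n+k)!}{(n+k-j)!}$ equals the falling factorial $(n+k)(n+k-1)\cdots(n+k-j+1)$, i.e.\ a polynomial in $n$. Collecting terms, all contributions proportional to the unknowns $M(n+k-j)$ assemble into a linear recurrence $\sum_i q_i(n)M(n+i)=r(n)$, where the $q_i$ are polynomials in $n$ and the inhomogeneity $r(n)$ is a polynomial in $n$ whose coefficients are the constant boundary values $f^{(\ell)}(1)$. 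Since any polynomial sequence is itself $P$-finite, I finally clear the inhomogeneity by applying a difference operator annihilating $r(n)$, yielding a genuine homogeneous $P$-finite recurrence for $M(n)$.

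It remains to check that the homogeneous part is not identically zero, which is handled by the fact that the Mellin transform intertwines the Weyl-algebra action in $x$ with the shift-recurrence algebra in $n$: multiplication by $x$ corresponds to the forward shift $n\mapsto n+1$, and the Euler operator $x\frac{d}{dx}$ to multiplication by $-(n+1)$. Writing $L$ (after multiplying by a suitable power of $x$, which keeps $Lf=0$) purely in terms of $x$ and $x\frac{d}{dx}$, a nonzero annihilating operator for $f$ is sent to a nonzero recurrence operator, so the leading coefficient does not vanish and the recurrence is nontrivial.

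The main obstacle is the justification of the boundary behaviour at $x=0$: the integration-by-parts formula tacitly requires the endpoint contributions $x^{N}f^{(\ell)}(x)$ to vanish as $x\to0$. For a $D$-finite $f$ whose Mellin transform exists, the singularity at the origin is of moderate (algebraic-logarithmic) type, so these contributions vanish once $n$ is large enough; this is precisely why the conclusion is only asserted ``from a certain point on''. Establishing this decay carefully, and keeping track of the finitely many boundary values $f^{(\ell)}(1)$ entering $r(n)$, is the technical heart of the argument, whereas the passage from the differential equation to the recurrence is purely formal.
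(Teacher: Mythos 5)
Your proposal is correct and follows essentially the same route as the paper: apply the Mellin transform to the annihilating differential equation, using the shift rule for powers of $x$ and the integration-by-parts identity $\Mvec[x^m f^{(p)}(x)](n)=\frac{(-1)^p (n+m)!}{(n+m-p)!}\Mvec[f(x)](n+m-p)+\sum_{i=0}^{p-1}\frac{(-1)^i(n+m)!}{(n+m-i)!}f^{(p-1-i)}(1)$, to obtain a recurrence with polynomial coefficients. Your additional care about homogenizing the boundary-term inhomogeneity, the nonvanishing of the resulting operator, and the decay of the endpoint contributions at $x=0$ for large $n$ addresses details the paper's two-line proof leaves implicit.
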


\begin{proof}
Let $f(x)$ be a $D$-finite function such that the integral $\int_0^1x^nf(x)dx$ exists. Using the properties of the Mellin 
transform we can easily check that
\begin{eqnarray*}
\Mvec[x^m f^{(p)}(x)](n)&=&\frac{(-1)^p (n+m)!}{(n+m-p)!}\Mvec[f(x)](n+m-p)\\&&+\sum_{i=0}^{p-1}\frac{(-1)^i 
(n+m)!}{(n+m-i)!}f^{(p-1-i)}(1).
\end{eqnarray*}
Finally, we apply the Mellin transform to the $D$-finite differential equation of
$f(x)$ using the relation above, and we get a 
$P$-finite recurrence for $\Mvec[f(x)](n).$
\end{proof}

\noindent
Now, a second method to compute the Mellin transform is obvious:

\vspace*{2mm}
\noindent
\textbf{Method 2:} Let $f(x)$ be a $D$-finite function. In order to compute the Mellin transform 
$\Mvec[f(x)](n),$ we can proceed as follows:
\begin{enumerate}
 \item Compute a $D$-finite differential equation for $f(x).$
 \item Use the proposition above to compute a $P$-finite recurrence for $\Mvec[f(x)](n).$
 \item Compute initial values for the recurrence.
 \item Solve the recurrence (by using \SigmaP) to get a closed form
representation for $\Mvec[f(x)](n).$
\end{enumerate}

\begin{example}We want to compute the Mellin transform of
 $$f(x):=\int_0^x \frac{\sqrt{1-\tau}}{1+\tau} \, d\tau.$$
 We find that
$$(-3+x) f(x)'+2 (-1+x) (1+x) f(x)''=0$$
which leads to the recurrence
\begin{eqnarray*}
6\int_0^1 \frac{\sqrt{1-\tau}}{1+\tau} \, d\tau&=&-2 (n-1) n \Mvec[f(x)](n-2)+3 n \Mvec[f(x)](n-1)\\&&+(n+1) (2 n+3) 
\Mvec[f(x)](n).
\end{eqnarray*}
Initial values can be computed easily and solving the recurrence leads to
\begin{eqnarray*}
\Mvec[f(x)](n)&=&(-1)^n \bigg(4 \frac{\prod_{\text{i}_1=1}^n -\frac{2 \text{i}_1}{-1+2 \text{i}_1}}{(2 n+1) (2 
n+3)}+\frac{\int_0^1 
\frac{\sqrt{1-\tau}}{1+\tau} \, d\tau-2}{n+1}\bigg)\\
	    &&-\frac{4 (-1)^n \sum_{\text{i}_1=1}^n \frac{\prod_{\text{i}_2=1}^{\text{i}_1} -\frac{2 \text{i}_2}{-1+2 \text{i}_2}}{1+2 \text{i}_1}}{n+1}+\frac{\int_0^1 \frac{\sqrt{1-\tau}}{1+\tau} \, d\tau}{n+1}.
\end{eqnarray*}
{\tt HarmonicSums} offers the following command to apply 
this method:
\end{example}

\begin{mma}
{
\In \text{\bf GeneralMellin[GL[$\left\{\frac{(1 - \textnormal{VarGL})^{\frac{1}{2}}}{1 + \textnormal{VarGL}}\right\},x$],$x$,Method$\rightarrow 2$]}\\
\Out
{\frac{\text{GL}\big(\big\{\frac{\sqrt{1-\text{VarGL}}}{\text{VarGL}+1}\big\},
1\big)}{n+1}+(-1)^n \bigg(4 \frac{\prod_{\text{i}_1=1}^n -\frac{2
\text{i}_1}{-1+2 \text{i}_1}}{(2 n+1) (2
n+3)}+\frac{\GL{\big\{\frac{\sqrt{1-\text{VarGL}}}{\text{VarGL}+1}\big\}}{1}-2}{
n+1}\bigg)\\{-\frac{4 (-1)^n \sum_{\text{i}_1=1}^n
\frac{\prod_{\text{i}_2=1}^{\text{i}_1} -\frac{2 \text{i}_2}{-1+2
\text{i}_2}}{1+2 \text{i}_1}}{n+1}}.}\\
}
\end{mma}

\vspace*{2mm}
\noindent
\textbf{Method 3:} We consider the generating function $g(y)$ of the Mellin transform of $f(x)$ \ie 
$g(y):=\sum_{n=0}^\infty y^n 
F(n)=\sum_{n=0}^\infty y^n \int_0^1 x^n f(x)dx.$ 
Changing integral and sum leads to 
$$g(y)=\int_0^1 f(x)\sum_{n=0}^\infty y^n x^ndx=\int_0^1f(x)\frac{1}{1-xy}dx.$$
Again if we consider analytic functions we have to make sure that the change of the integral and sum is valid.
Now we proceed as follows:
\begin{enumerate}
 \item Compute a $D$-finite differential equation for $g(y).$
 \item Use the differential equation to compute a $P$-finite recurrence for $F(n).$
 \item Compute initial values of $F(n)$.
 \item Solve the recurrence (by using \SigmaP) to get a closed form
representation for $F(n).$
\end{enumerate}

\begin{example}We want to compute the Mellin transform $F(n)$ of
 $$f(x):=\int_0^x \frac{\sqrt{1-\tau _1}}{1+\tau _1} \, d\tau _1.$$
Hence 
$$g(y):=\int_0^1f(x)\frac{1}{1-xy}dx.$$
 We find that
\begin{eqnarray*}
&&2 (y-1)^2 (y+1) y^2 g^{(6)}(y)+(y-1) \left(47 y^2+7 y-22\right) y g^{(5)}(y)\\
&&+\left(355 y^3-244 y^2-113 y+50\right) g^{(4)}(y) +4 (3 y-2) (85 y+14) g^{(3)}(y)\\
&&+12 (85 y-24) g''(y)+240 g'(y)=0
\end{eqnarray*}
which leads to a recurrence which is satisfied by the Mellin transform $F(n)$ of
$f(x)$:
\begin{eqnarray*}
&&\left(2 n^6+29 n^5+170 n^4+515 n^3+848 n^2+716 n+240\right) F({n+1})\\
&&+\left(-2 n^6-34 n^5-234 n^4-830 n^3-1588 n^2-1536 n-576\right) F({n+2})\\
&&+\left(-2 n^6-35 n^5-248 n^4-905 n^3-1778 n^2-1760 n-672\right) F({n+3})\\
&&+\left(2 n^6+40 n^5+320 n^4+1300 n^3+2798 n^2+2980 n+1200\right) F({n+4})=0.
\end{eqnarray*}
Initial values of $F(n)$ can be computed easily and solving the recurrence leads
to the closed form
\begin{eqnarray*}
F(n)=\Mvec[f(x)](n)&=&(-1)^n \bigg(\frac{\int_0^1 \frac{\sqrt{1-\tau}}{1+\tau}
\, d\tau+2}{n+1}-4 (n+2) \frac{\prod_{\text{i}_1=1}^n 
-\frac{2 \text{i}_1}{-1+2 \text{i}_1}}{(n+1) (2 n+1) (2 n+3)}\bigg)\\
&&+\frac{2 (-1)^n \sum_{\text{i}_1=1}^n \frac{\prod_{\text{i}_2=1}^{\text{i}_1}
-\frac{2 \text{i}_2}{-1+2 \text{i}_2}}{\text{i}_1}}{n+1}+\frac{\int_0^1
\frac{\sqrt{1-\tau}}{1+\tau} \, d\tau}{n+1}.
\end{eqnarray*}
{\tt HarmonicSums}  offers the following command to apply 
this method:
\end{example}
\begin{mma}
{
\In \text{\bf GeneralMellin[GL[$\left\{\frac{(1 - \text{VarGL})^{\frac{1}{2}}}{1 + \text{VarGL}}\right\},x$],$x$,Method$
\rightarrow 3$]}\\
\Out {(-1)^n \bigg(\frac{\text{GL}\big[\big\{\frac{\sqrt{1-\text{VarGL}}}{\text{VarGL}+1}\big\},1\big]+2}{n+1}-4 (n+2) 
\frac{\prod_{\text{i}_1=1}^n -\frac{2 \text{i}_1}{-1+2 \text{i}_1}}{(n+1) (2 n+1) (2 n+3)}\bigg)+\frac{2 (-1)^n 
\sum_{\text{i}_1=1}^n \frac{\prod_{\text{i}_2=1}^{\text{i}_1} -\frac{2 \text{i}_2}{-1+2 \text{i}_2}}{\text{i}_1}}{n+1}
+\frac{\text{GL}\big[\big\{\frac{\sqrt{1-\text{VarGL}}}{\text{VarGL}+1}\big\},1\big]}{n+1}.}\\
}
\end{mma}

So far we presented a general toolbox to express the Mellin transform 
of nested integrals which are $D$-finite to indefinite nested sums and
products whenever this is possible. 
The purpose of the following results is to provide direct rewrite rules to
compute the Mellin transform of certain generalized harmonic polylogarithms. Here we will exploit the
specific structure of our nested integrals instead of merely using their $D$-finiteness. In order to compute their Mellin
transforms we proceed recursively by applying the identities given in the
following lemmas. Moreover, the theorem below gives some sufficient conditions
on when the Mellin transform can be written in terms of nested (inverse)
binomial sums.
\begin{lemma}\label{lem:Transform1}
For $c \in \mathbb{C}\setminus\{0\}$ we have that
 \begin{eqnarray}
  \int_0^{1-\ep}dx x^Nf(x) &=& \frac{1}{N+1}\left((1-\ep)^{N+1}f(1-\ep)-\int_0^{1-\ep}dx x^{N+1}f^\prime(x)\right)\\*
  \int_0^1dx x^N\frac{f(x)}{x-c} &=& c^N\left(\int_0^1dx \frac{f(x)}{x-c} +
\sum_{i=1}^N\frac{1}{c^i}\int_0^1dx x^{i-1}f(x)\right).
 \end{eqnarray}
\end{lemma}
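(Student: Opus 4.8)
The plan is to treat the two identities separately, since each reduces to an elementary manipulation.

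For the first identity I would simply integrate by parts on the truncated interval $[0,1-\ep]$. Taking $u=f(x)$ and $dv=x^N\,dx$, so that $v=x^{N+1}/(N+1)$, gives
\[
 \int_0^{1-\ep}dx\,x^Nf(x) = \left[\frac{x^{N+1}}{N+1}f(x)\right]_0^{1-\ep} - \frac{1}{N+1}\int_0^{1-\ep}dx\,x^{N+1}f'(x).
\]
The upper boundary contributes $(1-\ep)^{N+1}f(1-\ep)/(N+1)$, while the boundary term at $x=0$ vanishes because $x^{N+1}f(x)\to0$ under the standing integrability assumptions on $f$ (here $N\in\mathbb N$). Rearranging yields exactly the claimed right-hand side; this is just the $p=1$ Mellin differentiation rule of Section~\ref{sec:2} adapted to the cutoff interval.

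For the second identity the key is the finite geometric decomposition coming from the factorization $x^N-c^N=(x-c)\sum_{i=1}^N c^{N-i}x^{i-1}$, namely
\[
 \frac{x^N}{x-c} = \frac{c^N}{x-c} + \frac{x^N-c^N}{x-c} = \frac{c^N}{x-c} + \sum_{i=1}^N c^{N-i}x^{i-1}.
\]
Multiplying by $f(x)$, integrating over $[0,1]$, and factoring out $c^N$ (using $c^{N-i}=c^N/c^i$) produces the right-hand side immediately. All the algebra here is routine.

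The only point requiring care is the case $c\in(0,1)$, where both $\frac{x^N}{x-c}$ and $\frac{c^N}{x-c}$ carry a non-integrable singularity at $x=c$ and the integrals must be read as principal values in the sense of \eqref{eq:modifiedMellin}. This is where I expect the one mild subtlety to lie: I would argue that the polynomial remainder $\sum_{i=1}^N c^{N-i}x^{i-1}$ is regular at $x=c$, so the difference of the two singular integrands is integrable there, and hence splitting the principal-value integral term by term is legitimate. For every other $c\in\mathbb C\setminus\{0\}$ there is no singularity on $[0,1]$ and the splitting is trivially valid. Verifying this splitting in the principal-value case is the main (and only) obstacle; the rest of the proof is direct computation.
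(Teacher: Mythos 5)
Your argument is correct. The paper in fact states Lemma~\ref{lem:Transform1} without giving a proof, and what you wrote is precisely the intended elementary derivation: the first identity is integration by parts (the $p=1$ differentiation rule restricted to $[0,1-\ep]$), and the second follows from the factorization $x^N-c^N=(x-c)\sum_{i=1}^Nc^{N-i}x^{i-1}$, which is the same telescoping mechanism that underlies the summation formula \eqref{eq:MellinSummation} (indeed the second identity is that formula applied to $f(x)/x$ with $c\mapsto 1/c$). Your handling of the case $c\in(0,1)$ is also the right one: both singular integrals are to be read in the sense of \eqref{eq:modifiedMellin}, and since the polynomial remainder is regular at $x=c$, the term-by-term splitting of the regularized integral is legitimate.
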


\begin{lemma}\label{lem:Transform2}
 Let $a \in \mathbb{C}\setminus[0,\infty[$ and let $f(x)$ be differentiable on $]0,1[$. Then for all $N \in \mathbb{N}$ and small $\ep>0$ we have
 \begin{eqnarray}
  \int_0^{1-\ep}dx x^N\frac{f(x)}{\sqrt{x-a}} &=& \frac{(4a)^N}{(2N+1)\binom{2N}{N}}\Bigg(\int_0^{1-\ep}dx \frac{f(x)}{\sqrt{x-a}}\\*
  &&+2\sum_{i=1}^N\frac{\binom{2i}{i}}{(4a)^i}\bigg(\sqrt{1-a-\ep}(1-\ep)^if(1-\ep)-\int_0^{1-\ep}dx x^i\sqrt{x-a}f^\prime(x)\bigg)\Bigg).\N
 \end{eqnarray}
\end{lemma}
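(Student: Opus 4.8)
The plan is to reduce the claim to a first-order recurrence in $N$ for the integral $I_N:=\int_0^{1-\ep}dx\,x^N\frac{f(x)}{\sqrt{x-a}}$ and then solve that recurrence explicitly. The prefactor $\frac{(4a)^N}{(2N+1)\binom{2N}{N}}$ is the tell-tale sign: since $\prod_{k=1}^N\frac{2k}{2k+1}=\frac{4^N}{(2N+1)\binom{2N}{N}}$, this is precisely the homogeneous solution of a recurrence whose multiplier is $\frac{2Na}{2N+1}$, so I expect the underlying relation to read $(2N+1)I_N=2Na\,I_{N-1}+2B_N$ for an inhomogeneity $B_N$ built only from the boundary data $f(1-\ep)$ and the integral $\int_0^{1-\ep}x^N\sqrt{x-a}\,f'(x)\,dx$.

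To derive this recurrence I would integrate by parts using $\frac{1}{\sqrt{x-a}}=2\frac{d}{dx}\sqrt{x-a}$, which is legitimate since $a\notin[0,\infty[$ keeps $x-a$ off the branch cut and bounded away from $0$ on $[0,1]$. Writing $I_N=\int_0^{1-\ep}x^Nf(x)\cdot2\frac{d}{dx}\sqrt{x-a}\,dx$ and integrating by parts produces the boundary term $2(1-\ep)^Nf(1-\ep)\sqrt{1-a-\ep}$ together with $-2\int_0^{1-\ep}\sqrt{x-a}\bigl(Nx^{N-1}f(x)+x^Nf'(x)\bigr)dx$. The crucial algebraic step is to rewrite $\sqrt{x-a}\,x^{N-1}=\frac{(x-a)x^{N-1}}{\sqrt{x-a}}$, which folds the $Nx^{N-1}f(x)$ piece back into $N(I_N-a\,I_{N-1})$. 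Collecting the $I_N$ contributions on the left then yields exactly $(2N+1)I_N=2Na\,I_{N-1}+2B_N$ with $B_N=(1-\ep)^Nf(1-\ep)\sqrt{1-a-\ep}-\int_0^{1-\ep}x^N\sqrt{x-a}\,f'(x)\,dx$.

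With the recurrence established I would solve it by the summation-factor method. Dividing through by $P_N:=\prod_{k=1}^N\frac{2ka}{2k+1}=\frac{(4a)^N}{(2N+1)\binom{2N}{N}}$ makes the homogeneous part telescope, so $J_N:=I_N/P_N$ obeys the plain difference equation $J_N-J_{N-1}=\frac{2B_N}{(2N+1)P_N}=\frac{2\binom{2N}{N}}{(4a)^N}B_N$, where I used $(2N+1)P_N=\frac{(4a)^N}{\binom{2N}{N}}$. Summing from $1$ to $N$ with $J_0=I_0=\int_0^{1-\ep}\frac{f(x)}{\sqrt{x-a}}dx$ gives $I_N=P_N\bigl(I_0+2\sum_{i=1}^N\frac{\binom{2i}{i}}{(4a)^i}B_i\bigr)$, which is the asserted identity. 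Equivalently one may verify this closed form directly by induction on $N$, the base case $N=0$ being the trivial $I_0=I_0$ and the step a substitution into the recurrence.

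The main obstacle I anticipate is not the algebra but the analytic bookkeeping at the endpoints under the weak hypothesis that $f$ is merely differentiable on $]0,1[$. In particular, the lower boundary term at $x=0$ in the integration by parts and the convergence of $I_N$ and of $\int_0^{1-\ep}x^i\sqrt{x-a}\,f'(x)\,dx$ must be controlled; I would handle this exactly as in the proof of Lemma~\ref{lem:Convolution1a}, performing the integration by parts on $[\delta,1-\ep]$ and letting $\delta\to0^+$, so that the factor $x^N$ for $N\ge1$ suppresses the contribution at the lower limit. This also explains why the recurrence is only needed for $N\ge1$, the case $N=0$ serving as the base of the induction.
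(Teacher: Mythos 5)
Your proof is correct and rests on exactly the same first-order recurrence $(2N+1)I_N=2Na\,I_{N-1}+2B_N$ that the paper uses (the paper states it in the shifted form $(N+\tfrac32)y(N+1)-a(N+1)y(N)=B_{N+1}$ and simply verifies that both sides of the identity satisfy it together with the $N=0$ initial value). You merely make the argument constructive by deriving the recurrence via integration by parts and then solving it with a summation factor, which is the same approach in substance.
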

\begin{proof}
 It is easy to check that both sides of the equation satisfy the following
recurrence in $N$:
 \[
  (N+\tfrac{3}{2})y(N+1)-a(N+1)y(N) =
\sqrt{1-a-\ep}(1-\ep)^{N+1}f(1-\ep)-\int_0^{1-\ep}dx
x^{N+1}\sqrt{x-a}f^\prime(x).
 \]
 For $N=0$ both sides of the equation trivially agree. Altogether, this implies
that the equation holds for all $N \in \mathbb{N}$.
\end{proof}

The following variants of the previous lemma are proven in a completely analogous way.

\begin{lemma}\label{lem:Transform3}
For $a \in \mathbb{C}\setminus[0,\infty[$ we have that
 \begin{multline}
  \int_0^{1-\ep}dx x^N\frac{f(x)}{\sqrt{x(x-a)}}\ =\ \left(\frac{a}{4}\right)^N\binom{2N}{N}\Bigg(\int_0^{1-\ep}dx \frac{f(x)}{\sqrt{x(x-a)}}\\*
  +\sum_{i=1}^N\frac{(4/a)^i}{i\binom{2i}{i}}\bigg(\sqrt{1-a-\ep}(1-\ep)^{i-1/2}f(1-\ep)
  -\int_0^{1-\ep}dx x^i\sqrt{\frac{x-a}{x}}f^\prime(x)\bigg)\Bigg).
 \end{multline}
\end{lemma}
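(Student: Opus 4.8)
The plan is to mimic the proof of Lemma~\ref{lem:Transform2} verbatim in spirit: I characterize both sides of the claimed identity as the unique solution of one and the same first-order recurrence in $N$, subject to a common value at $N=0$. Write $L(N)$ for the left-hand side, $R(N)$ for the right-hand side, and set $P(N):=\left(\frac{a}{4}\right)^N\binom{2N}{N}$ for the prefactor appearing in $R$. The hypothesis $a\in\mathbb{C}\setminus[0,\infty[$ guarantees that $\sqrt{x(x-a)}$ is analytic and nonvanishing on $]0,1[$ with a fixed branch, so all integrands are well defined.

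First I would derive the recurrence satisfied by $L(N)$. The key is the rule $\frac{d}{dx}\sqrt{x(x-a)}=\frac{2x-a}{2\sqrt{x(x-a)}}$, so that $\int_0^{1-\ep}dx\,x^Nf(x)\frac{d}{dx}\sqrt{x(x-a)}$ equals $L(N+1)-\frac{a}{2}L(N)$. Integrating the same expression by parts produces a boundary contribution $\sqrt{(1-\ep)(1-a-\ep)}\,(1-\ep)^Nf(1-\ep)$ at the upper endpoint together with the integral of $\bigl(Nx^{N-1}f(x)+x^Nf'(x)\bigr)\sqrt{x(x-a)}$. Applying the algebraic identity $x^{N-1}\sqrt{x(x-a)}=\bigl(x^{N+1}-ax^{N}\bigr)/\sqrt{x(x-a)}$, the $Nx^{N-1}f(x)$ term folds back into $N\bigl(L(N+1)-aL(N)\bigr)$, and collecting terms yields
\begin{equation*}
(N+1)L(N+1)-a\bigl(N+\tfrac12\bigr)L(N)=\sqrt{(1-\ep)(1-a-\ep)}\,(1-\ep)^Nf(1-\ep)-\int_0^{1-\ep}dx\,x^N\sqrt{x(x-a)}f'(x).
\end{equation*}

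Next I would verify that $R(N)$ obeys the very same recurrence. The main computation is that $P(N)$ solves the homogeneous equation $(N+1)P(N+1)=a\bigl(N+\tfrac12\bigr)P(N)$, which follows immediately from $\binom{2N+2}{N+1}/\binom{2N}{N}=2(2N+1)/(N+1)$. Writing $R(N)=P(N)\bigl(C+\sum_{i=1}^N\frac{(4/a)^i}{i\binom{2i}{i}}\phi(i)\bigr)$ with $C=\int_0^{1-\ep}dx\,f(x)/\sqrt{x(x-a)}$ and $\phi(i)$ the bracketed inhomogeneity, the homogeneous relation turns the accumulated-sum part of $(N+1)R(N+1)$ into $a(N+\tfrac12)R(N)$, while the freshly added summand collapses via $P(N+1)(4/a)^{N+1}/\binom{2N+2}{N+1}=1$ to exactly $\phi(N+1)$. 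Rewriting $x^{N+1}\sqrt{(x-a)/x}=x^N\sqrt{x(x-a)}$ and $(1-\ep)^{N+1/2}=\sqrt{1-\ep}\,(1-\ep)^N$ shows that $\phi(N+1)$ is termwise identical to the right-hand side of the recurrence above, so $R$ satisfies it too. The base case is trivial: at $N=0$ the sum is empty and $\binom{0}{0}=1$, so $L(0)=C=R(0)$. Since the recurrence has leading coefficient $N+1\neq0$, it determines $y(N)$ uniquely from $y(0)$, whence $L(N)=R(N)$ for all $N$.

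The step requiring the most care, and the only genuine obstacle, is justifying the vanishing of the boundary term at $x=0$ and the convergence of the integrals. Near $x=0$ one has $\sqrt{x(x-a)}\sim\sqrt{-a}\,\sqrt{x}$, so the integrated term $x^Nf(x)\sqrt{x(x-a)}$ behaves like $x^{N+1/2}f(x)$ and tends to $0$ for $N\ge0$ under the standing regularity assumption on $f$; the same estimate secures absolute convergence of the integrals on both sides, including the base-case integral $C$ where $x^{-1/2}f(x)$ must remain integrable. One must also keep the branch of the root fixed consistently, which is precisely what $a\notin[0,\infty[$ permits, in accordance with the modified Mellin prescription \eqref{eq:modifiedMellin} used throughout.
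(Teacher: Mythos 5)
Your proposal is correct and follows exactly the route the paper intends: the paper proves Lemma~\ref{lem:Transform3} "in a completely analogous way" to Lemma~\ref{lem:Transform2}, i.e.\ by checking that both sides satisfy the first-order recurrence $(N+1)y(N+1)-a(N+\tfrac12)y(N)=\sqrt{(1-\ep)(1-a-\ep)}(1-\ep)^Nf(1-\ep)-\int_0^{1-\ep}dx\,x^N\sqrt{x(x-a)}f'(x)$ and agree at $N=0$. Your recurrence, the homogeneous relation for the prefactor $(a/4)^N\binom{2N}{N}$, and the collapse of the freshly added summand to the inhomogeneity are all verified correctly, so nothing further is needed.
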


\begin{lemma}\label{lem:Transform3a}
For $a \in \mathbb{C}\setminus[0,\infty[$ we have that
 \begin{multline}
  \int_0^{1-\ep}dx x^N\sqrt{\frac{x}{x-a}}f(x)\ =\ \frac{1}{2}\left(\frac{a}{4}\right)^N\binom{2(N+1)}{N+1}\Bigg(\int_0^{1-\ep}dx \sqrt{\frac{x}{x-a}}f(x)\\*
  +\sum_{i=1}^N\frac{(4/a)^i}{(2i+1)\binom{2i}{i}}\bigg(\sqrt{1-a-\ep}(1-\ep)^{i+1/2}f(1-\ep)
  -\int_0^{1-\ep}dx x^i\sqrt{x(x-a)}f^\prime(x)\bigg)\Bigg).
 \end{multline}
\end{lemma}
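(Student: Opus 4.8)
\emph{Proof plan.} Following the template used for Lemma~\ref{lem:Transform2}, the plan is to show that both sides of the asserted identity, regarded as sequences in $N$, obey one and the same two-term recurrence and agree at $N=0$; uniqueness of the solution then forces equality for all $N$. Abbreviate the left-hand side by $y(N):=\int_0^{1-\ep}dx\,x^N\sqrt{x/(x-a)}\,f(x)$. Since $a\notin[0,\infty[$, on $]0,1-\ep[$ the quantities $x$ and $x-a$ stay strictly positive in the relevant branch, so the pointwise identities $\sqrt{x(x-a)}=(x-a)\sqrt{x/(x-a)}=x\big/\sqrt{x/(x-a)}$ hold and may be substituted freely.

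First I would produce the recurrence for $y(N)$ by integrating the total derivative $\frac{d}{dx}\big[x^{N+1}\sqrt{x(x-a)}\,f(x)\big]$ over $[0,1-\ep]$. Using $\frac{d}{dx}\sqrt{x(x-a)}=\frac{2x-a}{2\sqrt{x(x-a)}}$ together with the algebraic identities above to rewrite $x^{N}\sqrt{x(x-a)}=(x^{N+1}-a\,x^{N})\sqrt{x/(x-a)}$ and $x^{N+1}/\sqrt{x(x-a)}=x^{N}\sqrt{x/(x-a)}$, the two interior terms not involving $f'$ collapse into a linear combination of $y(N+1)$ and $y(N)$. The upper endpoint contributes $\sqrt{1-a-\ep}\,(1-\ep)^{N+3/2}f(1-\ep)$, while the lower endpoint vanishes because $\sqrt{x(x-a)}\to0$ as $x\to0$. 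The outcome is
\[
 (N+2)\,y(N+1)-a\big(N+\tfrac32\big)y(N)=\sqrt{1-a-\ep}\,(1-\ep)^{N+3/2}f(1-\ep)-\int_0^{1-\ep}dx\,x^{N+1}\sqrt{x(x-a)}\,f'(x),
\]
whose right-hand side is precisely the bracketed summand $B(i)$ of the lemma evaluated at $i=N+1$.

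Next I would verify that the proposed closed form satisfies the same recurrence. Writing it as $c(N)\big(y(0)+\sum_{i=1}^N d(i)\,B(i)\big)$ with $c(N)=\tfrac12(a/4)^N\binom{2(N+1)}{N+1}$ and $d(i)=\frac{(4/a)^i}{(2i+1)\binom{2i}{i}}$, the increment from $N$ to $N+1$ reproduces the recurrence exactly when $\frac{c(N+1)}{c(N)}=\frac{a(N+\tfrac32)}{N+2}$ and $c(N+1)\,d(N+1)=\frac{1}{N+2}$. Both reduce to the elementary ratio $\binom{2N+4}{N+2}\big/\binom{2N+2}{N+1}=\frac{2(2N+3)}{N+2}$, so they hold. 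The base case is immediate: at $N=0$ the sum is empty and $\tfrac12\binom{2}{1}y(0)=y(0)$, matching the left-hand side. Induction then yields the claim for every $N\in\mathbb{N}$.

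The one genuinely delicate step is the boundary analysis inside the integration by parts: I must ensure that $x^{N+1}\sqrt{x(x-a)}\,f(x)$ vanishes at $x=0$ and that $f'$ is integrable against $x^{N+1}\sqrt{x(x-a)}$ near both endpoints, so that the total-derivative computation is legitimate. The damping supplied by $\sqrt{x(x-a)}\sim\sqrt{-a}\,\sqrt{x}$ at the origin, together with the differentiability hypothesis on $f$ on $]0,1[$, is exactly what secures this, just as in the proof of Lemma~\ref{lem:Transform2}. Everything remaining is routine bookkeeping with the two binomial prefactors.
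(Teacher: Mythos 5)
Your proposal is correct and follows essentially the same route as the paper: the paper proves Lemma~\ref{lem:Transform2} by checking that both sides satisfy a first-order recurrence in $N$ with matching value at $N=0$, and states that Lemma~\ref{lem:Transform3a} is "proven in a completely analogous way," which is exactly what you carry out. Your recurrence $(N+2)y(N+1)-a(N+\tfrac32)y(N)=B(N+1)$, the ratio checks $c(N+1)/c(N)=a(N+\tfrac32)/(N+2)$ and $c(N+1)d(N+1)=1/(N+2)$, and the base case $c(0)=\tfrac12\binom{2}{1}=1$ all verify correctly.
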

\noindent
Also the lemmata~\ref{lem:Transform2}, \ref{lem:Transform3} and \ref{lem:Transform3a} were derived using {\tt Singular} 
\cite{SINGULAR}.

Based on these formulae we now are in the position to prove the following criterion on the expressibility of the Mellin transform in terms of nested (inverse) binomial sums. Note that analytic continuation can be used to relax some of the restrictions 
on the position of the singularities.

\begin{theorem}
 Let $r_0(x),\dots,r_k(x) \in \mathbb{C}(x)\setminus\{0\}$ without a pole at
$x=1$, let $p_0(x),\dots,p_k(x) \in \mathbb{C}[x]\setminus\{0\}$ with all 
their roots in $]{-}\infty,0]$, and set
$h_i(x):=\frac{r_i(x)}{\sqrt{p_i(x)}}$. Assume that each of the products 
$h_0(x)\cdots h_i(x)$ is of one of the forms $r(x)$, $\frac{r(x)}{\sqrt{x}}$, $\frac{r(x)}{\sqrt{x-a}}$,
or $\frac{r(x)}{\sqrt{x(x-a)}}$ for some $r(x)\in\mathbb C(x)\setminus\{0\}$
(not necessarily all of the same form). If the integral 
$\int_0^1dx h_0(x)\HA_{\sf h_1,\dots,h_k}(x)$ exists, then the Mellin transform 
$\Mvec[h_0(x)\HA_{\sf h_1,\dots,h_k}(x)](N)$ is expressible in terms of nested (inverse) binomial sums.
\end{theorem}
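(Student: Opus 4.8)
The plan is to prove the statement by a double induction: an outer induction on the depth $k$ of the iterated integral $\HA_{\sf h_1,\dots,h_k}$, and, for each fixed depth, an inner induction that strips the rational prefactor of the head down to a pure root factor. The four admissible head-forms $r(x)$, $r(x)/\sqrt{x}$, $r(x)/\sqrt{x-a}$, $r(x)/\sqrt{x(x-a)}$ are precisely the cases covered by Lemma~\ref{lem:Transform1}, Lemma~\ref{lem:Transform2}, Lemma~\ref{lem:Transform3}, and Lemma~\ref{lem:Transform3a}, and each application of one of these lemmas produces exactly one nested-sum layer: an outer weight of the form $\binom{2N}{N}^{\pm1}$, $1/((2N{\pm}1)\binom{2N}{N})$, or a geometric factor $c^N$, together with a single sum over $i$ whose summand is $\binom{2i}{i}^{\pm1}c^i/i^m$ times a Mellin transform evaluated at $i$. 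Building these layers recursively yields exactly a nested (inverse) binomial sum.

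For the inner reduction at fixed depth I would first apply partial fractions to the rational part $r(x)$ of the head, splitting it into a polynomial part and pole parts $A/(x-c)^j$. Each monomial $x^m$ of the polynomial part is removed by the shift property \eqref{eq:MellinShifts}, $\Mvec[x^m g(x)](N)=\Mvec[g(x)](N+m)$, which merely shifts the argument of an otherwise unchanged problem. Each pole is peeled by the second identity of Lemma~\ref{lem:Transform1}, which rewrites $\Mvec[g(x)/(x-c)](N)$ as $c^N\big(\Mvec[g/(x-c)](0)+\sum_{i=1}^N c^{-i}\Mvec[g](i-1)\big)$, i.e.\ as a generalized-harmonic layer whose inner Mellin transform has one fewer pole and the same depth and root type. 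Since the total pole multiplicity strictly decreases, this inner induction terminates and reduces the problem to Mellin transforms with a \emph{pure-root} head $\rho(x)\HA_{\sf h_1,\dots,h_k}(x)$, where $\rho\in\{1/\sqrt{x},\,1/\sqrt{x-a},\,1/\sqrt{x(x-a)},\,\sqrt{x/(x-a)}\}$; the purely rational head ($\rho=1$) is lowered instead by the Section~\ref{sec:3} identity $\Mvec[\HA_{\sf h_1,\vec{b}}(x)](N)=\frac{1}{N+1}\Mvec[xh_1(x)\HA_{\sf \vec{b}}(x)](N)$.

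With the head reduced to a pure root I would apply the matching lemma among Lemma~\ref{lem:Transform2}, Lemma~\ref{lem:Transform3}, Lemma~\ref{lem:Transform3a} with $f(x)=\HA_{\sf h_1,\dots,h_k}(x)$. The decisive simplification is that now $f'(x)=-h_1(x)\HA_{\sf h_2,\dots,h_k}(x)$, with no contribution from differentiating a rational prefactor, so the inner integrand is a single term $\sqrt{x-a}\,h_1(x)\HA_{\sf h_2,\dots,h_k}(x)$ (and analogously for the other root types) of depth $k-1$. Its new head is $g(x):=-\sqrt{x-a}\,h_1(x)$, and the point that makes the outer induction close is that for every truncation one has $g\,h_2\cdots h_j=-(x-a)\,r(x)^{-1}\,(h_0 h_1\cdots h_j)$, a rational multiple of an admissible four-form product; since a rational function times any of the four forms is again one of the four forms, the reduced sequence satisfies the hypothesis and the induction hypothesis applies. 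The base case $k=0$ is the same reduction with $f\equiv1$, so all summand terms drop out and one is left with the explicit prefactor times a constant boundary integral, manifestly a single-layer binomial sum, while the rational head gives generalized harmonic sums, the $b=0$ special case.

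The main obstacle I expect is analytic and bookkeeping rather than structural: controlling the $\ep\to0$ limits and the boundary terms $(1-\ep)^i f(1-\ep)$ and $\int_0^{1-\ep}dx\,\rho(x)$ at the endpoint $x=1$, where some letters carry non-integrable singularities handled only through the Hadamard finite-part regularization built into the definition of $\HA$. Here I would use the hypotheses that each $r_i$ is regular at $x=1$ and that $\int_0^1 dx\,h_0(x)\HA_{\sf h_1,\dots,h_k}(x)$ exists to guarantee that the limits are finite and may be exchanged with the finite sums. A secondary technical point is that poles $c$ of $r$ lying in $(0,1)$, and the degenerate form $\rho=1/\sqrt{x}$ (the limit $a\to0$ of $1/\sqrt{x-a}$), produce singular integrands that must be interpreted through the modified Mellin transform \eqref{eq:modifiedMellin} and resolved by analytic continuation, exactly as carried out in Example~1; the constants thereby produced enter only as admissible coefficients of the resulting nested binomial sums.
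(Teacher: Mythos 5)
Your proposal is correct and follows essentially the same route as the paper's own proof: partial fractions plus Lemma~\ref{lem:Transform1} to strip the rational head down to one of the four pure-root cases, induction on the depth $k$ using Lemmas~\ref{lem:Transform2} and \ref{lem:Transform3} (resp.\ \ref{lem:Transform3a}), and the key closure observation that the new prefactor of $\HA_{\sf h_2,\dots,h_k}$ is a rational multiple of $h_0h_1$ and hence again of one of the four admissible forms. The only cosmetic difference is that you keep $\sqrt{x/(x-a)}$ as a separate head type where the paper absorbs it into the $r(x)/\sqrt{x(x-a)}$ case.
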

\begin{proof}
 Assume without loss of generality that $p_0(x)$ is monic and square-free and hence is equal to $1$, $x$, $x-a$, or 
$x(x-a)$ for some $a \in ]{-}\infty,0[$. Determine the full partial fraction decomposition of $r_0(x)$, then by linearity 
and Lemma~\ref{lem:Transform1} we can assume without loss of generality that $r_0(x)=1$ leaving the four cases $h_0(x) \in 
\{1,\frac{1}{\sqrt{x}},\frac{1}{\sqrt{x-a}},\frac{1}{\sqrt{x(x-a)}}\}$. We proceed by induction on $k$. If $k=0$, we trivially have 
$\Mvec[1](N)=\frac{1}{N+1}$ and $\Mvec\left[\frac{1}{\sqrt{x}}\right](N)=\frac{1}{N+1/2}$ as well as the following two identities obtained by Lemma~\ref{lem:Transform2} and Lemma~\ref{lem:Transform3} respectively.
 \begin{eqnarray}
  \Mvec\left[\frac{1}{\sqrt{x-a}}\right](N) &=& \frac{(4a)^N}{(N+1/2)\binom{2N}{N}}\left(\sqrt{1-a}-\sqrt{-a}+\sqrt{1-a}\sum_{i=1}^N\frac{\binom{2i}{i}}{(4a)^i}\right)\\
  \Mvec\left[\frac{1}{\sqrt{x(x-a)}}\right](N) &=& \left(\frac{a}{4}\right)^N\binom{2N}{N}\left(2\arcsinh\left(\frac{1}{\sqrt{-a}}\right)+\sqrt{1-a}\sum_{i=1}^N\frac{(4/a)^i}{i\binom{2i}{i}}\right)
 \end{eqnarray}
 If $k>0$, by the lemmas above we obtain the following identities for the four cases. Note that 
$\HA_{\sf h_1,\dots,h_k}(1)=0$ by the assumptions above. The following Mellin transforms have the representation
 \begin{eqnarray}
  \Mvec\left[\HA_{\sf h_1,\dots,h_k}(x)\right](N) &=& \frac{1}{N+1}\Mvec\left[xh_1(x)\HA_{\sf h_2,\dots,h_k}(x)\right](N)\\
  \Mvec\left[\frac{\HA_{\sf h_1,\dots,h_k}(x)}{\sqrt{x}}\right](N) &=& \frac{1}{N+1/2}\Mvec\left[\sqrt{x}h_1(x)\HA_{\sf h_2,\dots,h_k}(x)\right](N)\\
  \Mvec\left[\frac{\HA_{\sf h_1,\dots,h_k}(x)}{\sqrt{x-a}}\right](N) &=& \frac{(4a)^N}{(2N+1)\binom{2N}{N}}\Bigg(\int_0^1dx \frac{\HA_{\sf h_1,\dots,h_k}(x)}{\sqrt{x-a}}+\\*
  &&+2\sum_{i=1}^N\frac{\binom{2i}{i}}{(4a)^i}\Mvec\left[\sqrt{x-a}h_1(x)\HA_{\sf h_2,\dots,h_k}(x)\right](i)\Bigg)\N\\
  \Mvec\left[\frac{\HA_{\sf h_1,\dots,h_k}(x)}{\sqrt{x(x-a)}}\right](N) &=& \left(\frac{a}{4}\right)^N\binom{2N}{N}\Bigg(\int_0^1dx \frac{\HA_{\sf h_1,\dots,h_k}(x)}{\sqrt{x(x-a)}}+\\*
  &&+\sum_{i=1}^N\frac{(4/a)^i}{i\binom{2i}{i}}\Mvec\left[\sqrt{\frac{x-a}{x}}h_1(x)\HA_{\sf h_2,\dots,h_k}(x)\right](i)\Bigg)\N
 \end{eqnarray}
 Each pre-factor of $\HA_{\sf h_2,\dots,h_k}(x)$ in the Mellin transforms on the right hand sides is a rational multiple of $h_0(x)h_1(x)$. Hence it has one of the forms $r(x)$, $\frac{r(x)}{\sqrt{x}}$, $\frac{r(x)}{\sqrt{x-a}}$, or $\frac{r(x)}{\sqrt{x(x-a)}}$ as well. Now we apply the induction hypothesis to complete the proof.
\end{proof}

The previous theorem shows that even three square root singularities in a single letter are allowed in certain cases. 
One of the simplest examples in terms of the letters
\begin{equation}
h_{\sf 0}(x):=\frac{1}{\sqrt{x+1}}~~~~~\text{and}~~~~~h_{\sf 1}(x):=\frac{1}{\sqrt{x(1-x)(1+x)}}
\end{equation}
is given by
\begin{eqnarray}
 \Mvec\left[\frac{\HA_{\sf h_1}(x)}{\sqrt{x+1}}\right](N) &=& \frac{(-4)^N}{(2N+1)\binom{2N}{N}}\Bigg(\int_0^1dx\frac{\HA_{\sf h_1}(x)}{\sqrt{x+1}}+2\sum_{i=1}^N\frac{\binom{2i}{i}}{(-4)^i}\Mvec\left[\frac{1}{\sqrt{x(1-x)}}\right](i)\Bigg)\N\\
 &=&\frac{(-4)^N}{(2N+1)\binom{2N}{N}}\left(2\pi-
\frac{\Gamma^2\left(\tfrac{1}{4}\right)}{\sqrt{2 \pi}}
+2\pi\sum_{i=1}^N\binom{2i}{i}^2\left(-\frac{1}{16}\right)^i\right).
\label{eq:ELL1}
\end{eqnarray}
Due to the emergence of the term $\binom{2i}{i}^2$ in (\ref{eq:ELL1}) the
corresponding iterated integral $\HA_{\sf h_1}(x)$ of depth  {\sf d = 1} is 
not an elementary function anymore, but it can be written in terms of an incomplete elliptic integral of the first kind,
\begin{equation}
 \HA_{\sf h_1}(x) = \sqrt{2}\left((1-i)K\left(\frac{1}{\sqrt{2}}\right)+iF\left(\frac{\pi}{2}
+i\arcsinh\left(\sqrt{x}\right),\frac{1}{\sqrt{2}}\right)\right).
\end{equation}
An analogous behaviour has been observed in case of inverse binomial sums in Ref.~\cite{Ablinger:2013eba}, Eq.~(27), 
in a similar context before, noting that Mellin convolutions can be converted into iterated integrals, as has been 
described in Section~\ref{sec:4}. 

The direct rewrite rules above are as well implemented in {\tt HarmonicSums}. The command \texttt{MellinG} applies them:
\begin{mma}
{
\In \text{\bf MellinG[GL[$\left\{\frac{(1 + \text{VarGL})^{\frac{1}{2}}}{\text{VarGL}}\right\},x$],$x$,$n$]}\\
\Out {-\frac{1}{n+1} \big(
        \frac{\sqrt{2}}{n+1}
        +\frac{(-1)^n 2^{2 n+2}}{(n+1) (2 n+3) \binom{2 n+2}{n+1}} \
\big(
                \sqrt{2} \big(
                        \frac{\big(
                                -\frac{1}{4}\big)^{1
                        +n
                        } (2 (1
                        +n
                        ))!}{((1
                        +n
                        )!)^2} (1)
                        +
                        \sum_{\tau_1=1}^n \frac{\big(
                                -\frac{1}{4}\big)^{\tau_1} \big(
                                2 \tau_1\big)!}{\big(
                                \tau_1!\big)^2}
                \big)
+\sqrt{2}-1\big)
\big)
+\frac{\text{GL}\big[
        \big\{\frac{\sqrt{\text{VarGL}+1}}{\text{VarGL}}\big\},1\big]}{n+1}}\\
}
\end{mma}
\section{Generating Functions and Infinite Nested Binomial Sums}
\label{sec:7}

\vspace*{1mm}
\noindent
In the previous Section we dealt with finite nested binomial sums. We would
like to turn now to the infinite versions of (inverse) 
binomial sums. Infinite sums of this kind have been considered previously in
Refs.~\cite{Ogreid:1997bx,Fleischer:1998nb,Kalmykov:2000qe,Davydychev:2000na,Borwein:2001,
Davydychev:2003mv,Weinzierl:2004bn,Kalmykov:2007dk}.\footnote{In
Ref.~\cite{Weinzierl:2004bn} also infinite sums are considered, which 
result of the $\Gamma$-function around rational numbers.} They can be seen as limiting cases of the finite nested sums 
considered so far, introducing a new parameter $x$ in addition. Starting from their Mellin representation we may send 
$N\to\infty$ to obtain an integral 
representation for the infinite sum, provided the sum converges. On the other hand infinite sums can also be 
considered as specializations of generating functions. So, if we are given an
integral representation of the generating 
function of a sequence then we can obtain an integral representation for the
infinite sum over that sequence. These two approaches to infinite sums can be summarized by the
following formula:
\[
 \sum_{i=1}^\infty f(i) = \lim_{N\to\infty}\sum_{i=1}^N f(i) =
\lim_{x\to1}\sum_{i=1}^\infty x^if(i).
\]
Generating functions are also important in their own right. In the following we will consider two ways to compute them.

\subsection{Generating Functions from Sum Expressions}
We consider generating functions of sequences that are given by nested sums. In analogy to the 
formulae we developed in the previous Sections for the computation of convolution integrals and Mellin transforms, we aim 
at an appropriate set of identities which can be applied recursively in order to express generating functions in terms of iterated integrals. First we summarize a few well known properties, which will be useful in our context.
\begin{lemma}
 \begin{eqnarray}
  \sum_{n=1}^\infty\frac{x^n}{n}f(n) &=& \int_0^xdt\frac{1}{t}\sum_{n=1}^\infty t^nf(n)\label{eq:GenFunReciprocal}\\
  \sum_{n=1}^\infty x^n\sum_{i=1}^nf(i) &=& \frac{1}{1-x}\sum_{n=1}^\infty x^nf(n)\label{eq:GenFunSum}\\
  \sum_{n=1}^\infty\frac{x^n}{n+1}f(n) &=& \frac{1}{x}\int_0^xdt\sum_{n=1}^\infty t^nf(n)\\
  &=& \sum_{n=1}^\infty\frac{x^n}{n}f(n)-\frac{1}{x}\int_0^xdt\sum_{n=1}^\infty \frac{t^n}{n}f(n)
 \end{eqnarray}
\end{lemma}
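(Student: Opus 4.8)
The plan is to regard all four identities as statements about the generating function $g(x):=\sum_{n=1}^\infty x^n f(n)$ and to verify each one by comparing coefficients of $x^n$, working in the ring of formal power series where term-by-term integration and multiplication are justified purely algebraically; the analytic case then follows on any $x$-interval where the relevant series converge. First I would prove \eqref{eq:GenFunReciprocal} by integrating $g(t)/t$ term by term, using $\int_0^x t^{n-1}\,dt=x^n/n$, so that $\int_0^x\frac{dt}{t}\sum_{n\geq1}t^nf(n)=\sum_{n\geq1}\frac{x^n}{n}f(n)$. The identical device handles the first equality of the final line: $\frac1x\int_0^x dt\,\sum_{n\geq1}t^nf(n)=\frac1x\sum_{n\geq1}f(n)\frac{x^{n+1}}{n+1}=\sum_{n\geq1}\frac{x^n}{n+1}f(n)$.

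Next I would establish \eqref{eq:GenFunSum} via the Cauchy product of $g(x)$ with the geometric series $\frac{1}{1-x}=\sum_{k\geq0}x^k$. The coefficient of $x^m$ in $\frac{1}{1-x}g(x)$ is $\sum_{k+n=m,\,n\geq1}f(n)=\sum_{n=1}^m f(n)$, which is exactly the coefficient of $x^m$ in $\sum_{n\geq1}x^n\sum_{i=1}^n f(i)$, giving the claim.

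Finally, for the second equality of the last line I would reduce to \eqref{eq:GenFunReciprocal} together with the partial-fraction identity $\frac1n-\frac{1}{n(n+1)}=\frac{1}{n+1}$. Computing the correction term explicitly, $\frac1x\int_0^x dt\,\sum_{n\geq1}\frac{t^n}{n}f(n)=\sum_{n\geq1}\frac{f(n)}{n(n+1)}x^n$, so that $\sum_{n\geq1}\frac{x^n}{n}f(n)-\frac1x\int_0^x dt\,\sum_{n\geq1}\frac{t^n}{n}f(n)=\sum_{n\geq1}x^n f(n)\bigl(\tfrac1n-\tfrac{1}{n(n+1)}\bigr)=\sum_{n\geq1}\frac{x^n}{n+1}f(n)$, which agrees with the left-hand side and with the first equality already proved.

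The main obstacle is not algebraic—each step is a one-line coefficient comparison—but analytic: when $f$ forces one to work with genuine functions rather than formal series, one must justify interchanging $\sum_n$ with $\int_0^x$ and with multiplication by $\frac{1}{1-x}$. This is legitimate on any interval of $x$ on which the series converge uniformly on compact subsets, and is precisely the caveat already flagged in the surrounding text.
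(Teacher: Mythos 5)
Your proof is correct. The paper in fact offers no proof of this lemma at all --- it introduces these four identities as ``well known properties'' and moves directly on --- so your coefficient-comparison argument supplies a justification the paper omits. Each step checks out: term-by-term integration gives $\int_0^x t^{n-1}\,dt = x^n/n$ and $\tfrac1x\int_0^x t^n\,dt = x^n/(n+1)$ for the first and third identities, the Cauchy product with the geometric series gives the second, and the final equality reduces to $\tfrac1n-\tfrac1{n(n+1)}=\tfrac1{n+1}$. It is worth noting that for the subsequent (nontrivial) lemmas of this section the paper switches to a different technique: it shows that both sides satisfy the same first-order initial value problem and invokes uniqueness. That ODE route is what generalizes to the binomial-weighted cases, where no clean closed form for the coefficients is available; for the present lemma your direct formal-power-series verification is the more elementary and transparent choice, and your closing remark about justifying the interchange of sum and integral in the analytic setting is exactly the caveat the surrounding text flags.
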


\noindent
Next, we give some identities specifically useful to expressions involving binomial coefficients. 
Related formulae can also be found in the Appendix of \cite{Fleischer:1998nb}, which do not 
explicitly express the results as iterated integrals, however.
\begin{lemma}
 \begin{equation}\label{eq:GenFunBinomialSum}
  \sum_{n=1}^\infty x^n\binom{2n}{n}\sum_{i=1}^nf(i) = \frac{1}{4\sqrt{\frac{1}{4}-x}}\int_0^xdt\frac{1}
{t\sqrt{\frac{1}{4}-t}}\sum_{n=1}^\infty t^nn\binom{2n}{n}f(n).
 \end{equation}
\end{lemma}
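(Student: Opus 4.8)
The plan is to recast both sides as functions of $x$ that solve one and the same first-order linear ordinary differential equation with a matching initial condition, exactly in the spirit of the proofs of Lemma~\ref{lem:Convolution1a} and Lemma~\ref{lem:Transform2}. First I would introduce the auxiliary generating function $H(t):=\sum_{n=1}^\infty t^n\binom{2n}{n}f(n)$, so that $\sum_{n=1}^\infty t^n n\binom{2n}{n}f(n)=tH'(t)$ and the factor $\tfrac1t$ in the integrand cancels cleanly. With this, the right-hand side of \eqref{eq:GenFunBinomialSum} becomes
\begin{equation}
 Q(x):=\frac{1}{4\sqrt{\tfrac14-x}}\int_0^x dt\,\frac{H'(t)}{\sqrt{\tfrac14-t}}.
\end{equation}
I also abbreviate the left-hand side as $P(x):=\sum_{n=1}^\infty x^n\binom{2n}{n}\sum_{i=1}^nf(i)$. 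The goal is then simply to prove $P=Q$.

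Next I would show that $Q$ satisfies the initial value problem $(1-4x)y'(x)=2y(x)+H'(x)$ with $y(0)=0$. This is a short direct computation: writing $A(x)=(\tfrac14-x)^{-1/2}$ one has $A'(x)=\tfrac12 A(x)^3$ and $A(x)^2=\tfrac{4}{1-4x}$, whence differentiating $Q(x)=\tfrac14 A(x)\int_0^x A(t)H'(t)\,dt$ gives $Q'(x)=\tfrac{A'(x)}{A(x)}Q(x)+\tfrac14 A(x)^2H'(x)=\tfrac{2}{1-4x}Q(x)+\tfrac{1}{1-4x}H'(x)$. Multiplying through by $(1-4x)$ yields the stated equation, and $Q(0)=0$ is immediate from the integral having equal limits.

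It then remains to verify that $P$ solves the same initial value problem, which I would do by comparing power-series coefficients. Setting $S(n)=\sum_{i=1}^nf(i)$, the coefficient of $x^m$ in $(1-4x)P'(x)-2P(x)-H'(x)$ can be assembled using the two elementary recurrences $\binom{2(m+1)}{m+1}=\tfrac{2(2m+1)}{m+1}\binom{2m}{m}$ and $S(m+1)=S(m)+f(m+1)$. For $m\ge1$ the $f(m+1)$-contributions cancel pairwise, while the $S(m)$-terms collapse because $2(2m+1)-4m-2=0$; the boundary case $m=0$ reduces to $2S(1)-2f(1)=0$. Hence $P$ satisfies $(1-4x)P'(x)=2P(x)+H'(x)$ with $P(0)=0$, and uniqueness of the solution of this linear first-order IVP forces $P=Q$, proving \eqref{eq:GenFunBinomialSum}. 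I expect the only genuine subtlety to be the usual formal-power-series versus analytic-function justification already flagged elsewhere in the paper: the coefficient comparison is rigorous at the level of formal power series, and for convergent ranges of $x$ one additionally invokes the validity of interchanging summation and the integration defining $Q$, so that the differentiation-under-the-integral step and the uniqueness argument apply verbatim.
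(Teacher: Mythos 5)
Your proposal is correct and follows essentially the same route as the paper: both sides are shown to satisfy the initial value problem $(1-4x)y'(x)=2y(x)+\sum_{n=1}^\infty n x^n\binom{2n}{n}f(n)/x$ with $y(0)=0$, and uniqueness of its solution near $x=0$ gives the identity. The paper merely states this IVP and asserts that both sides satisfy it, whereas you supply the verification details (differentiating the integral side after the substitution $tH'(t)$, and the coefficient recurrences $\binom{2(m+1)}{m+1}=\tfrac{2(2m+1)}{m+1}\binom{2m}{m}$, $S(m+1)=S(m)+f(m+1)$ for the sum side), all of which check out.
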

\begin{proof}
 Both sides of the equation satisfy the following initial value problem for
$y(x)$, which has a unique solution near $x=0$:
 \begin{eqnarray*}
  y^\prime(x)-\frac{2}{1-4x}y(x) &=& \frac{1}{x(1-4x)}\sum_{n=1}^\infty x^nn\binom{2n}{n}f(n)\\
  y(0) &=& 0.
 \end{eqnarray*}
\end{proof}
\noindent Similarly, the following lemmas are obtained.
\begin{lemma}
 \begin{eqnarray}\label{eq:GenFunInverseBinomialSum1a}
  \sum_{n=1}^\infty \frac{x^n}{n\binom{2n}{n}}\sum_{i=1}^nf(i) &=& \sqrt{\frac{x}{4-x}}\int_0^xdt\frac{1}{\sqrt{t(4-t)}}\sum_{n=0}^\infty \frac{t^n}{\binom{2n}{n}}f(n+1)\\
  &=& \sum_{n=1}^\infty \frac{x^n}{n\binom{2n}{n}}f(n)+\sqrt{\frac{x}{4-x}}\int_0^xdt\frac{1}{\sqrt{t(4-t)}}
\sum_{n=1}^\infty \frac{t^n}{\binom{2n}{n}}f(n). \label{eq:GenFunInverseBinomialSum1b}
 \end{eqnarray}
\end{lemma}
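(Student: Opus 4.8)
The statement to be proved is Lemma~\eqref{eq:GenFunInverseBinomialSum1a}, giving two equivalent integral representations for $\sum_{n=1}^\infty \frac{x^n}{n\binom{2n}{n}}\sum_{i=1}^n f(i)$. The authors explicitly signal the intended method: the previous lemma \eqref{eq:GenFunBinomialSum} was established by showing that both sides satisfy the same first-order linear ODE together with the initial condition $y(0)=0$, and the phrase ``\emph{Similarly, the following lemmas are obtained}'' tells us to repeat this strategy. So the plan is to identify a first-order inhomogeneous linear initial value problem of the form
\begin{equation*}
 y'(x) + p(x)\,y(x) = q(x),\qquad y(0)=0,
\end{equation*}
whose unique analytic solution near $x=0$ is the left-hand side, and then verify that the right-hand side of \eqref{eq:GenFunInverseBinomialSum1a} solves the same problem.

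First I would differentiate the left-hand side $L(x):=\sum_{n=1}^\infty \frac{x^n}{n\binom{2n}{n}}\sum_{i=1}^n f(i)$ term-wise. The key input is the recurrence $\frac{(n+1)\binom{2(n+1)}{n+1}}{n\binom{2n}{n}}=\frac{2(2n+1)}{n}$, equivalently $\frac{1}{n\binom{2n}{n}}$ relates to $\frac{1}{(n+1)\binom{2(n+1)}{n+1}}$ by a linear factor in $n$; this is precisely the identity $\frac{1}{(2N+1)\binom{2N}{N}}=\frac{2}{(N+1)\binom{2(N+1)}{N+1}}$ already used in the proof of Theorem~\ref{thm:Pattern2}. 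Multiplying $L(x)$ by a suitable rational factor (the natural candidate is the reciprocal of $\sqrt{x(4-x)}$, matching the prefactor $\sqrt{x/(4-x)}$ on the right) and differentiating should collapse the convolution $\sum_{i=1}^n f(i)$ via telescoping, producing on the right-hand side only the single sum $\sum_{n} \frac{t^n}{\binom{2n}{n}}f(n+1)$ under an integral. I expect the homogeneous part to have integrating factor $\sqrt{x/(4-x)}$, so that $p(x)=-\tfrac12\big(\tfrac1x-\tfrac{1}{4-x}\big)$, and the inhomogeneity $q(x)$ to be $\tfrac{1}{\sqrt{t(4-t)}}$ times the generating series of $\tfrac{1}{\binom{2n}{n}}f(n+1)$.

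Having pinned down the IVP, verifying that the right-hand side of \eqref{eq:GenFunInverseBinomialSum1a} satisfies it is a direct computation: differentiating the product $\sqrt{x/(4-x)}\int_0^x(\cdots)\,dt$ by the product rule reproduces $p(x)y(x)$ from the derivative of the prefactor and $q(x)$ from the fundamental theorem of calculus applied to the integral, while the condition $y(0)=0$ is immediate from the vanishing integral and the prefactor behaving like $\sqrt{x}/2$ near $0$. Uniqueness of analytic solutions of a first-order linear ODE near the ordinary point $x=0$ then forces equality, exactly as in the proof of \eqref{eq:GenFunBinomialSum}. Finally, the second form \eqref{eq:GenFunInverseBinomialSum1b} follows by splitting $f(n+1)$ off the shifted generating series: writing $\sum_{n\ge0}\frac{t^n}{\binom{2n}{n}}f(n+1)$ in terms of $\sum_{n\ge1}\frac{t^n}{\binom{2n}{n}}f(n)$ and recognizing, via \eqref{eq:GenFunReciprocal} applied to $\tfrac{x^n}{n\binom{2n}{n}}f(n)$, that one of the resulting integrals reconstructs $\sum_{n\ge1}\frac{x^n}{n\binom{2n}{n}}f(n)$ itself.

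The main obstacle I anticipate is \textbf{bookkeeping of the index shift}, not the ODE verification. The left-hand side carries $\tfrac{1}{n\binom{2n}{n}}$ while the integrand carries the shifted coefficient $\tfrac{1}{\binom{2n}{n}}f(n+1)$ summed from $n=0$; reconciling these requires careful use of the ratio $\tfrac{(n+1)\binom{2(n+1)}{n+1}}{n\binom{2n}{n}}$ and attention to the $n=0$ boundary term when the summation range changes. One must also confirm that term-wise differentiation and the interchange of $\sum$ and $\int$ are legitimate, which holds in the ring of formal power series and, for convergent $f$, on a neighbourhood of $x=0$ where all series converge absolutely; this justifies treating the identity as an equality of analytic functions and invoking the uniqueness theorem. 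The passage from \eqref{eq:GenFunInverseBinomialSum1a} to \eqref{eq:GenFunInverseBinomialSum1b} is where this shift must be handled transparently, and it is the step most prone to sign and boundary errors.
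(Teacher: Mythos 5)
Your approach is exactly the paper's: the authors prove only the preceding lemma explicitly, by showing both sides solve the same first-order linear initial value problem with $y(0)=0$, and state that this lemma follows ``similarly'' --- which is precisely the plan you lay out, including obtaining the second form from the first by shifting the index in $\sum_{n\ge0}\frac{t^n}{\binom{2n}{n}}f(n+1)$. The only correction needed is a sign in your anticipated coefficient: the equation is $y'(x)-\frac{2}{x(4-x)}\,y(x)=\frac{1}{4-x}\sum_{n=0}^\infty\frac{x^n}{\binom{2n}{n}}f(n+1)$, so $p(x)=-\frac{1}{2}\left(\frac{1}{x}+\frac{1}{4-x}\right)$ (not a difference), whose homogeneous solution is $\sqrt{x/(4-x)}$; with this, $x(4-x)L'(x)-2L(x)$ telescopes to $x\sum_{n\ge0}\frac{x^n}{\binom{2n}{n}}f(n+1)$ via $\binom{2n}{n}=\frac{2(2n-1)}{n}\binom{2n-2}{n-1}$, exactly as you describe.
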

\begin{lemma}
 \begin{equation}\label{eq:GenFunInverseBinomialSum2}
  \sum_{n=1}^\infty\frac{x^n}{(2n+1)\binom{2n}{n}}\sum_{i=1}^nf(i) = \frac{2}{\sqrt{x(4-x)}}\int_0^xdt\frac{1}
{\sqrt{t(4-t)}}\sum_{n=1}^\infty\frac{t^n}{\binom{2n}{n}}f(n).
 \end{equation}
\end{lemma}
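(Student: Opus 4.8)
The plan is to follow exactly the strategy already used for the sibling lemma \eqref{eq:GenFunBinomialSum}: I will exhibit a first-order linear ordinary differential equation together with an initial condition at $x=0$, show that \emph{both} sides of \eqref{eq:GenFunInverseBinomialSum2} satisfy this initial value problem, and then invoke uniqueness to conclude that they coincide. Writing $S(x):=\sum_{n=1}^\infty\frac{x^n}{\binom{2n}{n}}f(n)$ for the inner generating function that appears on the right, the candidate equation is
\begin{equation}
 x(4-x)\,y'(x)+(2-x)\,y(x)=2S(x),\qquad y(0)=0,
\end{equation}
equivalently $y'+\frac{2-x}{x(4-x)}y=\frac{2S}{x(4-x)}$, which has a regular singular point at $x=0$.

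First I would verify the right-hand side $R(x)$. Setting $w(x):=\frac{2}{\sqrt{x(4-x)}}$ and $I(x):=\int_0^x\frac{S(t)}{\sqrt{t(4-t)}}\,dt$, we have $R=w\,I$. By the fundamental theorem of calculus $I'(x)=\frac{S(x)}{\sqrt{x(4-x)}}$, and a short computation of the logarithmic derivative gives $\frac{w'}{w}=-\frac12\bigl(\frac1x-\frac1{4-x}\bigr)=-\frac{2-x}{x(4-x)}$. Hence $R'=\frac{w'}{w}R+w\,I'=-\frac{2-x}{x(4-x)}R+\frac{2S}{x(4-x)}$, which is precisely the ODE; and since $S(t)\sim\frac{f(1)}{2}t$ near $0$ one gets $R(x)\sim\frac{f(1)}{6}x$, so $R(0)=0$.

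The slightly more laborious part is the left-hand side $L(x)=\sum_{n\ge1}c_n x^n$ with $c_n=\frac{g(n)}{(2n+1)\binom{2n}{n}}$ and $g(n):=\sum_{i=1}^n f(i)$. Expanding $x(4-x)L'+(2-x)L$ as a power series, the coefficient of $x^n$ ($n\ge2$) equals $(4n+2)c_n-n\,c_{n-1}$ (and $6c_1$ for $n=1$). Using $4n+2=2(2n+1)$ yields $(4n+2)c_n=\frac{2g(n)}{\binom{2n}{n}}$, while the ratio identity $\frac{\binom{2n}{n}}{\binom{2n-2}{n-1}}=\frac{2(2n-1)}{n}$ gives $n\,c_{n-1}=\frac{2g(n-1)}{\binom{2n}{n}}$; subtracting and using $g(n)-g(n-1)=f(n)$ produces $\frac{2f(n)}{\binom{2n}{n}}$, which matches the coefficient of $x^n$ in $2S(x)$ (the case $n=1$ checks directly as $6c_1=f(1)$). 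Thus $L$ satisfies the same equation, with $L(0)=0$ trivially.

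Finally, for uniqueness I would argue at the level of formal power series, sidestepping analytic subtleties at the regular singular point: the $x^n$-coefficient of the ODE is the recurrence $(4n+2)c_n=n\,c_{n-1}+\frac{2f(n)}{\binom{2n}{n}}$, whose leading coefficient $4n+2$ never vanishes for $n\ge1$, so together with $c_0=0$ it determines all coefficients uniquely. The main obstacle I anticipate is purely the bookkeeping on the left-hand side --- in particular the cancellation of the $(2n+1)$ factor against $4n+2$ and the correct binomial-ratio identity --- rather than anything conceptual; a secondary point is the interchange of summation and integration defining $R(x)$, which is harmless if one reads every identity as an equality of formal power series in $x$, consistent with the convention used elsewhere in the paper.
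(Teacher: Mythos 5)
Your proof is correct and follows essentially the same route the paper intends: the paper proves the sibling identity \eqref{eq:GenFunBinomialSum} by verifying that both sides satisfy a first-order initial value problem and then states that the remaining lemmas, including this one, are "obtained similarly." Your ODE $x(4-x)y'+(2-x)y=2S(x)$ with $y(0)=0$ is the right one, and all the computational details (the logarithmic derivative of $\tfrac{2}{\sqrt{x(4-x)}}$, the coefficient extraction $(4n+2)c_n-nc_{n-1}=\tfrac{2f(n)}{\binom{2n}{n}}$ via the ratio $\binom{2n}{n}/\binom{2n-2}{n-1}=\tfrac{2(2n-1)}{n}$, and the formal-power-series uniqueness argument) check out.
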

Note that for $f(n):=\delta_{n,1}$ we obtain the following generating functions as special cases from the formulae above.
\begin{eqnarray}
 \sum_{n=1}^\infty x^n\binom{2n}{n} &=& \frac{1}{2\sqrt{\frac{1}{4}-x}}-1\label{eq:GenFunBinomial}\\
 \sum_{n=1}^\infty \frac{x^n}{n\binom{2n}{n}} &=& \sqrt{\frac{x}{4-x}}\int_0^xdt\frac{1}{\sqrt{t(4-t)}}
= 2 \sqrt{\frac{x}{4-x}} \arcsin\left(\frac{\sqrt{x}}{2}\right)
\label{eq:GenFunInverseBinomial1}\\
 \sum_{n=1}^\infty\frac{x^n}{(2n+1)\binom{2n}{n}} &=& \frac{2}{\sqrt{x(4-x)}}\int_0^xdt\frac{1}
{\sqrt{t(4-t)}}-1 = \frac{4}{\sqrt{x(4-x)}} \arcsin\left(\frac{\sqrt{x}}{2}\right) -1~.
\label{eq:GenFunInverseBinomial2}
\nonumber\\
\end{eqnarray}
Let us illustrate the use of the formulae above by a simple example.
\begin{example}
 Consider the generating function
 \begin{equation}\label{eq:GenFunExample1}
  \sum_{n=1}^\infty\frac{x^n}{n\binom{2n}{n}}S_2(n).
 \end{equation}
 Applying \eqref{eq:GenFunInverseBinomialSum1b} and then \eqref{eq:GenFunReciprocal} we obtain
 \begin{eqnarray*}
  \sum_{n=1}^\infty\frac{x^n}{n\binom{2n}{n}}S_2(n) &=& \sum_{n=1}^\infty \frac{x^n}{n^3\binom{2n}{n}}+\sqrt{\frac{x}{4-x}}\int_0^xdt\frac{1}{\sqrt{t(4-t)}}\sum_{n=1}^\infty \frac{t^n}{n^2\binom{2n}{n}}\\
  &=&\int_0^xdt\frac{1}{t}\int_0^tdu\frac{1}{u}\sum_{n=1}^\infty\frac{u^n}{n\binom{2n}{n}}+\sqrt{\frac{x}{4-x}}\int_0^xdt\frac{1}{\sqrt{t(4-t)}}\int_0^tdu\frac{1}{u}\sum_{n=1}^\infty\frac{u^n}{n\binom{2n}{n}}.
 \end{eqnarray*}
 Now, by virtue of \eqref{eq:GenFunInverseBinomial1} we obtain the result
 \begin{eqnarray}
  \sum_{n=1}^\infty\frac{x^n}{n\binom{2n}{n}}S_2(n) &=& \int_0^xdt\frac{1}{t}\int_0^tdu\frac{1}{\sqrt{u(4-u)}}\int_0^udv\frac{1}{\sqrt{v(4-v)}}
\N\\&&
  +\sqrt{\frac{x}{4-x}}\int_0^xdt\frac{1}{\sqrt{t(4-t)}}\int_0^tdu\frac{1}{\sqrt{u(4-u)}}\int_0^udv\frac{1}{\sqrt{v(4-v)}}\label{eq:GenFunExample1Result}\\
  &=&\mathrm{H}_{\sf 0,w_{19},w_{19}}(x)+\sqrt{\frac{x}{4-x}}\mathrm{H}_{\sf w_{19},w_{19},w_{19}}(x).
 \end{eqnarray}
\end{example}

\noindent
The sum (\ref{eq:GenFunExample1}) has been calculated e.g. in Ref.~\cite{Davydychev:2003mv} 
in terms of LogSine-functions \cite{LEWIN:81}.

These direct rewrite rules are as well implemented {\tt HarmonicSums}. The command \texttt{BSToGL} applies them:
\begin{mma}
{
\In \text{\bf BSToGL[}\sum_{\text{o}_1=1}^{\infty } \frac{x^{{o}_1} \binom{2 {o}_1}{{o}_1}}{\big(-1+2 {o}_1\big)^2}\text{\bf ,x]}\\
\Out {\sqrt{x} 64 \text{GL}\big[
        \big\{\sqrt{1-4 \text{VarGL}} \sqrt{\text{VarGL}}\big\},x\big]
+\sqrt{1-4 x}
+4 (1-8 x) \sqrt{1-4 x} x
-1}\\
}
\end{mma}
\subsection{Generating Functions from Integral Representations}
Now we consider generating functions of sequences that are given by Mellin
transforms of nested integrals:
\begin{equation}\label{eq:GenFunMellin}
 \sum_{n=k}^\infty x^n\Mvec[f(t)](n) = \int_0^1dt\frac{(tx)^k}{1-tx}f(t). 
\end{equation}
We again aim at a recursive procedure for rewriting expressions of the above
form into iterated integrals. For most of the situations encountered in the
context of the integral representations considered in Section~\ref{sec:2}, the
following theorems provide the necessary formulae to do so.
\begin{theorem}\label{thm:GenFunMellin}
 Let $a<0$, $c<0$, $0<x<1$, and let $k,e_0,e_1$ with $\frac{e_0}{2}<k+1$ and $\frac{e_0+e_1}{2}<k+1$. Furthermore, let $f(x)$ be bounded on $(0,1)$. Then
 \begin{multline}
  \int_0^1dt\frac{(tx)^k}{1-tx}\frac{tf(t)}{(t-c)t^{e_0/2}(t-a)^{e_1/2}}\ =\\*
  \frac{x^{\frac{e_0+e_1}{2}-1}}{(1-cx)(1-ax)^{e_1/2}}\int_0^xdt\frac{t^{1-\frac{e_0+e_1}{2}}}{(1-at)^{1-e_1/2}}\Bigg(\bar{c}_1t^{k-1}+\bar{c}_2t^k+(1-a)^{1-e_1/2}f(1)\frac{t^k}{1-t}\\*
  -\int_0^1du\frac{(tu)^k}{1-tu}u^{1-e_0/2}(u-a)^{1-e_1/2}f^\prime(u)\Bigg), \label{eq:GenFunMellinMain}
 \end{multline}
 where
 \begin{eqnarray}
  \bar{c}_1 &=& \left(k+1-\frac{e_0+e_1}{2}\right)\int_0^1dt\,t^k\frac{tf(t)}{(t-c)t^{e_0/2}(t-a)^{e_1/2}}\label{eq:GenFunMellinMain1}\\
  \bar{c}_2 &=& -ac\left(k+1-\frac{e_0}{2}\right)\int_0^1dt\,t^k\frac{f(t)}{(t-c)t^{e_0/2}(t-a)^{e_1/2}}.\label{eq:GenFunMellinMain2}
 \end{eqnarray}
\end{theorem}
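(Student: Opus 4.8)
The plan is to follow the strategy used for the earlier generating-function identities (e.g.\ the proof of \eqref{eq:GenFunBinomialSum}): exhibit a first-order linear ordinary differential equation in $x$ that \emph{both} sides of \eqref{eq:GenFunMellinMain} satisfy, check that they share the same initial data as $x\to0$, and conclude by uniqueness of the solution near $x=0$. Throughout write $K(t,x):=\frac{(tx)^k}{1-tx}$ and $g(t):=\frac{tf(t)}{(t-c)t^{e_0/2}(t-a)^{e_1/2}}$, so that the left-hand side is $F(x)=\int_0^1 dt\,K(t,x)g(t)$. Since $a,c<0$ and $0<x<1$, the factors $1-cx$ and $1-ax$ stay positive, so the right-hand side has no singularity on $[0,x]$, and the hypotheses $\frac{e_0}{2}<k+1$ and $\frac{e_0+e_1}{2}<k+1$ guarantee integrability at $t=0$ of all integrands that occur (the second one being exactly what is needed to integrate the $\bar c_1 t^{k-1}$ contribution).

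The engine of the proof is the elementary symmetry $x\frac{\partial}{\partial x}K(t,x)=t\frac{\partial}{\partial t}K(t,x)$, which one verifies directly since both equal $\frac{kt^kx^k}{1-tx}+\frac{t^{k+1}x^{k+1}}{(1-tx)^2}$. First I would use it to write $x\,F'(x)=\int_0^1 dt\,\big(t\tfrac{\partial}{\partial t}K\big)g(t)$ and integrate by parts in $t$. The boundary at $t=1$ produces the term proportional to $f(1)$ visible on the right-hand side (with $K(1,x)=\frac{x^k}{1-x}$), the boundary at $t=0$ vanishes by the integrability hypotheses, and the bulk term $-\int_0^1 dt\,K\,\frac{\partial}{\partial t}(tg(t))$ splits into a piece carrying $f'(t)$ — which becomes the inner $\int_0^1 du$ integral of \eqref{eq:GenFunMellinMain} — and rational-in-$t$ multiples of $f(t)$.

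Next I would match this with the differential equation obeyed by the right-hand side. Writing the right-hand side as $P(x)\,I(x)$ with $P(x)=\frac{x^{(e_0+e_1)/2-1}}{(1-cx)(1-ax)^{e_1/2}}$ and $I(x)=\int_0^x dt(\cdots)$, the product rule gives $y'=\frac{P'(x)}{P(x)}y+P(x)\,S(x)$, where $\frac{P'}{P}=\frac{(e_0+e_1)/2-1}{x}+\frac{c}{1-cx}+\frac{(e_1/2)a}{1-ax}$ and $S(x)$ is the inner integrand evaluated at $t=x$. The task is then to reorganize the rational-in-$t$ remainder from the integration by parts, via partial fractions in $t-c$, $t-a$ and $t$, so as to reproduce exactly $\frac{P'(x)}{P(x)}F(x)$ plus this source term; the exponents $1-\frac{e_0+e_1}{2}$ and $1-\frac{e_1}{2}$ inside $I(x)$ are precisely those forced by this matching.

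Finally I would fix the initial data. Expanding $F(x)=\sum_{m\ge k}\Mvec[g](m)x^m$ near $x=0$ and comparing the two lowest coefficients pins down the polynomial corrections $\bar c_1 t^{k-1}+\bar c_2 t^k$ inside $I(x)$, giving the stated values $\bar c_1=(k+1-\frac{e_0+e_1}{2})\Mvec[g](k)$ and $\bar c_2=-ac(k+1-\frac{e_0}{2})\Mvec[g](k-1)$; uniqueness of the first-order initial value problem then yields \eqref{eq:GenFunMellinMain}. I expect the main obstacle to be precisely the partial-fraction bookkeeping of the third paragraph: keeping the pole terms at $t=c$ and $t=a$, the boundary contributions at $t=0,1$, and the two polynomial constants mutually consistent, so that the rational remainder collapses exactly onto $\frac{P'}{P}F+P\,S$ rather than onto something merely proportional to it.
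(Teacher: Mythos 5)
Your overall architecture---exhibit a common first-order linear ODE and conclude by uniqueness near $x=0$---is sound, and it is exactly the method this paper uses for the statements of this type that it does prove (the lemma containing \eqref{eq:GenFunBinomialSum}, Lemma~\ref{lem:Convolution1a}, Lemma~\ref{lem:Transform2}); Theorem~\ref{thm:GenFunMellin} itself is stated in the paper without proof. Your identification of the homogeneous solution $P(x)$ and of $\bar c_1$, $\bar c_2$ from the two lowest series coefficients is also correct. The genuine gap is in your middle step. With your notation, integrating $xF'(x)=\int_0^1dt\,\bigl(t\partial_tK\bigr)g(t)$ by parts yields the $f'$-piece
\begin{equation*}
-\int_0^1dt\,K(t,x)\,\frac{t^{2-e_0/2}}{(t-c)\,(t-a)^{e_1/2}}\,f'(t)\,,
\end{equation*}
and this is \emph{not} the inner integral of \eqref{eq:GenFunMellinMain}: the latter has kernel $u^{1-e_0/2}(u-a)^{1-e_1/2}$, no pole at $u=c$, a nonnegative power of $(u-a)$, and it enters the source of the ODE multiplied by $\frac{x}{(1-cx)(1-ax)}$. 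The same mismatch affects the boundary term at $t=1$, which comes out as $\frac{x^k}{1-x}\frac{f(1)}{(1-c)(1-a)^{e_1/2}}$ rather than as the $f(1)$-term of the theorem. So the discrepancy is not confined to the ``rational-in-$t$ multiples of $f(t)$'': the $f'$-term itself must be converted, and no partial-fraction rearrangement in the variable $t$ alone can achieve this, because poles in $t$ under the integral have to be traded for the factors $(1-cx)$ and $(1-ax)$ outside it.

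The missing ingredient is the pair of mixed kernel identities $(1-cx)K(t,x)=(tx)^k+x(t-c)K(t,x)$ and $(1-ax)K(t,x)=(tx)^k+x(t-a)K(t,x)$: multiply the ODE through by $x(1-cx)(1-ax)$ and use them to cancel the denominators $(t-c)$ and $(t-a)^{e_1/2}$ of $g$; the left-over $(tx)^k$ pieces are precisely what generate the extra moments that feed $\bar c_1$ and $\bar c_2$. Equivalently, and most economically, compare power-series coefficients: with $\mu_m:=\Mvec[g](m)$ one has $F(x)=\sum_{m\ge k}\mu_mx^m$, and the polynomial form of the ODE amounts to the three-term recurrence
\begin{eqnarray*}
\lefteqn{\Bigl(m+1-\tfrac{e_0+e_1}{2}\Bigr)\mu_m
-\Bigl[a\Bigl(m-\tfrac{e_0}{2}\Bigr)+c\Bigl(m+1-\tfrac{e_0+e_1}{2}\Bigr)\Bigr]\mu_{m-1}
+ac\Bigl(m-\tfrac{e_0}{2}\Bigr)\mu_{m-2}}\\
&=&(1-a)^{1-e_1/2}f(1)-\int_0^1du\,u^{m-e_0/2}(u-a)^{1-e_1/2}f'(u)\,,
\end{eqnarray*}
valid for all $m$ for which the moments exist. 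This recurrence is immediate once one observes that the quadratic on the left factors, under the integral defining the $\mu$'s, as $(u-c)\bigl[(m+1-\tfrac{e_0+e_1}{2})u-a(m-\tfrac{e_0}{2})\bigr]$---cancelling the pole of $g$ at $u=c$---while a single integration by parts brings the right-hand side to the same integrand; the truncated instances $m=k$ and $m=k+1$, where $\mu_{k-1},\mu_{k-2}$ are absent from the series of $F$, then force exactly \eqref{eq:GenFunMellinMain1} and \eqref{eq:GenFunMellinMain2}. Finally, your uniqueness appeal needs one refinement: the homogeneous solutions $\propto x^{\frac{e_0+e_1}{2}-1}$ may themselves vanish at $x=0$, so ``same initial data'' must be sharpened to ``the difference of the two sides is $O(x^k)$, while any nonzero homogeneous solution has the strictly smaller order $x^{\frac{e_0+e_1}{2}-1}$''; this is a second place where the hypothesis $\frac{e_0+e_1}{2}<k+1$ enters.
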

\begin{theorem}
 Let $a<0$, $c<0$, $0<x<1$, and let $k,e_0,e_1$ with $\frac{e_0}{2}<k+1$ and $\frac{e_0+e_1}{2}=k+1$. Furthermore, let $f(x)$ be bounded on $(0,1)$. Then we have
 \begin{multline}
  \int_0^1dt\frac{(tx)^k}{1-tx}\frac{tf(t)}{(t-c)t^{e_0/2}(t-a)^{e_1/2}}\ =\\*
  \frac{x^k}{(1-cx)(1-ax)^{e_1/2}}\Bigg(\bar{c}_1+\int_0^xdt\frac{1}{(1-at)^{1-e_1/2}}\Bigg(\bar{c}_2+(1-a)^{1-e_1/2}f(1)\frac{1}{1-t}\\*
  -\frac{1}{t^k}\int_0^1du\frac{(tu)^k}{1-tu}u^{1-e_0/2}(u-a)^{1-e_1/2}f^\prime(u)\Bigg)\Bigg),
 \end{multline}
 where
 \begin{eqnarray}
  \bar{c}_1 &=& \int_0^1dt\,t^k\frac{tf(t)}{(t-c)t^{e_0/2}(t-a)^{e_1/2}}\\
  \bar{c}_2 &=& -ac\frac{e_1}{2}\int_0^1dt\,t^k\frac{f(t)}{(t-c)t^{e_0/2}(t-a)^{e_1/2}}.
 \end{eqnarray}
\end{theorem}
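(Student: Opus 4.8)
The plan is to prove the identity by the same initial-value-problem technique used for the generating-function lemmas above: I will show that both sides, viewed as functions of $x$ on $(0,1)$, solve one and the same first-order linear ODE and share the same leading behaviour as $x\to0^+$, and then invoke uniqueness. Write $G(x)$ for the left-hand side and $g(t):=\frac{tf(t)}{(t-c)t^{e_0/2}(t-a)^{e_1/2}}$, so that $G(x)=\int_0^1\frac{(tx)^k}{1-tx}g(t)\,dt$. The hypothesis $\frac{e_0}{2}<k+1$ guarantees that $t^kg(t)$ is integrable near $t=0$, while $a,c<0$ keep $g$ bounded near $t=1$, so $G$ is well defined and, by \eqref{eq:GenFunMellin}, equals $\sum_{n\ge k}x^n\Mvec[g](n)$ with leading term $\bar c_1 x^k$, where $\bar c_1=\int_0^1 t^kg(t)\,dt$.

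To produce the ODE for $G$, I will differentiate under the integral sign (justified by Leibniz, since on compact subintervals of $(0,1)$ the integrand and its $x$-derivative are continuous and dominated, using boundedness of $f$ and that the pole $t=1/x$ lies outside $[0,1]$) and then exploit the elementary relation $x\,\partial_x\big[\tfrac{(tx)^k}{1-tx}\big]=t\,\partial_t\big[\tfrac{(tx)^k}{1-tx}\big]$. Integrating by parts in $t$ turns $xG'(x)$ into a boundary term at $t=1$ carrying the factor $f(1)$, plus $-\int_0^1\tfrac{(tx)^k}{1-tx}[tg(t)]'\,dt$; the boundary term at $t=0$ vanishes because $\frac{e_0}{2}<k+1$. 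Expanding $[tg(t)]'$ yields the $f'$-integral together with the logarithmic-derivative pieces $\tfrac{2-e_0/2}{t}-\tfrac{1}{t-c}-\tfrac{e_1/2}{t-a}$. The crucial simplification is the partial-fraction identity $\tfrac{1}{(1-tx)(t-c)}=\tfrac{1}{1-cx}\big[\tfrac{1}{t-c}+\tfrac{x}{1-tx}\big]$ (and its analogue with $a$): it reflects the factors $\tfrac{1}{t-c},\tfrac{1}{t-a}$ into the factors $\tfrac{1}{1-cx},\tfrac{1}{1-ax}$ of the prefactor and folds the $\tfrac{1}{1-tx}$-part back into $G$ itself. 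Collecting terms, the coefficient of $G$ comes out to be $\frac{1}{x}\big[\tfrac{1}{1-cx}+\tfrac{e_1/2}{1-ax}-(2-\tfrac{e_0}{2})\big]=\frac{k}{x}+\frac{c}{1-cx}+\frac{e_1a/2}{1-ax}=H'(x)/H(x)$, where the equality $\frac{e_0+e_1}{2}=k+1$ is exactly what collapses the residual $\tfrac1x$-exponent to the integer $k$. This is the structural heart of the marginal case and the reason the prefactor degenerates from $x^{(e_0+e_1)/2-1}$ in Theorem~\ref{thm:GenFunMellin} to $x^k$ here.

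For the right-hand side, write it as $H(x)\big(\bar c_1+\int_0^x\Phi(t)\,dt\big)$ with $H(x):=\tfrac{x^k}{(1-cx)(1-ax)^{e_1/2}}$ and $\Phi$ the bracketed integrand; differentiating shows at once that it solves the same homogeneous equation with inhomogeneity $H(x)\Phi(x)$. It then remains to verify that $H(x)\Phi(x)$ equals the explicit inhomogeneous term found on the left, a routine but lengthy algebraic identity among the $t=1$ boundary term, the $f'$-integral (whose inner $u$-integral is itself a $G$-type object for $f'$), and the lower integrals $x^k\int_0^1 t^k\tfrac{g(t)}{t-c}\,dt$ and $x^k\int_0^1 t^k\tfrac{g(t)}{t-a}\,dt$, resolved again by the partial-fraction reflections above; the constants $\bar c_1$ and $\bar c_2=-ac\tfrac{e_1}{2}\int_0^1 t^k\tfrac{f(t)}{(t-c)t^{e_0/2}(t-a)^{e_1/2}}\,dt$ are precisely the boundary and lower-order pieces this matching isolates. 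Finally I match initial data: $x=0$ is a regular singular point with indicial exponent $k$, so a solution is fixed by its coefficient of $x^k$, and both $G$ and the right-hand side have this coefficient equal to $\bar c_1$, whence they coincide. I expect the main obstacle to be exactly this marginal bookkeeping: since $\frac{e_0+e_1}{2}=k+1$ renders $\bar c_1\,t^{k-1}t^{\,1-(e_0+e_1)/2}=\bar c_1/t$ non-integrable at $t=0$, the $\bar c_1$-term cannot remain inside the $t$-integral as it does in Theorem~\ref{thm:GenFunMellin} but must be split off as the additive constant displayed; carrying out the inhomogeneity match and confirming the $x^k$-normalisation at the regular singular point are the delicate steps.
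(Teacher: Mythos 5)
Your strategy is sound and your two structural observations are the right ones: the logarithmic-derivative bookkeeping $\frac{1}{x}\bigl[\frac{1}{1-cx}+\frac{e_1/2}{1-ax}-(2-\frac{e_0}{2})\bigr]=\frac{k}{x}+\frac{c}{1-cx}+\frac{e_1a/2}{1-ax}$ does collapse to $H'(x)/H(x)$ with $H(x)=\frac{x^k}{(1-cx)(1-ax)^{e_1/2}}$ precisely because $\frac{e_0+e_1}{2}=k+1$, and matching the coefficient of $x^k$ at the regular singular point (both sides having leading term $\bar c_1x^k$, since $\Mvec[g](k)=\bar c_1$ and the bracketed integrand on the right is bounded near $t=0$) does pin down the solution uniquely. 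However, your route differs from the paper's: the paper gives no direct proof of this theorem but states that it follows as a limiting case of Theorem~\ref{thm:GenFunMellin}, where the marginal limit $\frac{e_0+e_1}{2}\to k+1$ is clean because the prefactor $\bigl(k+1-\frac{e_0+e_1}{2}\bigr)$ in $\bar c_1$ there tends to zero at exactly the rate at which $\int_0^x t^{\,k-(e_0+e_1)/2}\,dt$ diverges, so that $\bigl(k+1-s\bigr)\int_0^x t^{k-s}\,dt=x^{k+1-s}\to1$ and the $\bar c_1$-term detaches from the integral as the additive constant you describe --- this limiting argument is shorter once the generic theorem is granted, while your direct verification is self-contained and mirrors the initial-value-problem proofs the paper gives for its generating-function lemmas. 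The one place where you are too optimistic is the inhomogeneity match, which you defer as ``routine but lengthy'': after the integration by parts the $f'$-integral still carries $\frac{1}{(t-c)(t-a)^{e_1/2}}$ and the reflection constants that emerge are of the form $\int_0^1 t^k g(t)/(t-c)\,dt$ and $\int_0^1 t^k g(t)/(t-a)\,dt$ (with a double pole at $t=c$ in the first), so a further integration by parts and a second round of partial-fraction reflections are needed to strip the $(u-c)$-factor from the $f'$-integral and to recombine the leftover constants into the single $\bar c_2=-ac\frac{e_1}{2}\int_0^1 t^{k-1}g(t)\,dt$; this is where most of the actual computation lives and is the step you should execute (or, more economically, replace by the limiting argument) to complete the proof. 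No step of your plan fails, but as written the theorem's specific form of $\bar c_2$ and of the $f'$-kernel is asserted rather than derived.
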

In fact, the last theorem can also be derived as a limiting case of the first. The scope of the theorems above may be further extended by considering limits or by analytic continuation in the parameters. As an example we give two formulae which can be derived as limiting cases of Theorem~\ref{thm:GenFunMellin}. For the first we set $e_0=e_1=1$ and let $c\to0$, for the second we set $e_0=e_1=0$ and let $a\to0$.
\begin{cor}
For $a<0$, $k>0$, $0<x<1$ we have that
 \begin{eqnarray}
  \int_0^1dt\frac{(tx)^k}{1-tx}\frac{f(t)}{\sqrt{t(t-a)}} &=& \frac{1}{\sqrt{1-ax}}\Bigg(c_1\int_0^xdt\frac{t^{k-1}}{\sqrt{1-at}}+\sqrt{1-a}f(1)\int_0^xdt\frac{t^k}{(1-t)\sqrt{1-at}}\N\\
  &&-\int_0^xdt\frac{1}{\sqrt{1-at}}\int_0^1du\frac{(tu)^k}{1-tu}\sqrt{u(u-a)}f^\prime(u)\Bigg)\\
  c_1 &=& k\int_0^1dt\,t^k\frac{f(t)}{\sqrt{t(t-a)}}.
 \end{eqnarray}
\end{cor}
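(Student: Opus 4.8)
The plan is to derive the Corollary as the specialization $e_0=e_1=1$ of Theorem~\ref{thm:GenFunMellin}, followed by the limit $c\to0^-$ (approached from within the admissible region $c<0$), exactly as announced in the sentence preceding the statement. First I would set $e_0=e_1=1$ in \eqref{eq:GenFunMellinMain}. The constraints $\frac{e_0}{2}<k+1$ and $\frac{e_0+e_1}{2}<k+1$ then read $k>-\frac12$ and $k>0$, so the hypothesis $k>0$ of the Corollary is exactly what is needed. On the left-hand side the weight becomes $(t-c)\sqrt{t}\,\sqrt{t-a}$ and the explicit numerator $t$ cancels the pole factor in the limit, leaving $\int_0^1 dt\,\frac{(tx)^k}{1-tx}\frac{f(t)}{\sqrt{t(t-a)}}$. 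On the right-hand side the exponents collapse: $x^{(e_0+e_1)/2-1}=1$, $t^{1-(e_0+e_1)/2}=1$, $(1-at)^{1-e_1/2}=\sqrt{1-at}$, $(1-ax)^{e_1/2}=\sqrt{1-ax}$, $(1-a)^{1-e_1/2}=\sqrt{1-a}$ and $u^{1-e_0/2}(u-a)^{1-e_1/2}=\sqrt{u(u-a)}$. Distributing the outer integral $\int_0^x dt$ over the three summands then reproduces the three integrals on the right-hand side of the claim.

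Next I would track the two constants through the limit. From \eqref{eq:GenFunMellinMain1} with $e_0=e_1=1$ one gets $\bar c_1=k\int_0^1 dt\,t^k\frac{t f(t)}{(t-c)\sqrt{t(t-a)}}$, and since $t/(t-c)\to1$ boundedly (note $t-c=t+|c|\ge t$) as $c\to0^-$, dominated convergence yields $\bar c_1\to k\int_0^1 dt\,t^k\frac{f(t)}{\sqrt{t(t-a)}}=c_1$; the integrability at $t=0$ is ensured by $k>0$, where the integrand behaves like $t^{k-1/2}$. From \eqref{eq:GenFunMellinMain2} one has $\bar c_2=-ac\,(k+\frac12)\,I(c)$ with $I(c)=\int_0^1 dt\,t^k\frac{f(t)}{(t-c)\sqrt{t(t-a)}}$, and I claim the explicit prefactor $c$ forces $\bar c_2\to0$: for $k>\frac12$ the integral $I(c)$ remains bounded, while for $0<k\le\frac12$ the rescaling $t=|c|s$ shows $I(c)=O(|c|^{k-1/2})$, so that $\bar c_2=O(|c|^{k+1/2})\to0$ in every case.

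Finally I would pass the limit through the remaining $c$-dependence. The prefactor $1/(1-cx)\to1$, and the outer integral runs over $[0,x]$ with $x<1$, hence stays bounded away from the endpoint singularity at $t=1$; the three summands then converge term by term, the middle and last being independent of $c$ altogether, and the contribution $\bar c_2\int_0^x dt\,t^k/\sqrt{1-at}$ tending to $0$ because its integral is finite. Collecting the limits produces exactly the asserted identity together with $c_1=k\int_0^1 dt\,t^k f(t)/\sqrt{t(t-a)}$.

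I expect the only genuinely delicate point to be the vanishing of $\bar c_2$: for small $k$ the integral $I(c)$ itself diverges as $c\to0$, and the claim survives only because the explicit factor $c$ decays faster than $I(c)$ grows, which the substitution $t=|c|s$ makes quantitative. Everything else is bookkeeping of exponents and routine dominated-convergence estimates. As an independent cross-check---and an alternative proof bypassing the limit entirely---one may verify the Corollary directly by showing that both sides satisfy the same first-order linear initial value problem in $x$ with $y(0)=0$, in the spirit of the proof of \eqref{eq:GenFunBinomialSum}.
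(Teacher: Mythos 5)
Your proposal is correct and follows exactly the route the paper indicates for this Corollary, namely specializing Theorem~\ref{thm:GenFunMellin} to $e_0=e_1=1$ and letting $c\to0^-$; the paper only states this specialization without detail, and your careful treatment of the limit (in particular the vanishing of $\bar c_2$ via the rescaling $t=|c|s$ for small $k$) correctly fills in the only delicate step.
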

\begin{cor}\label{cor:GenFunMellinC}
 For $c<0$, $k>-1$, $0<x<1$ we have that
 \begin{eqnarray}
  \int_0^1dt\frac{(tx)^k}{1-tx}\frac{tf(t)}{t-c} &=& \frac{1}{x(1-cx)}\int_0^xdt\Bigg(c_1t^k+f(1)\frac{t^{k+1}}{1-t}-\int_0^1du\frac{(tu)^{k+1}}{1-tu}uf^\prime(u)\Bigg)\\
  c_1 &=& (k+1)\int_0^1dt\,t^k\frac{tf(t)}{t-c}.
 \end{eqnarray}
\end{cor}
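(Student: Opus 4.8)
The plan is to derive the Corollary directly from Theorem~\ref{thm:GenFunMellin} by the specialization announced in the text, namely by setting $e_0=e_1=0$ in \eqref{eq:GenFunMellinMain}--\eqref{eq:GenFunMellinMain2} and then letting $a\to0^-$. No independent computation is needed; the whole argument is bookkeeping plus one limit justification.

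First I would substitute $e_0=e_1=0$ and simplify each piece. On the left-hand side of \eqref{eq:GenFunMellinMain} the root factors collapse and one is left with $\int_0^1 dt\,\frac{(tx)^k}{1-tx}\frac{t f(t)}{t-c}$, which does not involve $a$ at all, so only the right-hand side must be tracked through the limit. On the right-hand side the prefactor becomes $x^{(e_0+e_1)/2-1}=x^{-1}$ with $(1-ax)^{e_1/2}=1$; the inner weight $t^{1-(e_0+e_1)/2}/(1-at)^{1-e_1/2}$ becomes $t/(1-at)$; the boundary factor $(1-a)^{1-e_1/2}$ becomes $1-a$; and the derivative weight $u^{1-e_0/2}(u-a)^{1-e_1/2}$ becomes $u(u-a)$. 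Moreover \eqref{eq:GenFunMellinMain1} reduces to $\bar c_1=(k+1)\int_0^1 dt\, t^k\frac{t f(t)}{t-c}$, which is exactly the constant $c_1$ of the Corollary and is already $a$-free, while \eqref{eq:GenFunMellinMain2} carries an explicit factor $-ac$.

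Next I would pass to the limit $a\to0^-$ termwise. The constant $\bar c_2$ vanishes because of its prefactor $a$; the weights satisfy $t/(1-at)\to t$ and $1-a\to1$; and $u(u-a)\to u^2$. Absorbing the surviving outer weight $t$ into the bracket converts $\bar c_1 t^{k-1}\mapsto c_1 t^k$, $f(1)\tfrac{t^k}{1-t}\mapsto f(1)\tfrac{t^{k+1}}{1-t}$, and $\int_0^1 du\,\frac{(tu)^k}{1-tu}u^2 f'(u)\mapsto \int_0^1 du\,\frac{(tu)^{k+1}}{1-tu}u f'(u)$, using $t\,(tu)^k u^2=(tu)^{k+1}u$. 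Together with the prefactor $x^{-1}/(1-cx)=1/(x(1-cx))$ this is precisely the right-hand side of the Corollary, and the two inequalities $\tfrac{e_0}{2}<k+1$, $\tfrac{e_0+e_1}{2}<k+1$ both degenerate to the stated hypothesis $k>-1$.

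The only step requiring genuine care is the interchange of $\lim_{a\to0}$ with the integrals, since Theorem~\ref{thm:GenFunMellin} is stated for $a<0$ strictly. Here I would observe that for $0<x<1$, $t\in[0,x]$ and $u\in[0,1]$ one has $tu\le x<1$ and $t\le x<1$, so neither $\tfrac{1}{1-tu}$ nor $\tfrac{1}{1-t}$ becomes singular, while $t/(1-at)$ and $u(u-a)$ converge uniformly as $a\to0^-$. Since $k>-1$ makes $t^k$ and $u^k$ integrable at the origin and $f$ is bounded with integrable derivative, one obtains an $a$-uniform integrable majorant on both the outer $\int_0^x$ and the inner $\int_0^1$, so dominated convergence applies. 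This softening of the hypothesis $a<0$ to its boundary value is the main (and essentially routine) obstacle.
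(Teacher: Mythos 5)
Your proposal is correct and follows exactly the route the paper indicates: Corollary~\ref{cor:GenFunMellinC} is obtained from Theorem~\ref{thm:GenFunMellin} by setting $e_0=e_1=0$ in \eqref{eq:GenFunMellinMain}--\eqref{eq:GenFunMellinMain2} and letting $a\to0$, with the $\bar c_2$ term dying through its explicit factor $-ac$. The paper merely states this specialization without detail, so your careful bookkeeping and the dominated-convergence justification of the limit $a\to0^-$ are a faithful (and slightly more complete) rendering of the intended argument.
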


\begin{example}
 Consider again the generating function \eqref{eq:GenFunExample1}, this time the
coefficients are given in terms of a Mellin transform:
 \begin{equation}
  \sum_{k=1}^\infty\frac{x^n}{4^n}\Mvec\left[\frac{\HA_{\sf
0,w_3}(t)}{t}+\frac{\zeta_2-\HA_{\sf w_3,w_3}(t)}{t\sqrt{1-t}}\right](n).
 \end{equation}
 We treat the two terms inside the Mellin transform separately. By
Eq.~\eqref{eq:GenFunMellin} the first gives rise to the following integral, to
which we apply the main formula
\eqref{eq:GenFunMellinMain}-\eqref{eq:GenFunMellinMain2} with $k=1$, $e_0=2$,
$e_1=0$ and $a\to0$, $c\to0$. This yields
 \[
  \int_0^1dt\frac{t\frac{x}{4}}{1-t\frac{x}{4}}\frac{\HA_{\sf 0,w_3}(t)}{t} =
\int_0^{x/4}dt\left(\int_0^1du\,\HA_{\sf 0,w_3}(u)+\HA_{\sf
0,w_3}(1)\frac{t}{1-t}+\int_0^1du\frac{tu}{1-tu}\HA_{\sf w_3}(u)\right).
 \]
 The two constants appearing have the values $\int_0^1dx\,\HA_{\sf 0,w_3}(x)=2$
and $\HA_{\sf 0,w_3}(1)=0$. The remaining inner integral allows one to apply our
main formula again. More precisely, we use Corollary~\ref{cor:GenFunMellinC}
with $k=1$ and $c\to0$ to obtain
 \[
  \int_0^1dt\frac{tx}{1-tx}\HA_{\sf w_3}(t) = \frac{1}{x}\int_0^xdt\left(2t\int_0^1du\,u\HA_{\sf w_3}(u)+\HA_{\sf w_3}(1)\frac{t^2}{1-t}+\int_0^1du\frac{(tu)^2}{1-tu}\frac{1}{\sqrt{1-u}}\right).
 \]
 Again, we evaluate the constants, $\int_0^1du\,u\HA_{\sf w_3}(u)=\frac{2}{3}$ and $\HA_{\sf w_3}(1)=0$, and apply Theorem~\ref{thm:GenFunMellin} to the inner integral. With $k=2$, $e_0=0$, $e_1=1$, and $c\to0$ analytic continuation to $a=1$, avoiding $a\ge0$, yields
 \[
  \int_0^1dt\frac{(tx)^2}{1-tx}\frac{1}{\sqrt{1-t}} = \frac{1}{\sqrt{x}\sqrt{1-x}}\int_0^xdt\frac{\sqrt{t}}{\sqrt{1-t}}\frac{5}{2}t\int_0^1du\frac{u^2}{\sqrt{1-u}}.
 \]
 Finally, we just need to compute $\int_0^1dx\frac{x^2}{\sqrt{1-x}}=\frac{16}{15}$. Altogether, we obtain the following 
 representation of the integral
 \begin{eqnarray*}
  \int_0^1dt\frac{t\frac{x}{4}}{1-t\frac{x}{4}}\frac{\HA_{\sf 0,w_3}(t)}{t} &=& \int_0^{x/4}dt\left(2+\frac{1}{t}\int_0^tdu\left(\frac{4}{3}u+\frac{1}{\sqrt{u}\sqrt{1-u}}\int_0^udv\frac{\frac{8}{3}v^2}{\sqrt{v}\sqrt{1-v}}\right)\right)\\
  &=& \int_0^{x/4}dt\frac{1}{t}\int_0^tdu\frac{1}{\sqrt{u}\sqrt{1-u}}\int_0^udv\frac{1}{\sqrt{v}\sqrt{1-v}}.
 \end{eqnarray*}
 Dealing with the second term of the Mellin transform we similarly apply
Theorem~\ref{thm:GenFunMellin} three times in order to arrive at the following
identities:
 \begin{eqnarray*}
  \int_0^1dt\frac{t\frac{x}{4}}{1-t\frac{x}{4}}\frac{\zeta_2-\HA_{\sf w_3,w_3}(t)}{t\sqrt{1-t}} &=& \frac{\sqrt{\frac{x}{4}}}{\sqrt{1-\frac{x}{4}}}\int_0^{x/4}dt\frac{1}{\sqrt{t}\sqrt{1-t}}\int_0^1du\frac{tu}{1-tu}\frac{\HA_{\sf w_3}(u)}{u}\\
  \int_0^1dt\frac{tx}{1-tx}\frac{\HA_{\sf w_3}(t)}{t} &=& \int_0^xdt\left(2+\int_0^1du\frac{tu}{1-tu}\frac{1}{\sqrt{1-u}}\right)\\
  \int_0^1dt\frac{tx}{1-tx}\frac{1}{\sqrt{1-t}} &=&
\frac{1}{\sqrt{x}\sqrt{1-x}}\int_0^xdt\frac{2t}{\sqrt{t}\sqrt{1-t}}.
 \end{eqnarray*}
 Combining them gives
 \[
  \int_0^1dt\frac{t\frac{x}{4}}{1-t\frac{x}{4}}\frac{\zeta_2-\HA_{\sf w_3,w_3}(t)}{t\sqrt{1-t}} = \frac{\sqrt{\frac{x}{4}}}{\sqrt{1-\frac{x}{4}}}\int_0^{x/4}dt\frac{1}{\sqrt{t}\sqrt{1-t}}\int_0^tdu\frac{1}{\sqrt{u}\sqrt{1-u}}\int_0^udv\frac{1}{\sqrt{v}\sqrt{1-v}}.
 \]
 Altogether, we obtain the following representation in terms of iterated
integrals in agreement with Eq.~\eqref{eq:GenFunExample1Result}:
 \begin{equation}
  \sum_{k=1}^\infty\frac{x^n}{4^n}\Mvec\left[\frac{\HA_{\sf
0,w_3}(t)}{t}+\frac{\zeta_2-\HA_{\sf w_3,w_3}(t)}{t\sqrt{1-t}}\right](n) =
\mathrm{H}_{\sf
0,w_3,w_3}(\tfrac{x}{4})+\frac{\sqrt{x}}{\sqrt{4-x}}\mathrm{H}_{\sf
w_3,w_3,w_3}(\tfrac{x}{4}).
 \end{equation}
\end{example}

A change of indices on the expense of a different argument is possible in case of the iterated 
integrals, cf. \cite{Ablinger:2014yaa,Ablinger14}. This may sometimes be of advantage.  Let us consider the 
simple case of Eq.~(\ref{eq:GenFunInverseBinomial1}). 
The corresponding integral is
\begin{eqnarray}
\label{eq:I1}
I_1 = \int_0^x dt \frac{1}{\sqrt{t(4 - t)}}~.
\end{eqnarray}
One changes variables by setting
\begin{eqnarray}
y := \sqrt{t(4 - t)}
\label{eq:VARTR1}
\end{eqnarray}
and obtains
\begin{eqnarray}
I_1 = \int_0^{\sqrt{x(4-x)}} dy \frac{1}{\sqrt{4-y^2}} = 
\arcsin\left(\frac{\sqrt{x(4-x)}}{2}\right)~.
\end{eqnarray}
The letter $f_{\sf 19}$ occurring in (\ref{eq:I1}) now re-appears in the argument, i.e. moves to the argument of the 
iterated integral.
In case of the single infinite binomial and inverse binomial sums variable transforms of the kind
(\ref{eq:VARTR1}) lead to polylogarithmic expressions in a new variable. The corresponding variable transforms
depend on the set of iterated letters. To trade letters of the alphabet in iterated integrals against the 
argument of the integral has been of use in several physics examples, see e.g. 
Refs.~\cite{Davydychev:2003mv,Weinzierl:2004bn,Ablinger:2014yaa,Ablinger14}.  
\subsection{Generating Functions for Iterated Integrals}
\label{sec:7.3}

\vspace*{1mm}
\noindent
In various applications generating function representations of iterated integrals need to be calculated. They can 
be determined using the package {\tt HarmonicSums} by the following command~:
\begin{mma}
{
\In \text{\bf GLToS[GL[$\left\{(1 - \text{VarGL})^{\frac{1}{2}}\right\},x$]]}\\
\Out {\sum_{\text{o}_1=2}^{\infty } -\frac{2 x^{\text{o}_1} \prod_{\text{i}_1=1}^{\text{o}_1} \frac{-1+2 \text{i}_1}{2 \text{i}_1}}{\big(
        -3+2 \text{o}_1
\big)
\big(-1+2 \text{o}_1\big)}
+x}\\

\In \text{\bf GLToS[GL[$\left\{\text{VarGL}^{\frac{1}{2}},(4 - \text{VarGL})^{\frac{1}{2}}\right\},x$]]}\\
\Out {-\frac{1}{\sqrt{x}} 2048 
\sum_{\text{o}_1=6}^{\infty } \frac{2^{-4 \text{o}_1} x^{\text{o}_1} \
\binom{2 \text{o}_1}{\text{o}_1} \big(
        -1+\text{o}_1\big) \text{o}_1}{\big(
        1-2 \text{o}_1\big)^2 \big(
        -7+2 \text{o}_1
\big)
\big(-5+2 \text{o}_1
\big)
\big(-3+2 \text{o}_1\big)}
+\frac{4 x^{5/2}}{5}
-\frac{x^{7/2}}{28}
-\frac{x^{9/2}}{864}}\\
}
\end{mma}
\subsection{Asymptotic Expansions of Mellin Transforms}
\label{sec:7.4}

\vspace*{1mm}
\noindent
The asymptotic expansion of the Mellin transforms of iterated integrals, dealt with in the present paper, in the variable 
$n$
is of importance to perform the analytic continuation of these quantities, and therefore of the nested (inverse)  binomial 
sums, to the 
complex plane, see also Ref.~\cite{Ablinger:2014yaa,Blumlein:2009ta}. We note in particular that different of the 
individual sums occurring in the respective Feynman integrals do diverge exponentially as $n \rightarrow \infty$, but 
in their combination for a single or several Feynman diagrams regular results are obtained, with an asymptotic  growth of 
at most like a power of $\ln(n)$.
This property has to be checked in all physical applications requesting to perform the asymptotic expansion of the
respective nested binomial sums or the associated Mellin transforms of the iterated integrals . 
The package {\tt HarmonicSums} allows to compute these asymptotic expansions using the command~:
$\texttt{GLExpansion}[\text{GL}[a,x],x,n,ord]$. It computes the asymptotic expansion of 
$\Mvec\left[{\text{GL}[a,x]}\right](n)$ 
in $n$ up to order $ord.$
\begin{mma}
{
\In \text{\bf GLExpansion[GL[$\left\{(1 + \text{VarGL})^{\frac{1}{2}}\right\},x$],$x$,$n$,5]}\\
\Out {\frac{4225}{96 \sqrt{2} n^5}-\frac{2}{3 n^5}-\frac{469}{24 \sqrt{2} \
n^4}+\frac{2}{3 n^4}+\frac{55}{6 \sqrt{2} n^3}-\frac{2}{3 \
n^3}-\frac{7 \sqrt{2}}{3 n^2}+\frac{2}{3 n^2}+\frac{4 \sqrt{2}}{3 n}-\
\frac{2}{3 n}}\\
\In \text{\bf GLExpansion[GL[$\left\{\frac{(1 - \text{VarGL})^{\frac{1}{2}}}{1+\text{VarGL}}\right\},x$],$x$,$n$,5]}\\
\Out {\big(
        \frac{1}{n}-\frac{1}{n^2}+\frac{1}{n^3}-\frac{1}{n^4}+\frac{1}\
{n^5}\big) \text{GL}\big[
        \big\{\frac{\sqrt{1-\text{VarGL}}}{\text{VarGL}+1}\big\},1\big]
+\big(
        -\frac{1149}{512} \big(
                \frac{1}{n}\big)^{9/2}+\frac{29}{32} \big(
                \frac{1}{n}\big)^{7/2}-\frac{1}{4} \big(
                \frac{1}{n}\big)^{5/2}\big) \sqrt{\pi }}\\
}
\end{mma}
\section{Conclusions}
\label{sec:8}

\vspace*{1mm}
\noindent
Nested finite (inverse) binomial sums, weighted by (generalized, cyclotomic)
harmonic sums emerge in
Feynman diagram calculations containing massive lines starting with 2-loop
order. These sums are
related by the Mellin transform over the interval $[0,1]$ to iterated integrals
containing a
larger alphabet of root-valued letters. We have worked out algorithms to
transform given sums into
Mellin transforms of the associated integrals and vice versa. Both the
iterated integrals $\HA_{\sf \vec{a}}$
at argument $x=0$ and the nested sums in the limit $N \rightarrow \infty$ define
special constants
which enlarge the sets of constants associated with the harmonic, cyclotomic,
and generalized
harmonic sums and polylogarithms. At the side of the iterated integrals the
shuffle relations
hold.\footnote{Due to the emerging hypergeometric weights at the side of the
sums, corresponding
quasi-shuffle relations are more involved and will be dealt with elsewhere.}

As has been shown, the Mellin representation of the (inverse) binomial sums can
be found based on
the consecutive convolution of a few building blocks only. Moreover we proved a
series of theorems for
classes of functions with general parameters. We gave explicit representations
for the nested
binomial sums emerging in classes of massive 3-loop integrals containing local
operator
insertions, cf. \cite{Ablinger:2014yaa}. Using Mellin transforms of $D$-finite
functions we also provided
means in transforming iterated integrals of root-valued letters into their
associated nested sums.
With the help of the given theorems wider classes of sums and iterated integrals
emerging in loop calculations than
the presented examples can be dealt with. The knowledge of the Mellin transforms
is instrumental for
the asymptotic expansion of the nested sums, and therefore for their analytic
continuation.

Using generating functions, we also addressed the case of infinite
(inverse) binomial sums. The calculation of Feynman diagrams of growing
complexity implied by higher loop orders,
more scales and an increasing number of legs leads to a growing variety of
special mathematical functions and
special numbers forming the basis of these physical quantities. They can be
found and classified in a constructive way
in computing the diagrams with systematic algorithms in difference and
differential fields by intense and
large scale applications of computer algebra.

\newpage
\appendix
\section{Constants and their Integral Representations}
\label{sec:A}
In this Appendix we list a series of constants evaluating the iterated integrals $\HA_{\sf \vec{a}}$  at the argument $x = 0$.
We set $\HA_{\sf \vec{a}}(0):=\lim\limits_{\ep\to0}\tfrac{1}{2}\left(\HA_{\sf \vec{a}}(i\ep)+\HA_{\sf \vec{a}}(-i\ep)\right)$ to 
avoid poles and branch cuts on the path of integration. Previously known special numbers as the multiple zeta values, cyclotomic
harmonic sums and generalized harmonic sums for $N \rightarrow \infty$ \cite{Blumlein:2009cf,Ablinger:2011te,Ablinger:2013cf}
express part of  these constants. We denote the Catalan constant by $C$ and
zeta-values by $\zeta_k:=\zeta(k)$. 
\begin{eqnarray}
\HA_{\sf w_2,w_1}(0) &=& 4C
\\
\HA_{\sf w_2,w_2,w_1}(0) &=& \frac{\pi}{2}\zeta_2
\\
\HA_{\sf w_1,1}(0) &=& -2\pi\ln(2)
\\
\HA_{\sf w_1,1,1}(0) &=& \pi\left(2\ln(2)^2+\zeta_2\right)
\\
\HA_{\sf 1,w_1}(0) &=& 2\pi\ln(2)
\\
\HA_{\sf 1,w_1,0}(0) &=& \pi\left(4\ln(2)^2-\zeta_2\right)
\\
\HA_{\sf \frac{1}{4},w_1}(0) &=& 0
\\
\HA_{\sf \frac{1}{4},w_1,1}(0) &=& -\frac{2\pi}{3}\zeta_2
\\
\HA_{\sf \frac{1}{4},w_1,1,1}(0) &=& -\pi\left(\frac{2}{9}\zeta_3-\frac{4}{3}\ln(2)\zeta_2+\frac{2\pi}{9\sqrt{3}}\left(4\zeta_2-\psi^\prime(\tfrac{1}{3})\right)\right)
\\
\HA_{\sf \frac{1}{4},w_1,w_1,w_1}(0) &=& -\frac{2\pi}{3}\zeta_3
\\
\HA_{\sf -\frac{1}{4},w_1}(0) &=& -2\pi\ln\left(\tfrac{\sqrt{5}-1}{2}\right)
\\
\HA_{\sf -\frac{1}{4},0,0,w_1}(0) &=& 2\pi\left(2\Li_3\left(\tfrac{\sqrt{5}-1}{2}\right)-\frac{6}{5}\zeta_3-\frac{6}{5}\ln\left(\tfrac{\sqrt{5}-1}{2}\right)\zeta_2+\frac{2}{3}\ln\left(\tfrac{\sqrt{5}-1}{2}\right)^3\right)
\\
\HA_{\sf \frac{1}{4},0,w_1}(0) &=& -\pi\zeta_2
\\
\HA_{\sf -\frac{1}{4},w_1,w_1,w_1}(0) &=& -2\pi\left(\frac{4}{5}\zeta_3+\frac{9}{5}\ln\left(\tfrac{\sqrt{5}-1}{2}\right)\zeta_2-\frac{2}{3}\ln\left(\tfrac{\sqrt{5}-1}{2}\right)^3\right)
\\
\HA_{\sf 1,w_3}(0) &=& 3\zeta_2
\\
\HA_{\sf 4,w_3}(0) &=& \frac{\zeta_2}{3}
\\
\HA_{\sf 8,w_3}(0) &=& 2\arccot(\sqrt{7})^2
\\
\HA_{\sf 8,0,0,w_3}(0) &=& \frac{1}{4}\mbox{}_5F_4\left({1,1,1,1,1 \atop \frac{3}{2},2,2,2} \middle| \frac{1}{8} \right)
\\
\HA_{\sf -2,1,0}(0) &=& \frac{1}{3}\ln(\tfrac{3}{2})^3+\ln(\tfrac{3}{2})\zeta_2+\ln(\tfrac{3}{2})\Li_2(-\tfrac{1}{2})-\Li_3(-\tfrac{1}{2})-2\Li_3(\tfrac{1}{3})
\\
\HA_{\sf w_{27},w_{19}}(0) &=& -\frac{1}{3}\ln(2)^2+\frac{4}{9}\zeta_2-\frac{2}{3}\Li_2(-\tfrac{1}{2})
\\
\HA_{\sf -\frac{1}{2},0}(0) &=& \frac{1}{2}\ln(2)^2+\zeta_2+\Li_2(-\tfrac{1}{2})
\\
\HA_{\sf \frac{1}{2},0,0,0}(0) &=& \frac{1}{24}\ln(2)^4-\ln(2)^2\zeta_2-\frac{4}{5}\zeta_2^2+\Li_4(\tfrac{1}{2})
\\
\HA_{\sf w_{29},w_8}(0) &=& \frac{4}{9}\left(\psi^\prime(\tfrac{1}{3})-4\zeta_2\right)
\\
\HA_{\sf -\frac{1}{2},w_{28}}(0) &=& 
\sqrt{2}\left(\frac{2}{3}\zeta_2-2\Li_2(-\tfrac{1}{2})-\ln(2)^2\right)~.
\end{eqnarray}
Here $\psi(z)$ denotes the digamma function. We note that one may express the  value of the generalized 
hypergeometric function, occurring above and in later cases, by
\begin{eqnarray}
_5F_4\left({1,1,1,1,\frac{3}{2} \atop 2,2,2,2} \middle|-4\right) &=& 2 
\cdot {_4F_3}\left({\frac{1}{2},\frac{1}{2},\frac{1}{2},\frac{1}{2} \atop 
\frac{3}{2},\frac{3}{2},\frac{3}{2}} \middle| -\frac{1}{4}\right) - \zeta_3
\\
_4F_3\left({\frac{1}{2},\frac{1}{2},\frac{1}{2},\frac{1}{2} 
\atop \frac{3}{2},\frac{3}{2},\frac{3}{2}} \middle| -\frac{1}{4}\right) 
&=&
\frac{1}{8} \int_0^1~dt~\left[1-\frac{1}{\sqrt{1+4t}}\right]
\frac{\ln^2(t)}{t} + 
\frac{1}{2} \zeta_3~.
\end{eqnarray}
For an evaluation of the latter integral in terms of polylogarithms see Ref.~\cite{Ablinger:2014yaa}, Eq.~(5.35).
In the following we give integral representations of many constants suitable for numerical evaluation to high precision.
First we list those constants for which we have representations by integrals of depth {\sf d = 1}.
\begin{eqnarray}
 \lefteqn{\Mvec\left[\frac{x\HA_{\sf w_8,w_8}(x)}{x-1}\right](0)\ =\ -\HA_{\sf 1,w_8,w_8}(0)\ =\ }\nonumber\\
 &&8\int_0^{1/4}dx\frac{\arctanh(\sqrt{1-4x})^2-\frac{2}{3}\zeta_2}{1-x}-8\int_{1/4}^1dx\frac{\left(\frac{\pi}{3}-\arctan(\sqrt{4x-1})\right)^2}{1-x}\\
 \lefteqn{\Mvec\left[\frac{x\HA_{\sf w_{14},w_{14}}(x)}{x+1}\right](0)\ =\ -\HA_{\sf -1,w_{14},w_{14}}(0)-4\ln\left(\tfrac{\sqrt{5}-1}{2}\right)\ =\ }\nonumber\\
 &&-8\int_0^1dx\frac{\left(\arccoth(\sqrt{4x+1})+\ln\left(\tfrac{\sqrt{5}-1}{2}\right)\right)^2}{x+1}-4\ln\left(\tfrac{\sqrt{5}-1}{2}\right)\\
 \lefteqn{\Mvec\left[\frac{\HA_{\sf -\frac{1}{2},0,0,0}(x)}{x+1}\right](0)\ =\ \HA_{\sf -1,-\frac{1}{2},0,0,0}(0)\ =\ 
 -\frac{1}{6}\int_0^1dx\frac{\ln(x+1)\ln(x)^3}{x+\frac{1}{2}}}\\
 \lefteqn{\HA_{\sf \frac{1}{4},0,w_1,1}(0)\ =\ }\nonumber\\
 &&-\int_0^{1/4}dx\frac{\text{Li}_2(4x)\ln(1-x)}{\sqrt{x(1-x)}}+\int_0^{3/4}dx\frac{\left(\ln(4(1-x))\ln(3-4x)+\text{Li}_2(4x-3)
-\zeta_2\right)\ln(x)}{\sqrt{x(1-x)}}\nonumber\\
\\
 \lefteqn{\HA_{\sf \frac{1}{4},w_2,w_2,w_1}(0)\ =\ }\nonumber\\
 &&4\int_0^{1/4}\frac{\ln(1-4x)}{\sqrt{x(x+1)}}\int_{\sqrt{x}}^1dy\frac{\arccos(y)}{\sqrt{1+y^2}}+\int_{1/4}^1dx\frac{\ln(4x-1)}{\sqrt{x(x+1)}}\int_x^1du\frac{\arccos(2u-1)}{\sqrt{u(1+u)}}\\
 \lefteqn{\HA_{\sf -\frac{1}{4},w_2,w_2,w_1}(0)\ =\ 
4\int_0^1\frac{\ln(4x+1)}{\sqrt{x(x+1)}}\int_{\sqrt{x}}^1dy\frac{\arccos(y)}{\sqrt{1+y^2}}}
\end{eqnarray}
\begin{eqnarray}
 S_{1,3}\left(-\frac{1}{2},-\frac{1}{2};\infty\right) &=& 8\text{Li}_4(\tfrac{1}{2})
+8\text{Li}_4(-\tfrac{1}{2})+\int_0^1dt\frac{\text{Li}_3(\tfrac{t}{4})}{t+2}\nonumber\\
 &=& 8\text{Li}_4(-\tfrac{1}{2})+\frac{19}{3}\text{Li}_4(\tfrac{1}{2})+3\text{Li}_4(\tfrac{2}{3})+\frac{3}{2}
\text{Li}_4(\tfrac{3}{4})+\ln(2)\text{Li}_3(-\tfrac{1}{2})\nonumber\\
 &&-\ln(3)\text{Li}_3(-\tfrac{1}{2})-\text{Li}_2(-\tfrac{1}{2})^2-\zeta_2\text{Li}_2(-\tfrac{1}{2})+\ln(2)^2
\text{Li}_2(-\tfrac{1}{2})\nonumber\\
 &&-\frac{79}{60}\zeta_2^2-\frac{10}{3}\ln(2)^2\zeta_2+5\ln(2)\ln(3)\zeta_2-\frac{3}{2}\ln(3)^2\zeta_2
+\frac{77}{36}\ln(2)^4\nonumber\\
 &&-\frac{19}{6}\ln(2)^3\ln(3)+\frac{3}{2}\ln(2)^2\ln(3)^2-\frac{\ln(2)\ln(3)^3}{2}+\frac{\ln(3)^4}{8}
\\
 \Mvec\left[\frac{\sqrt{x}\HA_{\sf w_{13}}(x)}{(x+1)\sqrt{8-x}}\right](0) &=& -\frac{1}{2\sqrt{3}}\int_0^1dx\frac{\sqrt{x}\left(\arccoth\left(\frac{x+4}{\sqrt{3}\sqrt{x(8-x)}}\right)+\ln\left(\frac{5-\sqrt{21}}{2}\right)\right)}{(x+1)\sqrt{8-x}}\\
 \Mvec\left[\frac{x\HA_{\sf w_{14}}(x)}{(x+1)\sqrt{x+\frac{1}{4}}}\right](0) &=& 4\int_0^1dx\frac{x\left(\arccoth(\sqrt{4x+1})+\ln\left(\frac{\sqrt{5}-1}{2}\right)\right)}{(x+1)\sqrt{x+\frac{1}{4}}}\\
 \Mvec\left[\frac{\sqrt{x}\HA_{\sf w_{25}}(x)}{(x+1)\sqrt{8-x}}\right](0) &=& \int_0^1dx\frac{\sqrt{x}\left(\arccosh\left(3-\frac{x}{2}\right)+\ln\left(\frac{5-\sqrt{21}}{2}\right)\right)}{(x+1)\sqrt{8-x}}\\
 \Mvec\left[\frac{\sqrt{x}\HA_{\sf w_{12},0}(x)}{(x+1)\sqrt{8-x}}\right](0) &=&  -\int_0^1dx\frac{\ln(x)\left(\arccos\left(1-\frac{x}{4}\right)-\frac{1}{3}\arccos\left(\frac{4-5x}{4(x+1)}\right)\right)}{\sqrt{x(8-x)}}\\
 \Mvec\left[\frac{\sqrt{x}\HA_{\sf w_{12},2}(x)}{(x+1)\sqrt{8-x}}\right](0) &=&  \int_0^1dx\frac{\ln(2-x)\left(\arccos\left(1-\frac{x}{4}\right)-\frac{1}{3}\arccos\left(\frac{4-5x}{4(x+1)}\right)\right)}{\sqrt{x(8-x)}}\\
 \Mvec\left[\frac{\sqrt{x}\HA_{\sf w_{25},w_{19}}(x)}{(x+1)\sqrt{8-x}}\right](0) &=&  
\int_0^1dx\frac{\left(\arccos\left(1-\frac{x}{4}\right)
-\frac{1}{3}\arccos\left(\frac{4-5x}{4(x+1)}\right)\right)\left(\arccos\left(\frac{x}{2}-1\right)
-\frac{2\pi}{3}\right)}{\sqrt{(4-x)(8-x)}}
\nonumber\\
\\
 \Mvec\left[\frac{\HA_{\sf w_1,1,1}(x)}{x-\frac{1}{4}}\right](0) &=& \int_0^1dx\frac{\ln\left((3-2x)|2x-1|\right)\ln(x(2-x))^2}{\sqrt{x(2-x)}}\\
 \Mvec\left[\frac{\sqrt{x}\HA_{\sf w_{13},1,0}(x)}{(x+1)\sqrt{8-x}}\right](0) &=&  \int_0^1dx\frac{\left(\arccos\left(1-\frac{x}{4}\right)-\frac{1}{3}\arccos\left(\frac{4-5x}{4(x+1)}\right)\right)\text{Li}_2(1-x)}{(2-x)\sqrt{x(8-x)}}\\
 \Mvec\left[\frac{x\HA_{\sf w_{14},0,0}(x)}{(x+1)\sqrt{x+\frac{1}{4}}}\right](0) &=& 8\ _4F_3\left(\frac{1}{2},\frac{1}{2},\frac{1}{2},\frac{1}{2} \atop \frac{3}{2},\frac{3}{2},\frac{3}{2} \middle| -\frac{1}{4}\right)-4\zeta_3\nonumber\\
 &&-\frac{1}{\sqrt{3}}\int_0^1dx\frac{\ln(x)^2\left(\arccos\left(\frac{1-2x}{2(x+1)}\right)
-\frac{\pi}{3}\right)}{x\sqrt{x+\frac{1}{4}}}\\
 \Mvec\left[\frac{x\HA_{\sf w_{14},1,0}(x)}{(x+1)\sqrt{x+\frac{1}{4}}}\right](0) &=& \int_0^1dx\frac{\left(\sqrt{4x+1}
-1-\frac{2}{\sqrt{3}}\left(\arccos\left(\frac{1-2x}{2(x+1)}\right)
-\frac{\pi}{3}\right)\right)\text{Li}_2(1-x)}{x\sqrt{x+\frac{1}{4}}}
\nonumber\\
\end{eqnarray}
\begin{eqnarray}
 \lefteqn{\Mvec\left[\frac{x\HA_{\sf w_{14},-1,0}(x)}{(x+1)\sqrt{x+\frac{1}{4}}}\right](0)\ =}\nonumber\\
 &&\int_0^1dx\frac{\left(\sqrt{4x+1}-1-\frac{2}{\sqrt{3}}\left(\arccos\left(\frac{1-2x}{2(x+1)}\right)-\frac{\pi}{3}\right)\right)\left(\text{Li}_2(-x)+\ln(x)\ln(1+x)+\frac{\zeta_2}{2}\right)}{x\sqrt{x+\frac{1}{4}}}\\
 \lefteqn{\Mvec\left[\frac{\sqrt{x}\HA_{\sf w_{18},-1,0}(x)}{(x-1)\sqrt{8+x}}\right](0)\ =}\nonumber\\
 &&\int_0^1dx\frac{\left(\arccosh\left(1+\frac{x}{4}\right)-\frac{1}{3}\arccosh\left(\frac{4+5x}{4(1-x)}\right)\right)\left(\text{Li}_2(-x)+\ln(x)\ln(1+x)+\frac{\zeta_2}{2}\right)}{(2+x)\sqrt{x(8+x)}}\\
 \lefteqn{\Mvec\left[\frac{x\HA_{\sf w_8,0,1}(x)}{(x-1)\sqrt{x-\frac{1}{4}}}\right](0)\ =\ \int_{\frac{1}{4}}^1dx\frac{\left(\sqrt{4x-1}-\frac{2}{\sqrt{3}}\arccosh\left(\frac{2x+1}{2(1-x)}\right)\right)\left(\text{Li}_2(x)-\zeta_2\right)}{x\sqrt{x-\frac{1}{4}}}}\nonumber\\
 &&+4\int_0^1dx\frac{\left(\sqrt{x(2-x)}-1-\frac{2}{\sqrt{3}}\left(\arccos\left(\frac{x^2-2x+3}{(3-x)(x+1)}\right)-\frac{\pi}{3}\right)\right)\left(\text{Li}_2\left(\frac{(1-x)^2}{4}\right)-\zeta_2\right)}{(1-x)\sqrt{x(2-x)}}\\
 \lefteqn{\Mvec\left[\frac{x\HA_{\sf w_8,1,0}(x)}{(x-1)\sqrt{x-\frac{1}{4}}}\right](0)\ =\ \int_{\frac{1}{4}}^1dx\frac{\left(\sqrt{4x-1}-\frac{2}{\sqrt{3}}\arccosh\left(\frac{2x+1}{2(1-x)}\right)\right)\text{Li}_2(1-x)}{x\sqrt{x-\frac{1}{4}}}}\nonumber\\
 &&+4\int_0^1dx\frac{\left(\sqrt{x(2-x)}-1-\frac{2}{\sqrt{3}}\left(\arccos\left(\frac{x^2-2x+3}{(3-x)(x+1)}\right)-\frac{\pi}{3}\right)\right)\text{Li}_2\left(1-\frac{(1-x)^2}{4}\right)}{(1-x)\sqrt{x(2-x)}}\\
 \lefteqn{\Mvec\left[\frac{x\HA_{\sf w_8,1,1}(x)}{(x-1)\sqrt{x-\frac{1}{4}}}\right](0)\ =\ \frac{1}{2}\int_{\frac{1}{4}}^1dx\frac{\left(\sqrt{4x-1}-\frac{2}{\sqrt{3}}\arccosh\left(\frac{2x+1}{2(1-x)}\right)\right)\ln(1-x)^2}{x\sqrt{x-\frac{1}{4}}}}\nonumber\\
 &&+2\int_0^1dx\frac{\left(\sqrt{x(2-x)}-1-\frac{2}{\sqrt{3}}\left(\arccos\left(\frac{x^2-2x+3}{(3-x)(x+1)}\right)-\frac{\pi}{3}\right)\right)\ln\left(1-\frac{(1-x)^2}{4}\right)^2}{(1-x)\sqrt{x(2-x)}}\\
 \lefteqn{\Mvec\left[\frac{x\HA_{\sf w_8,-1,0}(x)}{(x-1)\sqrt{x-\frac{1}{4}}}\right](0)\ =\ 2\ln(2)\zeta_2-3\zeta_3-4\ln(2)\text{Li}_2(-\tfrac{1}{4})-4\text{Li}_3(-\tfrac{1}{4})}\nonumber\\
 &&-\frac{2}{\sqrt{3}}\int_{\frac{1}{4}}^1dx\frac{\arccosh\left(\frac{2x+1}{2(1-x)}\right)\left(\text{Li}_2(-x)+\ln(x)\ln(1+x)+\frac{\zeta_2}{2}\right)}{x\sqrt{x-\frac{1}{4}}}\nonumber\\
 &&+4\int_0^1dx\frac{\left(\sqrt{x(2-x)}-1\right)\left(\text{Li}_2\left(-\frac{(1-x)^2}{4}\right)+\ln\left(\frac{(1-x)^2}{4}\right)\ln\left(1+\frac{(1-x)^2}{4}\right)+\frac{\zeta_2}{2}\right)}{(1-x)\sqrt{x(2-x)}}\\
 &&-\frac{8}{\sqrt{3}}\int_0^1dx\frac{\left(\arccos\left(\frac{x^2-2x+3}{(3-x)(x+1)}\right)-\frac{\pi}{3}\right)\left(\text{Li}_2\left(-\frac{(1-x)^2}{4}\right)+\ln\left(\frac{(1-x)^2}{4}\right)\ln\left(1+\frac{(1-x)^2}{4}\right)+\frac{\zeta_2}{2}\right)}{(1-x)\sqrt{x(2-x)}}\nonumber\\
 \lefteqn{\Mvec\left[\frac{\sqrt{x}\HA_{\sf w_{25},w_{19},w_{19}}(x)}{(x+1)\sqrt{8-x}}\right](0)\ =}\nonumber\\
 &&\frac{1}{2}\int_0^1dx\frac{\left(\arccos\left(1-\frac{x}{4}\right)-\frac{1}{3}\arccos\left(\frac{4-5x}{4(x+1)}\right)\right)\left(\arccos\left(\frac{x}{2}-1\right)-\frac{2\pi}{3}\right)^2}{\sqrt{(4-x)(8-x)}}\\
 \lefteqn{\Mvec\left[\frac{\sqrt{x}\HA_{\sf w_{12},0,1,0}(x)}{(x+1)\sqrt{8-x}}\right](0)\ =}\nonumber\\
 &&\int_0^1dx\frac{\left(\arccos\left(1-\frac{x}{4}\right)-\frac{1}{3}\arccos\left(\frac{4-5x}{4(x+1)}\right)\right)\left(2(\text{Li}_3(x)-\zeta_3)-\ln(x)(\text{Li}_2(x)+\zeta_2)\right)}{\sqrt{x(8-x)}}\\
 \lefteqn{\Mvec\left[\frac{\sqrt{x}\HA_{\sf w_{17},0,-1,0}(x)}{(x-1)\sqrt{8+x}}\right](0)\ =}\nonumber\\
 &&\int_0^1dx\frac{\left(\arccosh\left(1+\frac{x}{4}\right)-\frac{1}{3}\arccosh\left(\frac{4+5x}{4(1-x)}\right)\right)\left(\ln(x)\left(\text{Li}_2(-x)-\frac{\zeta_2}{2}\right)-2\text{Li}_3(-x)-\frac{3}{2}\zeta_3\right)}{\sqrt{x(8+x)}}
\nonumber\\ &&
\\
 \lefteqn{\Mvec\left[\frac{\sqrt{x}\HA_{\sf w_{25},w_{19},w_{19},w_{19}}(x)}{(x+1)\sqrt{8-x}}\right](0)\ =}\nonumber\\
 &&\frac{1}{6}\int_0^1dx\frac{\left(\arccos\left(1-\frac{x}{4}\right)-\frac{1}{3}\arccos\left(\frac{4-5x}{4(x+1)}\right)\right)\left(\arccos\left(\frac{x}{2}-1\right)-\frac{2\pi}{3}\right)^3}{\sqrt{(4-x)(8-x)}}
\end{eqnarray}
Further constants are given by integrals of depth {\sf d = 2}.
\begin{eqnarray}
 \lefteqn{\Mvec\left[\frac{\sqrt{x}\HA_{\sf w_{21},w_{20},w_{19}}(x)}{(x-1)\sqrt{8+x}}\right](0)\ =}\nonumber\\
 &&2\int_0^1dx\frac{\arccosh\left(1+\frac{x}{4}\right)-\frac{1}{3}\arccosh\left(\frac{4+5x}{4(1-x)}\right)}{\sqrt{(4+x)(8+x)}}\int_{\sqrt{x}}^1du\frac{\arccos\left(\frac{u^2}{2}-1\right)-\frac{2\pi}{3}}{\sqrt{4+u^2}}\\
 \lefteqn{\Mvec\left[\frac{\sqrt{x}\HA_{\sf w_{21},w_{23},0}(x)}{(x-1)\sqrt{8+x}}\right](0)\ =}\nonumber\\
 &&-4\int_0^1dx\frac{\arccosh\left(1+\frac{x}{4}\right)-\frac{1}{3}\arccosh\left(\frac{4+5x}{4(1-x)}\right)}{\sqrt{(4+x)(8+x)}}\int_{\sqrt{x}}^1du\frac{\ln(u)}{(1+u^2)\sqrt{4+u^2}}\\
 \lefteqn{\Mvec\left[\frac{\sqrt{x}\HA_{\sf w_{25},w_{26},0}(x)}{(x+1)\sqrt{8-x}}\right](0)\ =}\nonumber\\
 &&-4\int_0^1dx\frac{\arccos\left(1-\frac{x}{4}\right)-\frac{1}{3}\arccos\left(\frac{4-5x}{4(x+1)}\right)}{\sqrt{(4-x)(8-x)}}\int_{\sqrt{x}}^1du\frac{\ln(u)}{(1-u^2)\sqrt{4-u^2}}\\
 \lefteqn{\Mvec\left[\frac{\sqrt{x}\HA_{\sf w_{12},2,1,0}(x)}{(x+1)\sqrt{8-x}}\right](0)\ =}\nonumber\\
 &&\int_0^1dx\frac{\arccos\left(1-\frac{x}{4}\right)-\frac{1}{3}\arccos\left(\frac{4-5x}{4(x+1)}\right)}{\sqrt{x(8-x)}}\int_x^1dt\frac{\text{Li}_2(1-t)}{2-t}\\
 \lefteqn{\Mvec\left[\frac{x\HA_{\sf w_{14},w_{14},0,0}(x)}{x+1}\right](0)\ =}\nonumber\\
 &&\frac{1}{6}\int_0^1dx\frac{x-\ln(x+1)}{x\sqrt{x+\frac{1}{4}}}\int_{\ln(x)^3}^0du\frac{1}{\sqrt{\exp(-(-u)^{1/3})+\frac{1}{4}}}\\
 \lefteqn{\Mvec\left[\frac{x\HA_{\sf w_{14},w_{14},1,0}(x)}{x+1}\right](0)\ =}\nonumber\\
 &&\int_0^1dx\frac{x-\ln(x+1)}{x\sqrt{x+\frac{1}{4}}}\int_{\ln(x)}^0du\frac{\text{Li}_2(1-e^u)}{\sqrt{e^u+\frac{1}{4}}}\\
 \lefteqn{\Mvec\left[\frac{x\HA_{\sf w_{14},w_{14},-1,0}(x)}{x+1}\right](0)\ =}\nonumber\\
 &&\int_0^1dx\frac{x-\ln(x+1)}{x\sqrt{x+\frac{1}{4}}}\int_{\ln(x)}^0du\frac{\text{Li}_2(-e^u)+u\ln(1+e^u)+\frac{\zeta_2}{2}}{\sqrt{e^u+\frac{1}{4}}}\\
 \lefteqn{\Mvec\left[\frac{\sqrt{x}\HA_{\sf w_{17},-2,-1,0}(x)}{(x-1)\sqrt{8+x}}\right](0)\ =}\nonumber\\
 &&\int_0^1dx\frac{\arccosh\left(1+\frac{x}{4}\right)-\frac{1}{3}\arccosh\left(\frac{4+5x}{4(1-x)}\right)}{\sqrt{x(8+x)}}\int_x^1dt\frac{\text{Li}_2(-t)+\ln(t)\ln(1+t)+\frac{\zeta_2}{2}}{2+t}\\
 \lefteqn{\Mvec\left[\frac{\sqrt{x}\HA_{\sf w_{21},w_{20},0,0}(x)}{(x-1)\sqrt{8+x}}\right](0)\ =}\nonumber\\
 &&4\int_0^1dx\frac{\arccosh\left(1+\frac{x}{4}\right)-\frac{1}{3}\arccosh\left(\frac{4+5x}{4(1-x)}\right)}{\sqrt{(4+x)(8+x)}}\int_{\sqrt{x}}^1du\frac{\ln(u)^2}{\sqrt{4+u^2}}\\
 \lefteqn{\Mvec\left[\frac{\sqrt{x}\HA_{\sf w_{21},w_{20},-1,0}(x)}{(x-1)\sqrt{8+x}}\right](0)\ =}\nonumber\\
&&\int_0^1dx\frac{\arccosh\left(1+\frac{x}{4}\right)-\frac{1}{3}\arccosh\left(\frac{4+5x}{4(1-x)}\right)}{\sqrt{(4+x)(8+x)}}\int_{\sqrt{x}}^1du\frac{2\text{Li}_2(-u^2)+4\ln(u)\ln(1+u^2)+\zeta_2}{\sqrt{4+u^2}}\nonumber\\
\\ 
 \lefteqn{\Mvec\left[\frac{\sqrt{x}\HA_{\sf w_{21},w_{20},w_{19},w_{19}}(x)}{(x-1)\sqrt{8+x}}\right](0)\ =}\nonumber\\
 &&\int_0^1dx\frac{\arccosh\left(1+\frac{x}{4}\right)-\frac{1}{3}\arccosh\left(\frac{4+5x}{4(1-x)}\right)}{\sqrt{(4+x)(8+x)}}\int_{\sqrt{x}}^1du\frac{\left(\arccos\left(\frac{u^2}{2}-1\right)-\frac{2\pi}{3}\right)^2}{\sqrt{4+u^2}}\\
 \lefteqn{\Mvec\left[\frac{\sqrt{x}\HA_{\sf w_{25},w_{19},0,0}(x)}{(x+1)\sqrt{8-x}}\right](0)\ =}\nonumber\\
 &&4\int_0^1dx\frac{\arccos\left(1-\frac{x}{4}\right)-\frac{1}{3}\arccos\left(\frac{4-5x}{4(x+1)}\right)}{\sqrt{(4-x)(8-x)}}\int_{\sqrt{x}}^1du\frac{\ln(u)^2}{\sqrt{4-u^2}}\\
 \lefteqn{\Mvec\left[\frac{\sqrt{x}\HA_{\sf w_{25},w_{19},1,0}(x)}{(x+1)\sqrt{8-x}}\right](0)\ =}\nonumber\\
 &&2\int_0^1dx\frac{\arccos\left(1-\frac{x}{4}\right)-\frac{1}{3}\arccos\left(\frac{4-5x}{4(x+1)}\right)}{\sqrt{(4-x)(8-x)}}\int_{\sqrt{x}}^1du\frac{\text{Li}_2(1-u^2)}{\sqrt{4-u^2}}\\
 \lefteqn{\Mvec\left[\frac{x\HA_{\sf w_8,w_8,0,1}(x)}{x-1}\right](0)\ =}\nonumber\\
 &&16\int_0^{\sqrt{3}/2}dy\frac{1+4y^2+4\ln\left(\frac{3}{4}-y^2\right)}{1+4y^2}\int_y^{\sqrt{3}/2}du\frac{\text{Li}_2\left(\frac{1}{4}+u^2\right)-\zeta_2}{1+4u^2}\\
 &&-2\int_0^{1/2}dy\frac{1-4y^2+4\ln\left(\frac{3}{4}+y^2\right)}{1-4y^2}\int_0^{2\arccosh\left(\frac{1+4y^2}{1-4y^2}\right)}du\left(\text{Li}_2\left(\frac{1}{2\left(1+\cosh(\frac{u}{2}\right)}\right)-\zeta_2\right)\nonumber\\
 \lefteqn{\Mvec\left[\frac{x\HA_{\sf w_8,w_8,1,0}(x)}{x-1}\right](0)\ =}\nonumber\\
 &&16\int_0^{\sqrt{3}/2}dy\frac{1+4y^2+4\ln\left(\frac{3}{4}-y^2\right)}{1+4y^2}\int_y^{\sqrt{3}/2}du\frac{\text{Li}_2\left(\frac{3}{4}-u^2\right)}{1+4u^2}\\
 &&-2\int_0^{1/2}dy\frac{1-4y^2+4\ln\left(\frac{3}{4}+y^2\right)}{1-4y^2}\int_0^{2\arccosh\left(\frac{1+4y^2}{1-4y^2}\right)}du\text{Li}_2\left(1-\frac{1}{2\left(1+\cosh(\frac{u}{2}\right)}\right)\nonumber\\
 \lefteqn{\Mvec\left[\frac{x\HA_{\sf w_8,w_8,1,1}(x)}{x-1}\right](0)\ =}\nonumber\\
 &&8\int_0^{\sqrt{3}/2}dy\frac{1+4y^2+4\ln\left(\frac{3}{4}-y^2\right)}{1+4y^2}\int_y^{\sqrt{3}/2}du\frac{\ln\left(\frac{3}{4}-u^2\right)^2}{1+4u^2}\\
 &&-8\int_0^{1/2}dy\frac{1-4y^2+4\ln\left(\frac{3}{4}+y^2\right)}{1-4y^2}\int_0^ydu\frac{\ln\left(\frac{3}{4}+u^2\right)^2}{1-4u^2}\nonumber\\
 \lefteqn{\Mvec\left[\frac{x\HA_{\sf w_8,w_8,-1,0}(x)}{x-1}\right](0)\ =}\nonumber\\
 &&16\int_0^{\sqrt{3}/2}dy\frac{1+4y^2+4\ln\left(\frac{3}{4}-y^2\right)}{1+4y^2}
\int_y^{\sqrt{3}/2}du\frac{\frac{\zeta_2}{2}+\ln(\frac{1}{4}+u^2)\ln(\frac{5}{4}+u^2)
+\text{Li}_2\left(-\frac{1}{4}-u^2\right)}{1+4u^2}\nonumber\\
 &&-2\int_0^{1/2}dy\frac{1-4y^2+4\ln\left(\frac{3}{4}+y^2\right)}{1-4y^2}\int_0^{2\arccosh\left(\frac{1+4y^2}{1-4y^2}\right)}du\Bigg(\frac{\zeta_2}{2}\nonumber\\
 &&+\ln\left(\frac{1}{2\left(1+\cosh(\frac{u}{2}\right)}\right)\ln\left(1+\frac{1}
{2\left(1+\cosh(\frac{u}{2}\right)}\right)+\text{Li}_2\left(-\frac{1}{2\left(
1+\cosh(\frac{u}{2}\right)}\right)\Bigg)
\end{eqnarray}
A few other constants have already been given in Ref.~\cite{Ablinger:2014yaa}.

\vspace{5mm}
\noindent
{\bf Acknowledgment.}~
We would like to thank A.~De Freitas, A. Hasselhuhn, and F.~Wi\ss{}brock for discussions. This work was supported 
in part by DFG Sonderforschungsbereich Transregio 9, Computergest\"utzte Theoretische Teilchenphysik, the Austrian 
Science Fund (FWF) grants P20347-N18 and SFB F50 (F5009-N15), the European Commission through contract PITN-GA-2010-264564 
({LHCPhenoNet}) and PITN-GA-2012-316704 ({HIGGSTOOLS}).

\end{document}